\let\csname equation*\endcsname\relax
\let\csname endequation*\endcsname\relax
\newtheorem{theorem}{Theorem}[section]
\newtheorem*{theorem*}{Theorem}
\newtheorem{lemma}[theorem]{Lemma}
\newtheorem{proposition}[theorem]{Proposition}
\newtheorem{corollary}[theorem]{Corollary}
\newtheorem{example}[theorem]{Example}
\theoremstyle{definition}
\newtheorem{definition}{Definition}[section]
\theoremstyle{remark}
\newtheorem*{remark*}{Remark}
\DeclareMathOperator*{\esssup}{ess\,sup}
\numberwithin{equation}{section}
\begin{document}

\title{On the generalized Langevin equation and the Mori projection operator technique}

\author{Christoph Widder\textsuperscript{1}, Johannes Zimmer\textsuperscript{2}, Tanja Schilling\textsuperscript{1}}
\address{\textsuperscript{1} Institute of Physics, Albert-Ludwigs-Universität Freiburg, Hermann-Herder-Straße 3, 79104 Freiburg im Breisgau, Germany}
\address{\textsuperscript{2} Department of Mathematics, Technische Universität München, Boltzmannstr. 3, 85748
Garching, Germany}
\eads{\mailto{christoph.widder@physik.uni-freiburg.de}, \mailto{jz@tum.de}, \mailto{tanja.schilling@physik.uni-freiburg.de}}

\vspace{10pt}
\begin{indented}
\item[]\today
\end{indented}

\begin{abstract}
In statistical physics, the Nakajima-Mori-Zwanzig projection operator formalism is used to derive an integro-differential equation for observables in a Hilbert space, the generalized Langevin equation (GLE). This technique relies on the splitting of the dynamics into a projected and an orthogonal part. However, the well-posedness of the abstract Cauchy problem for the orthogonal dynamics remains an open problem. Moreover, it is rarely discussed under which assumptions the Dyson identity, which is used to derive the GLE, holds. In this article, we address this issue for rank-one projections (Mori's projection). For the Mori projection, the orthogonal dynamics is obtained from the bounded perturbation theorem. The variation of constants formula for strongly continuous semigroups then yields the GLE and the second fluctuation dissipation theorem (2FDT). We show that the variation of constants can be replaced by a limiting process in order to give a general proof of the GLE and 2FDT that does not require the differentiability of the fluctuating forces. In addition, we offer an alternative approach that does not require the bounded perturbation theorem. Our starting point is the observation that the GLE and 2FDT uniquely determine the fluctuating forces as well as the memory kernel. Furthermore, the orbit maps for the orthogonal dynamics can be directly defined via solutions of linear Volterra equations. All desired properties of the orthogonal dynamics are then proven directly from this definition. In particular, the orthogonal dynamics is a strongly continuous semigroup generated by $\overline{\mathcal{QL}}\mathcal{Q}=\mathcal{QLQ}$, where $\mathcal{L}$ is the generator of the time evolution operator, and $\mathcal{P}=1-\mathcal{Q}$ is the Mori projection operator. If $\mathcal{L}$ is skew-adjoint, the orthogonal dynamics is a unitary group and the fluctuating forces are stationary. Our results apply to general autonomous dynamical systems whose time evolution is given by a strongly continuous semigroup. 
\end{abstract}

%
\vspace{2pc}
\noindent{\it Keywords}: generalized Langevin equation, projection operator formalism, fluctuation-dissipation theorem, Dyson identity, memory kernel, fluctuating forces, projected dynamics
%
%
%
%

\clearpage
\section*{Addendum}

We have complemented this article with an addendum (\ref{sec:addendum}), which greatly simplifies the semigroup approach presented in section~\ref{ssec:semigroup_approach}. 

\section{Introduction}

In statistical physics, the method called ``projection operator formalism'', ``Mori-Zwanzig formalism'' or ``Nakajima-Zwanzig-formalism'' is a well-established method to reduce the dimension of a dynamical system. (For textbooks see ref.~\cite{grabert2006projection,zwanzig2001nonequilibrium,evans2007statistical}, for original texts see ref.~\cite{nakajima1958,zwanzig60,zwanzig1961,mori1965transport,kubo66,fukui1971,zwanzig72,zwanzig1973,kawasaki73,grabert77,grabert78}.) The basic idea of this method is to start out with the complete description of a microscopic system in terms of its Liouville-von Neumann equation and to produce the evolution equations of a set of observables of interest by means of a projection. The resulting evolution equations are called Generalized Langevin Equations (GLE). 

Since the 1960s there has been considerable research effort in analysing the properties of the GLE and in developing methods to infer the parameters of the GLE from experimental data (for didactic introductions to the topic see, e.g., \cite{snook2006langevin,schilling2022coarse,vrugt2020projection,chorin2000optimal,hijon2010mori}). In the physics literature, authors who write about the projection operator formalism usually base their arguments on an operator identity which they associate with the names Kawasaki, Dyson or Duhamel (depending on the authors' scientific background). This operator identity serves to handle the "orthogonal dynamics", i.e.~the dynamics of degrees of freedom orthogonal to the projection \cite{zwanzig60,mori1965transport,grabert77,holian85,evans84}. Surprisingly, there is hardly any literature in which the origin and range of validity of this operator identity or the properties of the orthogonal dynamics are mentioned. This raises the question for which systems the problem of solving the orthogonal dynamics is well-posed. To our knowledge, there is just one article that gives a proof of existence for the orthogonal dynamics \cite{givon2005}. We have the impression that large parts of the statistical physics community have accepted the Dyson identity as a justification for the GLE without questioning its mathematical background. This impression is supported by a statement in the work of Givon, Hald and Kupferman who motivate their existence proof by writing ``The validity of the Mori-Zwanzig formalism relies on the well-posedness of
the orthogonal dynamics equation, which has always been taken for granted.'' 
\cite{givon2005}. Also Zhu, Dominy, and Venturi remark ``When estimating the M[ori]Z[wanzig] memory integral, we need to deal with the semigroup $e^{tLQ}$. It turns
out to be extremely difficult to prove strong continuity of such a semigroup in general\dots'' \cite{zhu18}. (In their work, the term $e^{tLQ}$ is used to construct the orthogonal dynamics.) 

For Hamiltonian systems, Givon et al.~\cite{givon2005} have shown the existence of weak solutions to the orthogonal dynamics equation for conditional expectations. In particular, this includes Zwanzig's projection, which is of infinite rank. Their proof is closely related to Friedrich's extension for symmetric hyperbolic systems. The uniqueness of a solution, however, is not discussed. Therefore, it remains unclear if the fluctuating forces can be unambiguously defined by an orthogonal dynamics equation. For finite-rank projections, Givon et al. have proven the existence and uniqueness of a solution to the orthogonal dynamics equation by a reduction to linear Volterra equations. In contrast, Zhu et al.~\cite{zhu18} construct the orthogonal dynamics by means of the bounded perturbation theorem from semigroup theory. 

Here, we pick up these ideas to present two independent paths towards a general and rigorous formulation of the projection operator formalism for the Mori projection operator $\mathcal{P}:=(\cdot,z)(z,z)^{-1}z$, where $0\neq z \in H$ denotes the observable of interest for some complex Hilbert space $H$. Our results apply to arbitrary dynamical systems whose time evolution is given by a strongly continuous semigroup $\{\mathcal{U}(t)\}_{t\geq 0}$ on $H$. 

Firstly, we discuss a genuine semigroup approach in sec.~\ref{ssec:semigroup_approach}. The fluctuating forces can be defined by an orbit map of the orthogonal dynamics constructed via the bounded perturbation theorem. Afterwards, the GLE including the second fluctuation dissipation theorem (2FDT) is obtained from the variation of constants formula for strongly continuous semigroups. However, this requires the differentiability of the fluctuating forces, which turns out to be dispensable. In theorem \ref{theorem:gle_2}, we replace the variation of constants by a limiting process obtained from the exponential formula for strongly continuous semigroups. This shows that the GLE including the fluctuation dissipation theorem (2FDT) holds, provided that $z \in D(\mathcal{L})\cap D(\mathcal{L}^\dagger)$, where $\mathcal{L}$ denotes the generator of $\{\mathcal{U}(t)\}_{t\geq 0}$. Further, we show that the orthogonal dynamics is given by the strongly continuous semigroup $\{e^{\overline{\mathcal{QL}}\mathcal{Q}t}\}_{t\geq 0}$, where $\mathcal{Q}:=1-\mathcal{P}$ and the closure is denoted by an overbar, see lemmata \ref{lemma:strongly_continuous_semigroup}-\ref{lemma:strongly_continuous_semigroup_2} and corollary \ref{corollary:orthogonal_dynamics}. Consequently, the fluctuating forces can be defined via the unique mild solution of the abstract Cauchy problem associated to $\overline{\mathcal{QL}}\mathcal{Q}$ with initial value $\mathcal{QL}z$. 

Secondly, in section \ref{ssec:volterra}, we show that the same results can be obtained solely by means of linear Volterra equations. The derivation requires neither the bounded perturbation theorem nor the variation of constants or exponential formula. Instead, we start with the observation that the GLE and 2FDT uniquely determine the memory kernel as well as the fluctuating forces, see proposition \ref{proposition:gle}. This is an immediate consequence of the existence and uniqueness of solutions of linear Volterra equations with continuous coefficient functions. Hence, the GLE and 2FDT are valid by construction, except that the explicit formula for the fluctuating forces in terms of the orthogonal dynamics has yet to be established. Up to that point, the fluctuating force is defined via the unique solution of a linear Volterra equation. Thus, we first show that the fluctuating forces are orthogonal to the observable of interest $z$, see corollary \ref{corolloary:orthogonality}. Similar to the fluctuating forces, we can define all orbit maps $u(x,t)$ of the orthogonal dynamics by means of solutions of linear Volterra equations, where $x\in H$ denotes the initial value. In theorem \ref{theorem:mild_solution}, we show that these orbit maps are indeed mild solutions of the abstract Cauchy problem associated to $\overline{\mathcal{QL}}\mathcal{Q}$. In theorem \ref{theorem:strongly_continuous_semigroup}, we use Grönwall's inequality to show that $\{u(\cdot,t)\}_{t\geq 0}$ constitutes a strongly continuous semigroup of bounded linear operators. Moreover, the generator of $\{u(\cdot,t)\}_{t\geq 0}$ is identified with $\overline{\mathcal{QL}}\mathcal{Q}=\mathcal{QLQ}$, i.e. $u(x,t)=e^{\overline{\mathcal{QL}}\mathcal{Q}t}x$. Again, we find that the fluctuating forces are given by the unique mild solution of the abstract Cauchy problem associated to $\overline{\mathcal{QL}}\mathcal{Q}$ with initial value $\mathcal{QL}z$, see corollary \ref{corollary:fluctuating_forces}. This restores the missing formula for the fluctuating forces in terms of the orthogonal dynamics. The final form of the GLE and 2FDT, theorem \ref{theorem:gle}, is now a direct consequence of proposition \ref{proposition:gle} and corollary \ref{corollary:fluctuating_forces}. Additionally, in sec.~\ref{ssec:unitary_time_evolution}, we prove that the orthogonal dynamics is a unitary group if the time evolution of the dynamical system is a strongly continuous unitary group. In this case, the fluctuating forces are stationary, and the generator of the orthogonal dynamics is given by $\overline{\mathcal{QL}}\mathcal{Q}=\overline{\mathcal{QLQ}}$. 

In sec.~\ref{sec:classical_statistical_mechanics}, we show how these results apply to classical statistical mechanics. The non-autonomous case (e.g., systems under time-dependent external driving or active matter) is treated in sec.~\ref{sec:non_autonomous_systems}. 
The same way as for autonomous systems, we show that the non-stationary versions of the GLE and 2FDT \cite{mcphie2001,kawai2011,meyer2017,meyer2019,vrugt2019mori,widder22} uniquely determine the memory kernel and the fluctuating forces such that the orthogonality between the fluctuating forces and the observable of interest holds. Hence, the non-stationary GLE and 2FDT hold by construction, except that the fluctuating forces are defined via solutions of linear Volterra equations. For non-autonomous systems, however, we cannot provide a rigorous definition of the fluctuating forces by means of a well-posed orthogonal dynamics equation. Finally, a discussion and conclusion is given in sec.~\ref{sec:conclusion}.

\section{Semigroup approach}\label{ssec:semigroup_approach}

Throughout this article, let $H$ be a complex Hilbert space, and let $(.,.)$ denote the scalar product with complex conjugation in its second argument. We write $c^\ast$ for the conjugate of a complex number $c$. Let $\{\mathcal{U}(t)\}_{t\geq 0}$ be a strongly continuous semigroup on $H$ such that the time evolution for some state vector $x\in H$ is given by the orbit map $t\to \mathcal{U}(t)x$. 
The generator is defined by
\begin{align*}
    \mathcal{L}x &:= \lim_{h\searrow 0} \frac{1}{h}[\mathcal{U}(h)x-x] \intertext{on the domain}
    D(\mathcal{L}) &:= \{ x\in H \colon \mathcal{L}x \in H \} \, .
\end{align*}
 This definition follows the convention from semigroup theory as used, e.g., in \cite[p.~49]{engel}. Within the physics literature (see, e.g., \cite[p.~208]{hall2013}), the generator is typically defined by $\mathcal{L}x:=\lim_{h\searrow 0} \frac{1}{ih}[\mathcal{U}(h)x-x]$. This convention is retrieved by the replacement $\mathcal{L} \mapsto i\mathcal{L}$. 

 Further, let $\mathcal{P}$ be an orthogonal projection on $H$. Its complementary projection is denoted by $\mathcal{Q}:=1-\mathcal{P}$. For simplicity, we restrict ourselves to rank-one projections
\begin{align}
    \mathcal{P} &:= (\cdot,z)(z,z)^{-1}z \, , \label{def:mori_projector}
\end{align}
where $0\neq z \in H$ denotes some fixed observable of interest. The operator $\mathcal{P}$ is also called Mori projection operator. For simplicity, we omit the dependence of $\mathcal P$ on $z$ in the notation.

For $z\in D(\mathcal{L})$, the semigroup generated by $\mathcal{LQ}=\mathcal{L}-\mathcal{LP}$ can be considered a bounded perturbation of the semigroup $\{\mathcal{U}(t)\}_{t\geq 0}$, see Zhu et al. \cite[Appendix A]{zhu18}. Hence, the bounded perturbation theorem \cite[p.~158, Theorem~1.3]{engel} yields the strongly continuous semigroup $\{e^{\mathcal{LQ}t}\}_{t\geq 0}$. This remains true for finite-rank projections. In addition, the bounded perturbation theorem entails a growth bound for the perturbed semigroup, which is useful to derive estimates on the memory integral \cite{zhu18}.   

\begin{remark*}
    Mind that $\{e^{\mathcal{LQ}t}\}_{t\geq 0}$ is not given by the exponential series. Here, the notation $\{e^{\mathcal{LQ}t}\}_{t\geq 0}$ means that $\mathcal{LQ}$ is the generator of a strongly continuous semigroup. Furthermore, the order of the operators matters. E.g., since $\mathcal{PL}$ is generally unbounded, the bounded perturbation theorem does not imply that $\mathcal{QL}$ generates a strongly continuous semigroup. 
\end{remark*}

The GLE is typically derived using the Dyson identity \cite{grabert2006projection,zwanzig2001nonequilibrium,snook2006langevin}, which resembles a variation of constants \cite{ball}. Let us for the moment ignore possible domain issues. Let us define the fluctuating forces according to $\eta_t:=\mathcal{Q}e^{\mathcal{LQ}t}\mathcal{QL}z$. The derivative of $\eta_t$ takes the form
\begin{align*}
    \frac{d}{dt}\eta_t &= \mathcal{QLQ}e^{\mathcal{LQ}t}\mathcal{QL}z = 
    (\mathcal{L} - \mathcal{PL})\mathcal{Q}e^{\mathcal{LQ}t}\mathcal{QL}z=\mathcal{L}\eta_t-\mathcal{PL}\eta_t \, .
\end{align*}
Applying the variation of constants formula \cite{ball} yields
\begin{align*}
    \eta_t &= \mathcal{U}(t)\mathcal{QL}z - \int^t_0 \mathcal{U}(t-s)\mathcal{PL}\eta_s \,  ds \, .
\end{align*}
Insertion into the time derivative of $\mathcal{U}(t)z$ yields the GLE, 
\begin{align*}
    \frac{d}{dt}\mathcal{U}(t)z &= \mathcal{U}(t)\mathcal{PL}z + \mathcal{U}(t)\mathcal{QL}z \\
    &= \mathcal{U}(t)\mathcal{PL}z + \eta_t +  \int^t_0\mathcal{U}(t-s)\mathcal{PL}\eta_s \,  ds \\
    &= \mathcal{U}(t)\mathcal{PL}z + \eta_t +  \int^t_0(\mathcal{L}\eta_s,z)(z,z)^{-1}\mathcal{U}(t-s)z \,  ds \\
    &= \mathcal{U}(t)\mathcal{PL}z + \eta_t + \int^t_0 K(t-s)\mathcal{U}(s)z \, ds \, ,
\end{align*}
where we introduced the memory kernel $K(t)$ according to
\begin{align}
    K(t) &:= (\mathcal{L}\eta_t,z)(z,z)^{-1}  = (\eta_t,\mathcal{L}^\dagger z)(z,z)^{-1} =  (\eta_t,\mathcal{Q}\mathcal{L}^\dagger z)(z,z)^{-1}   \, ; \label{equ:2fdt}
\end{align}
the last equality holds since the range of $\eta$ is in ${\mathcal Q} H$.
If $\mathcal{L}$ is skew-adjoint, we retrieve the second fluctuation dissipation theorem (2FDT), i.e., $K(t) = -(\eta_t,\eta_0)(z,z)^{-1}$ \cite{kubo66}. Hence, eq.~(\ref{equ:2fdt}) is a direct generalization of the 2FDT for arbitrary strongly continuous semigroups $\{\mathcal{U}(t)\}_{t\geq 0}$. 
We can now introduce the orthogonal dynamics according to 
\begin{align}
    \mathcal{G}(t) &:= \mathcal{P} +  \mathcal{Q}e^{\mathcal{LQ}t} \, , \label{equ:orthogonal_dynamics}
\end{align}
where we have included the operator $\mathcal{P}$ for convenience. The inclusion of $\mathcal{P}$ assures that the orthogonal dynamics leaves the parallel part of the initial value invariant. In particular, we have $\mathcal{G}(0)=1$. This allows us to treat the orthogonal dynamics as a semigroup on the full Hilbert space $H$. 

For this derivation of the GLE and 2FDT to be valid, the differentiability of the fluctuating force $\eta_t$ is required. In the following, we give a more general proof using a limiting process obtained from the exponential formula for strongly continuous semigroups \cite{pazy}. 

First, let us show that the orthogonal dynamics $\{\mathcal{G}(t)\}_{t\geq 0}$ from eq.~\eqref{equ:orthogonal_dynamics} is again a strongly continuous semigroup. 

\begin{lemma}\label{lemma:strongly_continuous_semigroup}
     Let $z\in D(\mathcal{L})$ with $z\neq0$. Then, $\{\mathcal{G}(t) := \mathcal{P}+\mathcal{Q}e^{\mathcal{LQ}t}\}_{t\geq 0}$ is a strongly continuous semigroup.
\end{lemma}
\begin{proof}
    Since $z\in D(\mathcal{L})$ the operator $\mathcal{LP}$ is bounded. Thus, it follows by the bounded perturbation theorem \cite[p.~158, Theorem~1.3]{engel} that $\mathcal{LQ}=\mathcal{L}-\mathcal{LP}$ generates a strongly continuous semigroup. Hence, $\{\mathcal{P}+\mathcal{Q}e^{\mathcal{LQ}t}\}_{t\geq 0}$ is a one-parameter family of bounded linear operators. 
    
    It is left to show that $\{\mathcal{P}+\mathcal{Q}e^{\mathcal{LQ}t}\}_{t\geq 0}$ is a semigroup. For all $x\in \mathcal{P}H$, we have $x\in D(\mathcal{LQ})$ and $\mathcal{LQ}x=0$. This implies that the derivative of $t\to e^{\mathcal{LQ}t}x$ vanishes for all $x\in\mathcal{P}H$ and $t\geq 0$. Consequently, we have $e^{\mathcal{LQ}t}x=x$ for all $x\in\mathcal{P}H$ and $t\geq 0$. This implies for all $x\in H$ and $t,s \geq 0$
    \begin{align*}
        \left(\mathcal{P}+\mathcal{Q}e^{\mathcal{LQ}t}\right)\left(\mathcal{P}+\mathcal{Q}e^{\mathcal{LQ}s}\right)x &=\mathcal{P}x+\mathcal{Q}e^{\mathcal{LQ}t}\mathcal{Q}e^{\mathcal{LQ}s}x +\underbrace{\mathcal{Q}e^{\mathcal{LQ}t}\mathcal{P}x}_{=0} \\
        &= \mathcal{P}x +\mathcal{Q}e^{\mathcal{LQ}t}e^{\mathcal{LQ}s}x -\underbrace{\mathcal{Q}e^{\mathcal{LQ}t}\mathcal{P}e^{\mathcal{LQ}s}x}_{=0} \\
        &= \mathcal{P}x +\mathcal{Q}e^{\mathcal{LQ}(t+s)}x \, .
    \end{align*}
    This completes the proof. 
\end{proof}

An alternative representation for the orthogonal dynamics can be obtained using adjoint semigroups. In addition, we find that the generator of the orthogonal dynamics is given by $\overline{\mathcal{QL}}\mathcal{Q}$. Before we proceed, we require the following proposition.
\begin{proposition}\label{proposition:props_of_generator}
    Let $z\in D(\mathcal{L}^\dagger)$ with $z\neq0$. Then the operators $\mathcal{L}^\dagger$,  $[\mathcal{L}^\dagger \mathcal{Q}]^\dagger=\overline{\mathcal{QL}}$, $\mathcal{L}^\dagger\mathcal{Q}=\overline{\mathcal{QL}}^\dagger$ and $\overline{\mathcal{QL}}\mathcal{Q}=(\mathcal{Q}\mathcal{L}^\dagger\mathcal{Q})^\dagger$ are densely defined and closed. 
\end{proposition}
\begin{proof}
Since $H$ is reflexive, the adjoint semigroup $\{\mathcal{U}(t)^\dagger\}_{t\geq 0}$ is strongly continuous \cite[p.~6, Corollary 1.3.2]{vanNeerven1992} with generator $\mathcal{L}^\dagger$ \cite[p.~6, Theorem~1.3.3]{vanNeerven1992}, see also \cite[p.~43-44, p.~61-64]{engel}. Since $z\in D(\mathcal{L}^\dagger)$, the operator $\mathcal{L}^\dagger \mathcal{P}$ is bounded. 
Since $\mathcal{L}^\dagger$ is densely defined and closed \cite[p.~51, Theorem~1.4]{engel}, this implies that $\mathcal{L}^\dagger\mathcal{Q}= \mathcal{L}^\dagger- \mathcal{L}^\dagger \mathcal{P}$ is also densely defined and closed.  Since $\mathcal{Q}$ is bounded and $\mathcal{L}$ is densely defined, this implies $[\mathcal{L}^\dagger\mathcal{Q}]^\dagger = {[\mathcal{QL}]^\dagger}^\dagger$ \cite[p.~330, Theorem~13.2]{rudin1974}. Since $\mathcal{QL}$ and $[\mathcal{QL}]^\dagger=\mathcal{L}^\dagger\mathcal{Q}$ are densely defined, $\mathcal{QL}$ is closable with closure $\overline{\mathcal{QL}}={[\mathcal{QL}]^\dagger}^\dagger$ \cite[p.~252, Theorem VIII.1]{reed1980}. Hence, $[\mathcal{L}^\dagger\mathcal{Q}]^\dagger = \overline{\mathcal{QL}}$. Since $\mathcal{L}^\dagger\mathcal{Q}$ is densely defined and closed, this implies $ \mathcal{L}^\dagger \mathcal{Q} = \overline{\mathcal{QL}}^\dagger$. Furthermore, we have 
\begin{align}
\label{equ:L_dagger_Q_dagger}(\mathcal{Q}\mathcal{L}^\dagger\mathcal{Q})^\dagger &=  [\mathcal{L}^\dagger\mathcal{Q}]^\dagger \mathcal{Q} = \overline{\mathcal{QL}}\mathcal{Q} \, ,
\end{align}
where we used \cite[p.~330, Theorem~13.2]{rudin1974}.
\end{proof}

\begin{lemma}\label{lemma:strongly_continuous_semigroup_2}
    Let $z\in D(\mathcal{L}^\dagger)$ with $z\neq0$. Then, $\{\mathcal{G}(t) := \mathcal{P}+e^{\overline{\mathcal{QL}}t}\mathcal{Q}\}_{t\geq 0}$ is a strongly continuous semigroup with generator $\overline{\mathcal{QL}}\mathcal{Q}$.
\end{lemma}
\begin{proof}
     Since $z\in D(\mathcal{L}^\dagger)$, it follows by lemma \ref{lemma:strongly_continuous_semigroup} that $\{\mathcal{P}+\mathcal{Q}e^{\mathcal{L^\dagger Q}t}\}_{t\geq 0}$ is a strongly continuous semigroup. Again, the operator $[\mathcal{L}^\dagger \mathcal{Q}]^\dagger$ generates the adjoint semigroup to $\{e^{\mathcal{L}^\dagger \mathcal{Q}t}\}_{t\geq 0}$. By proposition \ref{proposition:props_of_generator}, we have $[\mathcal{L}^\dagger \mathcal{Q}]^\dagger=\overline{\mathcal{QL}}$. Furthermore, the adjoint semigroup to $\{\mathcal{P}+\mathcal{Q}e^{\mathcal{L^\dagger Q}t}\}_{t\geq 0}$ is a strongly continuous semigroup. Thus, 
    \begin{align*}
        \left(\mathcal{P}+\mathcal{Q}e^{\mathcal{L^\dagger Q}t}\right)^\dagger &=\mathcal{P} + \left(e^{\mathcal{L^\dagger Q}t}\right)^\dagger\mathcal{Q} \\
        &=  \mathcal{P} + e^{[\mathcal{L^\dagger Q}]^\dagger t}\mathcal{Q} \\
        &= \mathcal{P} + e^{\overline{\mathcal{QL}}t}\mathcal{Q} \\
        &= \mathcal{G}(t)\, .
    \end{align*}
    This shows that $\{\mathcal{G}(t)\}_{t\geq 0}$ is a strongly continuous semigroup. 

    It is left to show that $\overline{\mathcal{QL}}\mathcal{Q}$ is the generator. Note that the orbit map $t\to\mathcal{G}(t)x$ is right differentiable at $t=0$, if and only if $x\in D(\overline{\mathcal{QL}}\mathcal{Q}):= \{x\in H \,:\, \mathcal{Q}x\in D(\overline{\mathcal{QL}})\}$. In this case, the right derivative at $t=0$ is given by $\overline{\mathcal{QL}}\mathcal{Q}x$. This shows that $\overline{\mathcal{QL}}\mathcal{Q}$ is the generator of $\{\mathcal{G}(t)\}_{t\geq 0}$.
\end{proof}
For $z\in D(\mathcal{L})\cap D(\mathcal{L}^\dagger)$, it is straightforward to prove the equality of the semigroups $\{\mathcal{G}(t)\}_{t\geq 0}$ from lemmata \ref{lemma:strongly_continuous_semigroup} and \ref{lemma:strongly_continuous_semigroup_2}.

\begin{corollary}\label{corollary:orthogonal_dynamics}
    Let $z\in D(\mathcal{L}) \cap D(\mathcal{L}^\dagger)$  with $z\neq0$. Then, for all $t\geq 0$
    \begin{align}
        \mathcal{G}(t) &= \mathcal{P} +  \mathcal{Q}e^{\mathcal{LQ}t} \, ,
    \end{align}
    where $\{\mathcal{G}(t)\}_{t\geq 0}$ denotes the strongly continuous semigroup generated by $\overline{\mathcal{QL}}\mathcal{Q}$, and $\{e^{\mathcal{LQ}t}\}_{t\geq 0}$ denotes the strongly continuous semigroup generated by $\mathcal{LQ}$. 
\end{corollary}
\begin{proof}
    Let $x\in D(\mathcal{LQ})$ arbitrary. Since $\mathcal{Q}$ is bounded, we have
    \begin{align*}
       \frac{d}{dt}[ \mathcal{P}x +  \mathcal{Q}e^{\mathcal{LQ}t}x ] &= \mathcal{QLQ}[ \mathcal{P}x + \mathcal{Q}e^{\mathcal{LQ}t}x] \, .
    \end{align*}
    This implies that $t\to \mathcal{P}x +  \mathcal{Q}e^{\mathcal{LQ}t}x$ is a mild solution of the abstract Cauchy problem associated to $\overline{\mathcal{QL}}\mathcal{Q}$ \cite[p.~50, p.~145-146]{engel}. However, since $\overline{\mathcal{QL}}\mathcal{Q}$ generates a strongly continuous semigroup, the mild solution is unique \cite[p.~146, Proposition 6.4]{engel}. Hence, we have
    \begin{align}
        \mathcal{G}(t)x &= \mathcal{P}x +  \mathcal{Q}e^{\mathcal{LQ}t}x \, ,
    \end{align}
    for all $x\in D(\mathcal{LQ})$ and $t\geq 0$. Since $\mathcal{LQ}$ generates a strongly continuous semigroup, $\mathcal{LQ}$ is densely defined \cite[p.~51, Theorem~1.4]{engel}. This implies 
    \begin{align*}
        \|\mathcal{G}(t) -\mathcal{P}-\mathcal{Q}e^{\mathcal{LQ}t}\| &= 0 \, ,
    \end{align*}
    for all $t\geq 0$. This concludes the proof. 
\end{proof}

This implies that the fluctuating force $\eta_t:=\mathcal{G}(t)\mathcal{QL}z$ is the unique mild solution of the abstract Cauchy problem associated to $\overline{\mathcal{QL}}\mathcal{Q}$ with initial value $\mathcal{QL}z$, cf. sec.~\ref{ssec:volterra}. 

\begin{theorem}[GLE and 2FDT]\label{theorem:gle_2}
    Let $z\in D(\mathcal{L})\cap D(\mathcal{L}^\dagger)$ with $z\neq0$. Then, the orbit map $\mathbb{R}_+ \ni t \mapsto \mathcal{U}(t)z \in D(\mathcal{L})$ solves the integro-differential equation
    \begin{align}
        \frac{d}{dt}\mathcal{U}(t)z &= \mathcal{U}(t)\mathcal{P}\mathcal{L}z + \mathcal{G}(t)\mathcal{Q}\mathcal{L}z+\int^t_0   K(t-s)\mathcal{U}(s)z \, ds \, , \\
        K(t)&:= (\mathcal{G}(t)\mathcal{Q}\mathcal{L}z,\mathcal{Q}\mathcal{L}^\dagger z)(z,z)^{-1} \, , \label{def:gle_2:memory_kernel} 
    \end{align}
    where $\{\mathcal{G}(t)\}_{t\geq 0}$ denotes the strongly continuous semigroup generated by $\overline{\mathcal{QL}}\mathcal{Q}$. 
\end{theorem}
\begin{proof}
    In order to avoid the requirement of differentiability of the fluctuating forces, it is useful to consider the approximants from the exponential formula for strongly continuous semigroups \cite[p.~33, Theorem~8.3]{pazy}
    \begin{align*}
        \mathcal{U}_n(t) &:= \left[\frac{n}{t}\mathcal{R}\left(\frac{n}{t};\mathcal{L}\right)\right]^n \, ,
    \end{align*}
    where $\mathcal{R}\left(\lambda;\mathcal{L}\right)=(\lambda-\mathcal{L})^{-1}$ denotes the resolvent. For $\lambda \in \rho(\mathcal{L})$, where $\rho(\mathcal{L})$ is the resolvent set, the resolvent $\mathcal{R}\left(\lambda;\mathcal{L}\right)\colon H \to D(\mathcal{L})$ is a bijection. Thus, the resolvent $\mathcal{R}\left(\lambda;\mathcal{L}\right)$ is bounded for all $\lambda\in\rho(\mathcal{L})$, and $D(\mathcal{L})$ is an invariant subspace for the map $\mathcal{U}_n(t)$ for all $\frac{n}{t}\in \rho(\mathcal{L})$. Furthermore, we have $\mathcal{R}'\left(\lambda;\mathcal{L}\right)=-\mathcal{R}\left(\lambda;\mathcal{L}\right)^2$  \cite[p.~56, Corollary 1.11]{engel} and $\lambda\mathcal{R}\left(\lambda;\mathcal{L}\right)-1=\mathcal{L}\mathcal{R}\left(\lambda;\mathcal{L}\right)$. Since $\mathcal{R}'\left(\lambda;\mathcal{L}\right)$ commutes with $\mathcal{R}\left(\lambda;\mathcal{L}\right)$, a straightforward calculation shows that
    \begin{align*}
        \frac{d}{dt}\mathcal{U}_n(t) &= \left[\frac{n}{t}\mathcal{R}\left(\frac{n}{t};\mathcal{L}\right)\right]^n\left[\frac{n^2}{t^2}\mathcal{R}\left(\frac{n}{t};\mathcal{L}\right)-\frac{n}{t}\right] \\
        &= \mathcal{U}_n(t)\mathcal{L}_n(t) \, , \\
        \mathcal{L}_n(t) &:= \mathcal{L}\left[\frac{n}{t}\mathcal{R}\left(\frac{n}{t};\mathcal{L}\right)\right] \, .
    \end{align*}

    The idea is to derive a suitable equation of motion for the approximate orbit maps $t\to\mathcal{U}_n(t)z$ such that the final result is obtained from the limit $n\to\infty$. Computing the derivative of the approximate orbit map $\mathcal{U}_n(t)z$ yields
    \begin{align}
        \frac{d}{dt} \mathcal{U}_n(t)z &= \mathcal{U}_n(t)\mathcal{L}_n(t)z = \mathcal{U}_n(t)\mathcal{P}\mathcal{L}_n(t)z+\mathcal{U}_n(t)\mathcal{Q}\mathcal{L}_n(t)z \, .  \label{equ:proof_gle_derivative}
    \end{align}
    Since $z\in D(\mathcal{L})$, the subspace $D(\mathcal{L})$ is invariant under the map $\mathcal{Q}$. Since $\mathcal{L}_n(t)=\left[\frac{n^2}{t^2}\mathcal{R}\left(\frac{n}{t};\mathcal{L}\right)-\frac{n}{t}\right]$, the subspace $D(\mathcal{L})$ is also invariant under the maps $\mathcal{L}_n(t)$. Thus, $\mathcal{Q}\mathcal{L}_n(t)z \in D(\mathcal{L})\cap \mathcal{Q}H$. The generator of $\{\mathcal{G}(t)\}_{t\geq0}$ is given by
    \begin{align*}
        \mathcal{A} &:= \overline{\mathcal{QL}}\mathcal{Q}  \, .
    \end{align*}
    Thus, we conclude $\mathcal{Q}\mathcal{L}_n(t)z \in D(\mathcal{A})$, and we may compute
    \begin{align}
        \frac{d}{ds}\mathcal{U}_n(s)\mathcal{G}(t-s)\mathcal{Q}\mathcal{L}_n(t)z &= \mathcal{U}_n(s)(\mathcal{L}_n(s)-\mathcal{A})\mathcal{G}(t-s)\mathcal{Q}\mathcal{L}_n(t)z \, . \label{equ:proof_gle_duhamel_0}
    \end{align}
    The maps $s\to \mathcal{G}(t-s)\mathcal{Q}\mathcal{L}_n(t)z$ and $s\to \mathcal{A}\mathcal{G}(t-s)\mathcal{Q}\mathcal{L}_n(t)z$ are continuous. Furthermore, the maps  $s\to \mathcal{U}_n(s)$ and $s\to \mathcal{L}_n(s)$ are continuous w.r.t. the operator norm.
    This shows that $[0,t]\ni s\to\mathcal{U}_n(s)\mathcal{G}(t-s)\mathcal{Q}\mathcal{L}_n(t)z$ is \textit{continuously} differentiable for sufficiently large $n$. Thus, integrating eq.~(\ref{equ:proof_gle_duhamel_0}) over $[0,t]$ yields
    \begin{align}     \mathcal{U}_n(t)\mathcal{Q}\mathcal{L}_n(t)z-\mathcal{G}(t)\mathcal{Q}\mathcal{L}_n(t)z &= \int^t_0   \mathcal{U}_n(s)(\mathcal{L}_n(s)-\mathcal{A})\mathcal{G}(t-s)\mathcal{Q}\mathcal{L}_n(t)z \, ds\, , \label{equ:proof_gle_duhamel}
    \end{align}
    where we use the Bochner integral \cite{Aliprantis2006, hille, evans}. Inserting eq.~(\ref{equ:proof_gle_duhamel}) into eq.~(\ref{equ:proof_gle_derivative}) yields
    \begin{align}
        \frac{d}{dt} \mathcal{U}_n(t)z &= \mathcal{U}_n(t)\mathcal{P}\mathcal{L}_n(t)z+\mathcal{G}(t)\mathcal{Q}\mathcal{L}_n(t)z+\int^t_0  \mathcal{U}_n(s)(\mathcal{L}_n(s)-\mathcal{A})\mathcal{G}(t-s)\mathcal{Q}\mathcal{L}_n(t)z \, ds\, .\label{equ:proof_gle_approximation}
    \end{align}
    The plan is to take the limit $n\to\infty$. 
    
    Let $T>0$ be arbitrary. According to the exponential formula \cite[p.~33, Theorem~8.3]{pazy}, for all $x\in H$, the limit
    \begin{align}
       \lim_{n\to\infty} \mathcal{U}_n(t) x &= \mathcal{U}(t)x  \label{equ:proof_gle_4}
    \end{align}
    converges uniformly on $[0,T]$. We have to show that $\|\mathcal{U}_n(t)\|$ is uniformly bounded for all $n \in \mathbb{N}$ and $t\in [0,T]$. Since $\mathcal{L}$ generates a strongly continuous semigroup, we have the estimates \cite[p.~77, Theorem~3.8]{engel}
    \begin{align*}
        \| (\lambda - \omega)^n\mathcal{R}(\lambda;\mathcal{L})^n \| & \leq M \, ,
    \end{align*}
    for all $\lambda>\omega$ and $\lambda \in \rho(\mathcal{L})$, where $\omega\in\mathbb{R}$ and $M\geq1$ are chosen such that $\|\mathcal{U}(t)\|\leq Me^{\omega t}$ for all $t\geq 0$. Thus, we have
    \begin{align*}
        \| \mathcal{U}_n(t)\| &= \left\| \left[\frac{n}{t}\mathcal{R}\left(\frac{n}{t};\mathcal{L}\right)\right]^n \right\| \leq M \left(\frac{1}{1-\frac{\omega t}{n}}\right)^n \, , 
    \end{align*}
    for all $\frac{n}{t} \in \rho(\mathcal{L})$ with $\frac{n}{t} > \omega$. 
    Since $\left(\frac{1}{1-\frac{\omega t}{n}}\right)^n \to e^{\omega t}$ monotonically, it follows that there exists $N\in\mathbb{N}$ such that $\|\mathcal{U}_n(t)\|$ is uniformly bounded for $n>N$ and $t\in [0,T]$. Hence, since $z\in D(\mathcal{L})$, for all $\varepsilon>0$, there exists $N\in\mathbb{N}$, such that for all $n> N$ and $t\in [0,T]$
    \begin{align*}
       \left\| \frac{d}{dt} \mathcal{U}_n(t)z - \frac{d}{dt}\mathcal{U}(t)z \right\| &=\left\| \mathcal{U}_n(t)\mathcal{L}_n(t) z - \mathcal{U}(t)\mathcal{L} z \right\| \\
       &\leq \| \mathcal{U}_n(t)\|\|\left(\mathcal{L}_n(t)- \mathcal{L} \right)z\|+ \| \left(\mathcal{U}_n(t)-\mathcal{U}(t)\right)\mathcal{L}z \| < \varepsilon   \, ,
    \end{align*}
    where we used the fact that for $z\in D(\mathcal{L})$, the limit
    \begin{align*}
       \lim_{n\to \infty}\mathcal{L}_n(t)z &= \lim_{n\to \infty}\left[\frac{n}{t}\mathcal{R}\left(\frac{n}{t};\mathcal{L}\right)\right]\mathcal{L}z = \mathcal{L}z
    \end{align*}
    is uniform on bounded intervals \cite[p.~9, Lemma~3.2]{pazy}. In other words, the limit 
    \begin{align}
       \lim_{n\to\infty} \frac{d}{dt} \mathcal{U}_n(t)z &= \frac{d}{dt} \mathcal{U}(t)z \label{equ:proof_gle_limit_of_derivative} 
    \end{align}
    converges uniformly on bounded intervals. Similarly, we show that the limits
    \begin{align}
        \lim_{n\to\infty} \mathcal{U}_n(t)\mathcal{P}\mathcal{L}_n(t)z &= \mathcal{U}(t)\mathcal{P}\mathcal{L}z \, , \label{equ:proof_gle_limit_of_drift}\\
    \lim_{n\to\infty}\mathcal{G}(t)\mathcal{Q}\mathcal{L}_n(t)z &= \mathcal{G}(t)\mathcal{Q}\mathcal{L}z \label{equ:proof_gle_limit_of_ff}
    \end{align}
    converge uniformly on bounded intervals. Taking the limit $n\to \infty$ of eq.~(\ref{equ:proof_gle_approximation}) and inserting eqs.~(\ref{equ:proof_gle_limit_of_derivative})-(\ref{equ:proof_gle_limit_of_ff}) yields
    \begin{align}
        \frac{d}{dt}\mathcal{U}(t)z &=\mathcal{U}(t)\mathcal{P}\mathcal{L}z+ \mathcal{G}(t)\mathcal{Q}\mathcal{L}z+\lim_{n\to\infty}\int^t_0   \mathcal{U}_n(s)(\mathcal{L}_n(s)-\mathcal{A})\mathcal{G}(t-s)\mathcal{Q}\mathcal{L}_n(t)z \, ds \, . \label{equ:proof_gle_approximation_2}
    \end{align}

    Now let $x \in D(\mathcal{L}^\dagger)$ arbitrary. Recall that $\{\mathcal{U}(s)^\dagger\}_{s\geq 0}$ is a strongly continuous semigroup with generator $\mathcal{L}^\dagger$ \cite{vanNeerven1992,engel}. Hence, using the exponential formula \cite[p.~33, Theorem~8.3]{pazy} and the fact that $\mathcal{R}(n/t;\mathcal{L})^\dagger=\mathcal{R}(n/t;\mathcal{L}^\dagger)$ \cite[p.~38, Lemma 10.2]{pazy}, we find that the limits
    \begin{align}
            \lim_{n\to\infty} \mathcal{U}_n(s)^\dagger x & =  \mathcal{U}(s)^\dagger x  \label{equ:proof_gle_0} \, , \\
        \lim_{n\to\infty} \mathcal{L}_n(s)^\dagger \mathcal{U}_n(s)^\dagger x & = \mathcal{L}^\dagger \mathcal{U}(s)^\dagger x  \label{equ:proof_gle_1}
    \end{align}
    converge uniformly on bounded intervals. By proposition \ref{proposition:props_of_generator} we have $\mathcal{A}=(\mathcal{Q}\mathcal{L}^\dagger\mathcal{Q})^\dagger$ and $\mathcal{A}^\dagger=\overline{\mathcal{Q}\mathcal{L}^\dagger \mathcal{Q}}$. Since $z\in D(\mathcal{L}^\dagger)$, the subspace $D(\mathcal{L}^\dagger)$ is invariant under the map $\mathcal{Q}$. Since $\mathcal{U}_n(s)^\dagger=\left[\frac{n}{s}\mathcal{R}\left(\frac{n}{s};\mathcal{L}^\dagger\right)\right]^n$, the subspace $D(\mathcal{L}^\dagger)$ is invariant under the maps $\mathcal{U}_n(s)^\dagger$.  Hence, $\mathcal{Q}\mathcal{U}_n(s)^\dagger x \in D(\mathcal{L}^\dagger)$ and $\mathcal{U}_n(s)^\dagger x \in D(\mathcal{A}^\dagger)$. Thus, the limit
    \begin{align}
        \lim_{n\to\infty} \mathcal{A}^\dagger \mathcal{U}_n(s)^\dagger x 
        &=\lim_{n\to\infty} \mathcal{Q}\mathcal{L}^\dagger\mathcal{Q} \mathcal{U}_n(s)^\dagger x\nonumber\\
        &=\lim_{n\to\infty} \mathcal{Q}\mathcal{L}^\dagger\mathcal{U}_n(s)^\dagger x - \lim_{n\to\infty} \mathcal{Q}\mathcal{L}^\dagger \mathcal{P}\mathcal{U}_n(s)^\dagger x \nonumber\\
        &= \lim_{n\to\infty} \mathcal{Q}\mathcal{U}_n(s)^\dagger \mathcal{L}^\dagger x - \lim_{n\to\infty} (\mathcal{U}_n(s)^\dagger x,z)(z,z)^{-1} \mathcal{Q}\mathcal{L}^\dagger z  \nonumber \\
        &= \mathcal{Q}\mathcal{U}(s)^\dagger \mathcal{L}^\dagger x - (\mathcal{U}(s)^\dagger x,z)(z,z)^{-1} \mathcal{Q}\mathcal{L}^\dagger z \, \label{equ:proof_gle_2}
    \end{align}
    is uniform on bounded intervals. Since $\mathcal{A}^\dagger$ is closed, we have $\mathcal{U}(s)^\dagger x \in D(\mathcal{A}^\dagger)$ and 
    \begin{align}
        \lim_{n\to\infty} \mathcal{A}^\dagger \mathcal{U}_n(s)^\dagger x 
        &=  \mathcal{A}^\dagger \mathcal{U}(s)^\dagger x \label{equ:proof_gle_3}
    \end{align}
    uniformly on bounded intervals. 
    
    For the last step of the proof, we note the following fact. Let $[0,t]\ni s\to f_n(s)\in H$ and $[0,t]\ni s\to g_n(s)\in H$ be sequences of continuous functions. Let $f_n \to f$ uniformly and $g_n\to g$ uniformly. Then, it follows from the Cauchy-Schwarz and triangle inequality that $(f_n(s),g_n(s)) \to (f(s),g(s))$ uniformly for $s\in [0,t]$. Hence, for all $x \in D(\mathcal{L}^\dagger)$, we have
    \begingroup
    \allowdisplaybreaks
    \begin{align*}
    &\left( \lim_{n\to\infty}\int^t_0   \mathcal{U}_n(s)(\mathcal{L}_n(s)-\mathcal{A})\mathcal{G}(t-s)\mathcal{Q}\mathcal{L}_n(t)z \, ds , x\right) \\
        &= \lim_{n\to\infty}\int^t_0 \left( \mathcal{U}_n(s)(\mathcal{L}_n(s)-\mathcal{A})\mathcal{G}(t-s)\mathcal{Q}\mathcal{L}_n(t)z ,  x \right) \, ds\\
        &= \lim_{n\to\infty}\int^t_0 \left( \mathcal{G}(t-s)\mathcal{Q}\mathcal{L}_n(t)z , (\mathcal{L}_n(s)^\dagger-\mathcal{A}^\dagger) \mathcal{U}_n(s)^\dagger x \right) \, ds\\
        &\overset{(\ref{equ:proof_gle_limit_of_ff})(\ref{equ:proof_gle_1})(\ref{equ:proof_gle_3})}{=}\int^t_0 \left( \mathcal{G}(t-s)\mathcal{Q}\mathcal{L}z , (\mathcal{L}^\dagger-\mathcal{A}^\dagger) \mathcal{U}(s)^\dagger x \right) \, ds\\
        &=\int^t_0 \left( \mathcal{G}(t-s)\mathcal{Q}\mathcal{L}z , (\mathcal{Q}\mathcal{L}^\dagger-\mathcal{A}^\dagger) \mathcal{U}(s)^\dagger x \right) \, ds\\
            &=\int^t_0 \left( \mathcal{G}(t-s)\mathcal{Q}\mathcal{L}z , \mathcal{Q}\mathcal{U}(s)^\dagger \mathcal{L}^\dagger x-\mathcal{A}^\dagger\mathcal{U}(s)^\dagger x \right) \, ds\\
        &\overset{(\ref{equ:proof_gle_2})(\ref{equ:proof_gle_3})}{=} \int^t_0\left( \mathcal{G}(t-s)\mathcal{Q}\mathcal{L}z ,  (\mathcal{U}(s)^\dagger x, z)(z,z)^{-1}\mathcal{Q}\mathcal{L}^\dagger z \right)   \, ds\\
        &= \int^t_0\left( \mathcal{G}(t-s)\mathcal{Q}\mathcal{L}z ,  \mathcal{Q}\mathcal{L}^\dagger z \right)(z,z)^{-1}  (\mathcal{U}(s)^\dagger x, z)^\ast \, ds\\
        &\overset{(\ref{def:gle_2:memory_kernel})}{=} \int^t_0 K(t-s)  ( \mathcal{U}(s) z, x) \, ds\\
        &= \left(\int^t_0 K(t-s) \mathcal{U}(s) z \, ds , x \right) \, .
    \end{align*}
    \endgroup
    Since $D(\mathcal{L}^\dagger)$ is dense in $H$, we conclude
    \begin{align*}
        \lim_{n\to\infty}\int^t_0   \mathcal{U}_n(s)(\mathcal{L}_n(s)-\mathcal{A})\mathcal{G}(t-s)\mathcal{Q}\mathcal{L}_n(t)z \, ds &= \int^t_0   K(t-s)\mathcal{U}(s)z \, ds \, .
    \end{align*}
    Inserting this into eq.~(\ref{equ:proof_gle_approximation_2}) yields
    \begin{align*}
        \frac{d}{dt}\mathcal{U}(t)z &=\mathcal{U}(t)\mathcal{P}\mathcal{L}z+ \mathcal{G}(t)\mathcal{Q}\mathcal{L}z+\int^t_0   K(t-s)\mathcal{U}(s)z \, ds \, , 
    \end{align*}
    for all $t\geq 0$. This concludes the proof.
\end{proof}

In the next section, we show that the GLE and 2FDT as well as all desired properties of the orthogonal dynamics can be proven by means of linear Volterra equations without relying on the bounded perturbation theorem.

\section{Alternative approach using Volterra equations}\label{ssec:volterra}

The following derivation of the GLE and 2FDT by means of the theory of Volterra equations is inspired by the existence proof for the orthogonal dynamics for finite-rank projections within the context of Hamiltonian vector fields given by Givon et al.~\cite[Theorem~4.1]{givon2005} in 2005. 

First, we observe that the memory kernel $K(t)$ as well as the fluctuating forces $\eta_t$ are uniquely determined by the GLE and 2FDT. 
Afterwards, it is left to restore the formulas for the memory kernel, fluctuating forces and orthogonal dynamics from sec.~\ref{ssec:semigroup_approach}.  

\begin{proposition}\label{proposition:gle}
    Let $z\in D(\mathcal{L})\cap D(\mathcal{L}^\dagger)$ with $z\neq 0$. 
    For $K\colon\mathbb{R}_+\to \mathbb{C}$ and $\eta\colon\mathbb{R}_+ \to H$, consider the system of equations
    \begin{align}
        \frac{d}{dt}\mathcal{U}(t)z &= \mathcal{U}(t)\mathcal{P}\mathcal{L}z+ \eta_t  + \int^t_0 K(t-s) \mathcal{U}(s)z \, ds \, , \label{equ:gle}  \\
        K(t) &= (\eta_t,\mathcal{Q}\mathcal{L}^\dagger z)(z,z)^{-1} \label{equ:fdt}   \, ,
     \end{align}
    for all $t\geq 0$. There exists a unique solution for $K$ and $\eta$. The memory kernel $K$ is given by the unique continuous solution of the Volterra equation
    \begin{align}
                K(t) &= (\mathcal{U}(t)\mathcal{QL}z,\mathcal{Q}\mathcal{L}^\dagger z)(z,z)^{-1}  \! - \! \int^t_0 K(t-s) (\mathcal{U}(s)z,\mathcal{Q}\mathcal{L}^\dagger z)(z,z)^{-1} ds \label{def:memory_kernel} \, .
    \end{align}
    The fluctuating forces $\eta$ are given by 
    \begin{align}
        \eta_t &= \mathcal{U}(t)\mathcal{QL}z - \int^t_0 K(t-s)\mathcal{U}(s)z \, ds \,  \label{def:fluctuating_forces} \, .
     \end{align}
\end{proposition}
\begin{proof}
    First, we show that $K$ and $\eta$ from eqs.~(\ref{def:memory_kernel}-\ref{def:fluctuating_forces}) solve the system of equations (\ref{equ:gle}-\ref{equ:fdt}). 
    Since $\{\mathcal{U}(t)\}_{t\geq 0}$ is strongly continuous, the orbit map $t\to \mathcal{U}(t)x$ is continuous for all $x\in H$. Hence, by the continuity of the scalar product, the coefficient functions of the Volterra equation $(\ref{def:memory_kernel})$ are continuous. Thus, there exists a unique continuous solution $K$ \cite[p.~25, Theorem~2.1.1]{burton}, i.e., $K$ is well-defined. Hence, the integral from eq.~(\ref{def:fluctuating_forces}) in the sense of Bochner \cite{Aliprantis2006, hille} is well-defined, and the integral interchanges with the scalar product \cite[p.~650, Theorem~8]{evans}. 

    Computing the time derivative of the orbit map $t\to\mathcal{U}(t)z$ immediately yields the \textit{generalized Langevin equation} (\ref{equ:gle})
    \begin{align*}
        \frac{d}{dt}\mathcal{U}(t)z &= \mathcal{U}(t)\mathcal{L}z \\
        &= \mathcal{U}(t)\mathcal{PL}z + \mathcal{U}(t)\mathcal{QL}z \\
        &\overset{(\ref{def:fluctuating_forces})}{=} \mathcal{U}(t)\mathcal{PL}z +\eta_t + \int^t_0 K(t-s) \mathcal{U}(s)z \, ds \, .
    \end{align*}
    Further, the \textit{second fluctuation dissipation theorem} (\ref{equ:fdt}) holds:
    \begin{align*}
        (\eta_t,\mathcal{Q}\mathcal{L}^\dagger z) &\overset{(\ref{def:fluctuating_forces})}{=} (\mathcal{U}(t)\mathcal{QL}z,\mathcal{Q}\mathcal{L}^\dagger z) - \left(\int^t_0 K(t-s)\mathcal{U}(s)z ds, \mathcal{Q}\mathcal{L}^\dagger z \right)  \\
    &= (\mathcal{U}(t)\mathcal{QL}z,\mathcal{Q}\mathcal{L}^\dagger z) - \int^t_0 K(t-s)(\mathcal{U}(s)z,\mathcal{Q}\mathcal{L}^\dagger z) ds \ \\
        &\overset{(\ref{def:memory_kernel})}{=} K(t)(z,z) \, .
    \end{align*}

    It is left to show that $K$ and $\eta$ are the only solution of the system of equations (\ref{equ:gle}-\ref{equ:fdt}). Clearly, the GLE (\ref{equ:gle}) holds if and only if $\eta$ is given by eq.~(\ref{def:fluctuating_forces}). Furthermore, inserting $\eta$ into eq.~(\ref{equ:fdt}) yields the Volterra equation (\ref{def:memory_kernel}). Hence, the assertion follows from the uniqueness of the solution of eq.~(\ref{def:memory_kernel}) \cite[p.~25, Theorem~2.1.1]{burton}.
\end{proof}

\begin{remark*}
    Note that proposition \ref{proposition:gle} immediately yields the GLE and 2FDT, except that it does not provide an explicit formula for the fluctuating forces in terms of the orthogonal dynamics. In fact, proposition \ref{proposition:gle} remains valid for arbitrary orthogonal projections. However, eq.~(\ref{equ:fdt}) corresponds to the expected form of the 2FDT only for rank-one projections. In any other case, the memory kernel and fluctuating forces as defined in proposition \ref{proposition:gle} will not be related to the orthogonal dynamics. 
\end{remark*}

Before we proceed, let us convince ourselves that the fluctuating forces $\eta_t$ from proposition \ref{proposition:gle} are indeed orthogonal to the observable of interest $z$. 

\begin{corollary}\label{corolloary:orthogonality}
    Under the assumptions of proposition \ref{proposition:gle}, we have
    \begin{align*}
        (\eta_t,z) &= 0 \, .
    \end{align*}
\end{corollary}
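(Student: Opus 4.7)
The plan is to show that $F(t):=(\eta_t,z)$ satisfies a homogeneous linear first-order ODE with zero initial value, from which $F\equiv 0$ follows immediately. The key is that the Mori projection has one-dimensional range, so all scalar products with $z$ can be rewritten in terms of the single function $C(t):=(\mathcal{U}(t)z,z)$ and its derivatives.

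First I would record the auxiliary identities that make the Mori projection transparent. Setting $\Omega:=(\mathcal{L}z,z)(z,z)^{-1}$, one has $\mathcal{P}\mathcal{L}z=\Omega z$; because $\mathcal{P}$ is orthogonal and therefore self-adjoint, one also has $\mathcal{P}\mathcal{L}^\dagger z=\overline{\Omega}\,z$, and hence $\mathcal{Q}\mathcal{L}^\dagger z=\mathcal{L}^\dagger z-\overline{\Omega}\,z$. The assumption $z\in D(\mathcal{L})$ gives $C\in C^1$ with $C'(t)=(\mathcal{U}(t)\mathcal{L}z,z)$, and the assumption $z\in D(\mathcal{L}^\dagger)$, combined with the dual identity $\frac{d}{dt}(\mathcal{U}(t)y,z)=(\mathcal{U}(t)y,\mathcal{L}^\dagger z)$ applied to $y=\mathcal{L}z$, yields $C\in C^2$ with $C''(t)=(\mathcal{U}(t)\mathcal{L}z,\mathcal{L}^\dagger z)$.

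Next I would pair the definition (\ref{def:fluctuating_forces}) with $z$ to obtain
\[
F(t)=C'(t)-\Omega C(t)-\int_0^t K(t-s)C(s)\,ds,
\]
with $F(0)=(\mathcal{L}z,z)-\Omega(z,z)=0$. Differentiating $F$ (using $C\in C^2$ and $K$ continuous) and then eliminating $K(t)(z,z)$ by means of the Volterra equation (\ref{def:memory_kernel})---whose two inner products reduce via the identities $(\mathcal{U}(s)z,\mathcal{Q}\mathcal{L}^\dagger z)=C'(s)-\Omega C(s)$ and $(\mathcal{U}(t)\mathcal{Q}\mathcal{L}z,\mathcal{Q}\mathcal{L}^\dagger z)=C''(t)-2\Omega C'(t)+\Omega^{2} C(t)$---causes the $C''$, $\Omega C'$ and plain $K\ast C'$ terms to telescope, leaving exactly $F'(t)=\Omega F(t)$. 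Uniqueness of this scalar linear ODE with $F(0)=0$ then gives $F\equiv 0$.

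The hard part will be the telescoping in the last step: the cancellations require the specific coefficients $2\Omega$ and $\Omega^{2}$ in the expansion of $(\mathcal{U}(t)\mathcal{Q}\mathcal{L}z,\mathcal{Q}\mathcal{L}^\dagger z)$, which in turn depend crucially on $\mathcal{P}$ being self-adjoint (so that $\mathcal{P}\mathcal{L}^\dagger z$ is a scalar multiple of $z$). The hypothesis $z\in D(\mathcal{L}^\dagger)$ is used precisely to ensure that $C''$ exists, so that this expansion makes sense and can be substituted into $F'$.
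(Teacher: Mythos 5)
Your proposal is correct and follows essentially the same route as the paper: both reduce the claim to the scalar linear ODE $\frac{d}{dt}(\eta_t,z)=\Omega\,(\eta_t,z)$ with initial value $(\eta_0,z)=(\mathcal{QL}z,z)=0$, using the differentiability of $t\mapsto(\mathcal{U}(t)x,z)$ coming from $z\in D(\mathcal{L}^\dagger)$, and conclude $(\eta_t,z)\equiv 0$. The only difference is cosmetic: the paper eliminates $K(t)(z,z)$ by invoking the already-established relation (\ref{equ:fdt}), whereas you substitute the Volterra equation (\ref{def:memory_kernel}) directly and telescope in terms of $C(t)=(\mathcal{U}(t)z,z)$ — the same step in disguise, since (\ref{def:memory_kernel}) and (\ref{equ:fdt}) are equivalent by construction.
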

\begin{proof}
    Since $H$ is reflexive, the adjoint semigroup $\{\mathcal{U}(t)^\dagger\}_{t\geq 0}$ is strongly continuous with generator $\mathcal{L}^\dagger$ \cite{vanNeerven1992,engel}. Since $z\in D(\mathcal{L}^\dagger)$, for all $x\in H$ the map $t\to (\mathcal{U}(t)x,z)=(x,\mathcal{U}(t)^\dagger z)$ is continuously differentiable with derivative $t\to (\mathcal{U}(t)x,\mathcal{L}^\dagger z)$. Hence, we have
    \begin{align*}
        \frac{d}{dt}(\eta_t,z) &\overset{(\ref{def:fluctuating_forces})}{=} \frac{d}{dt} (\mathcal{U}(t)\mathcal{QL}z,z) - \frac{d}{dt}\int^t_0 K(s)(\mathcal{U}(t-s)z,z) ds \\
        &= (\eta_t,\mathcal{L}^\dagger z) -K(t)(z,z) \\
        &\overset{(\ref{equ:fdt})}{=} (\eta_t,\mathcal{L}^\dagger z)-(\eta_t,\mathcal{Q}\mathcal{L}^\dagger z) \\
        &= (\eta_t,\mathcal{P}\mathcal{L}^\dagger z) \\
        &= (\eta_t,z)(\mathcal{L}^\dagger z,z)^\ast(z,z)^{-1} \, ,    \end{align*}
    where we applied the Leibniz rule. This implies
    \begin{align*}
        (\eta_t,z) &= (\eta_0,z) e^{t (\mathcal{L}^\dagger z,z)^\ast(z,z)^{-1}} \, .
    \end{align*}
    However, since $\mathcal{Q}$ is the orthogonal projection onto $\text{span}(z)^\perp$, we have
    \begin{align*}
        (\eta_0,z) &= (\mathcal{Q}\mathcal{L}z,z) = 0 \, .
    \end{align*}
    Hence, we conclude $(\eta_t,z)=0$ for all $t\geq 0$. 
\end{proof}

Next, let us introduce the abstract Cauchy problem for the orthogonal dynamics. For $x\in H$ and $\mathbb{R}_+\ni t \to u(t) \in H$, the initial value problem
\begin{align}
    \begin{cases}
        \frac{d}{dt} u(t) = \overline{\mathcal{QL}}\mathcal{Q} u(t)\, , & t\geq0 \\ 
         u(0)=x \, , \\
    \end{cases} \label{def:initial_value_problem} 
\end{align}
is called abstract Cauchy problem associated to $\overline{\mathcal{QL}}\mathcal{Q}$ \cite[p.~145]{engel}.  

\begin{definition}[Mild solutions]
    Let $x\in H$. Suppose $\mathbb{R}_+\ni t \to u(x,t)\in H$  is continuous and satisfies
    \begin{align}
    \begin{cases}
        u(x,t) = x+\overline{\mathcal{QL}}\mathcal{Q}\int^t_0 u(x,s) \,  ds \, , & t\geq0, \\
           u(x,0)=x \,.
    \end{cases} \label{def:acp} 
    \end{align}
    Then, $u(x,\cdot)$ is called \emph{mild solution} of the abstract Cauchy problem associated to $\overline{\mathcal{QL}}\mathcal{Q}$ \cite[p.~146]{engel}. 
\end{definition}

Recall that the existence and uniqueness of mild solutions for all initial values $x\in H$ is guaranteed, provided that $\overline{\mathcal{QL}}\mathcal{Q}$ is the generator of a strongly continuous semigroup \cite[p.~146, Proposition~6.4]{engel}. Hence, we have to show that $\overline{\mathcal{QL}}\mathcal{Q}$ generates a strongly continuous semigroup that constitutes the desired orthogonal dynamics. In view of proposition \ref{proposition:gle}, it is clear that the orbit maps of the orthogonal dynamics can be defined via unique continuous solutions of linear Volterra equations. Afterwards, all desired properties of the orthogonal dynamics can be proven directly from its definition. 

\begin{theorem}\label{theorem:mild_solution}
    Let $z\in D(\mathcal{L}^\dagger)$ with $z\neq 0$ and $x\in H$. Let $\mathbb{R}_+\ni t \to f(x,t) \in \mathbb{C}$ be defined by the unique continuous solution of the Volterra equation 
    \begin{align}
        f(x,t) &= (\mathcal{U}(t)\mathcal{Q}x,\mathcal{Q}\mathcal{L}^\dagger z)(z,z)^{-1}-\int^t_0 f(x,t-s)(\mathcal{U}(s)z,\mathcal{Q}\mathcal{L}^\dagger z)(z,z)^{-1} \, ds \, . \label{def:f}
    \end{align}
    Let $\mathbb{R}_+ \ni t \to u(x,t) \in H$ be defined according to
    \begin{align}
                u(x,t) &:=\mathcal{P}x + \mathcal{U}(t)\mathcal{Q}x - \int^t_0 f(x,t-s)\mathcal{U}(s)z \, ds \, . \label{def:u}
    \end{align}
    Then, $u(x,\cdot)$ is a mild solution of the abstract Cauchy problem associated to $\overline{\mathcal{QL}}\mathcal{Q}$.  
\end{theorem}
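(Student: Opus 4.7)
My plan is to split the proof into four parts: well-definedness and continuity of $u(x,\cdot)$, the initial condition, the adjoint identity $\overline{\mathcal{QL}}\mathcal{Q}=(\mathcal{Q}\mathcal{L}^\dagger\mathcal{Q})^\dagger$, and the mild-solution integral equation. The first two are routine: the coefficient functions of the scalar Volterra equation~(\ref{def:f}) are continuous by strong continuity of $\{\mathcal{U}(t)\}_{t\geq 0}$, so the Burton theorem invoked in the proof of theorem~\ref{theorem:gle} yields a unique continuous $f(x,\cdot)$; continuity of $u(x,\cdot)$ then follows from continuity of the Bochner integral in its upper limit, and $u(x,0)=\mathcal{P}x+\mathcal{Q}x=x$ is immediate.

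For the adjoint identity I would use standard adjoint calculus. Since $\mathcal{Q}$ is bounded and self-adjoint and $D(\mathcal{QL})=D(\mathcal{L})$ is dense, one has $(\mathcal{QL})^\dagger=\mathcal{L}^\dagger\mathcal{Q}$. The latter is densely defined because $z\in D(\mathcal{L}^\dagger)$ forces $\mathcal{P}H=\text{span}(z)\subseteq D(\mathcal{L}^\dagger)$, so $\mathcal{Q}$ preserves $D(\mathcal{L}^\dagger)$ and hence $D(\mathcal{L}^\dagger\mathcal{Q})\supseteq D(\mathcal{L}^\dagger)$. Therefore $\mathcal{QL}$ is closable with $\overline{\mathcal{QL}}=(\mathcal{L}^\dagger\mathcal{Q})^\dagger$, and writing $\mathcal{Q}\mathcal{L}^\dagger\mathcal{Q}=\mathcal{Q}\cdot(\mathcal{L}^\dagger\mathcal{Q})$ together with the Rudin rule for left-composition by a bounded operator (already used in the proof of theorem~\ref{theorem:gle_2}) gives $(\mathcal{Q}\mathcal{L}^\dagger\mathcal{Q})^\dagger=(\mathcal{L}^\dagger\mathcal{Q})^\dagger\mathcal{Q}=\overline{\mathcal{QL}}\mathcal{Q}$.

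The main work is the integral equation. Using the adjoint identity, it suffices to verify the weak form $(u(x,t)-x,y)=(\int_0^t u(x,s)\,ds,\mathcal{Q}\mathcal{L}^\dagger\mathcal{Q}y)$ for every $y\in D(\mathcal{Q}\mathcal{L}^\dagger\mathcal{Q})=\{y:\mathcal{Q}y\in D(\mathcal{L}^\dagger)\}$. Because $z\in D(\mathcal{L}^\dagger)$ also gives $\mathcal{P}y\in D(\mathcal{L}^\dagger)$, we in fact have $y\in D(\mathcal{L}^\dagger)$ for every such $y$, so the identity $\tfrac{d}{dt}(\mathcal{U}(t)v,y)=(\mathcal{U}(t)v,\mathcal{L}^\dagger y)$ is available exactly as used in corollary~\ref{corolloary:orthogonality}. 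Differentiating both sides of the weak identity in $t$, substituting~(\ref{def:f}) to eliminate $f(x,t)(z,z)$, and splitting $\mathcal{L}^\dagger y=\mathcal{L}^\dagger\mathcal{P}y+\mathcal{L}^\dagger\mathcal{Q}y$, everything cancels except a residue proportional to the scalar $A(t):=(\mathcal{U}(t)\mathcal{Q}x,z)-\int_0^t f(x,t-s)(\mathcal{U}(s)z,z)\,ds$, which is, up to a constant, the would-be drift of the parallel component of $u$.

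I expect this consistency---namely $\mathcal{P}u(x,t)=\mathcal{P}x$, in line with the fact that $\overline{\mathcal{QL}}\mathcal{Q}$ annihilates $\mathrm{range}(\mathcal{P})$---to be the most delicate step, since it is not visible from the definition of $u$. I would prove $A\equiv 0$ directly by differentiating $A$ once and using~(\ref{def:f}) a second time to replace $f(x,t)(z,z)$; the resulting expression collapses to the scalar linear ODE $A'(t)=(z,\mathcal{L}^\dagger z)(z,z)^{-1}A(t)$, and the initial datum $A(0)=(\mathcal{Q}x,z)=(x,\mathcal{Q}z)=0$ vanishes because $\mathcal{Q}z=0$; uniqueness then forces $A\equiv 0$. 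This closes the weak identity for every admissible $y$, and since $y\mapsto (u(x,t)-x,y)$ is bounded on $H$, the Hilbert-space adjoint characterization upgrades it to $\int_0^t u(x,s)\,ds\in D(\overline{\mathcal{QL}}\mathcal{Q})$ with $\overline{\mathcal{QL}}\mathcal{Q}\int_0^t u(x,s)\,ds=u(x,t)-x$, establishing the mild-solution property.
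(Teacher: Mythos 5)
Your proposal is correct and follows essentially the same route as the paper: well-posedness of the scalar Volterra equation, the adjoint identity via Rudin's rule for bounded left factors together with the closure theorem (so that $(\mathcal{Q}\mathcal{L}^\dagger\mathcal{Q})^\dagger=\overline{\mathcal{QL}}\mathcal{Q}$), a scalar ODE argument killing the $z$-component $A(t)=(u(x,t)-\mathcal{P}x,z)$ with vanishing initial datum, and a weak formulation against a dense domain upgraded to the mild-solution identity by adjointness. The only cosmetic differences are that you test against $D(\mathcal{Q}\mathcal{L}^\dagger\mathcal{Q})$ and close the argument with the bare adjoint characterization, where the paper tests against $D(\mathcal{L}^\dagger\mathcal{Q})$, cites Ball's lemma and then applies $\mathcal{Q}$, and that you bypass the bounded perturbation theorem, which the paper's own remark notes is dispensable.
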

\begin{proof}
Note that the Volterra equation (\ref{def:f}) has continuous coefficient functions. Thus, $f(x,\cdot)$ is well-defined via the unique continuous solution \cite[p.~25, Theorem~2.1.1]{burton}. In particular, $u(x,\cdot)$ is continuous. Furthermore, $f(x,\cdot)$ and $f(\mathcal{Q}x,\cdot)$ solve the same Volterra equation, i.e., $f$ does not depend on $\mathcal{P}x$. More explicitly, with $x_\perp := \mathcal{Q}x$ we have
\begin{align}
    f(x,t) &= f(x_\perp,t) \, , \\
    u(x,t) &= \mathcal{P}x + u(x_\perp,t) \, . \label{equ:u_equals}
\end{align}
Moreover, we have
\begin{align}
    f(x,t) &\overset{(\ref{def:f})(\ref{def:u})}{=} (u(x,t),\mathcal{Q}\mathcal{L}^\dagger z)(z,z)^{-1} \, . \label{equ:f_equals}
\end{align}
In the following, we require the derivative of $t\to (u(x_\perp,t),x')$, where $x'\in D(\mathcal{L}^\dagger)$. 
\begin{align}
    \frac{d}{dt} (u(x_\perp,t),x') &\overset{(\ref{def:u})}{=} \frac{d}{dt} (\mathcal{U}(t)x_\perp,x') - \frac{d}{dt}\int^t_0 f(x_\perp,s)(\mathcal{U}(t-s)z,x') \, ds \nonumber \\
    &= (\mathcal{U}(t)x_\perp,\mathcal{L}^\dagger x') - \int^t_0 f(x_\perp,s)(\mathcal{U}(t-s)z,\mathcal{L}^\dagger x') \, ds- f(x_\perp,t)(z,x') \nonumber \\
    &\overset{(\ref{def:u})}{=}  (u(x_\perp,t),\mathcal{L}^\dagger x')- f(x_\perp,t)(z,x') \label{equ:diff_orbit_map} \, .
\end{align}
In the same way as in the proof of corollary \ref{corolloary:orthogonality}, we show that $u(x_\perp,t) \in \mathcal{Q}H$: Since
    \begin{align*}
      (u(x_\perp,0),z) &= 0\, \intertext{and}
                \frac{d}{dt} (u(x_\perp,t),z) &\overset{(\ref{equ:diff_orbit_map})}{=} (u(x_\perp,t),\mathcal{L}^\dagger z) - f(x_\perp,t) (z,z) \\
        &\overset{(\ref{equ:f_equals})}{=} (u(x_\perp,t),\mathcal{L}^\dagger z) - (u(x_\perp,t),\mathcal{Q}\mathcal{L}^\dagger z) \\
        &= (u(x_\perp,t),\mathcal{P}\mathcal{L}^\dagger z) \\
        &= (u(x_\perp,t),z) (\mathcal{L}^\dagger z,z)^\ast (z,z)^{-1} \, , 
    \end{align*}
we conclude
    \begin{align}
        (u(x_\perp,t),z) &= 0  \, , \label{equ:u_remains_in_QH}
    \end{align}
for all $t\geq 0$. Hence, for all $y\in D(\mathcal{L}^\dagger\mathcal{Q})=\{y\in H: \mathcal{Q}y \in D(\mathcal{L}^\dagger)\}$, we have
    \begin{align*}
        \frac{d}{dt} (u(x_\perp,t),y) &\overset{(\ref{equ:u_remains_in_QH})}{=} \frac{d}{dt} (u(x_\perp,t),\mathcal{Q}y) \\
        &\overset{(\ref{equ:diff_orbit_map})}{=} (u(x_\perp,t),\mathcal{L}^\dagger \mathcal{Q} y)- f(x_\perp,t)(z,\mathcal{Q} y) \\
        &= (u(x_\perp,t),\mathcal{L}^\dagger \mathcal{Q} y) \, .
    \end{align*}
For all $y\in D(\mathcal{L}^\dagger\mathcal{Q})$, this implies
\begin{align*}
    (u(x_\perp,t),y) &= (u(x_\perp,0),y) + \left( \int^t_0 u(x_\perp,s) \,ds , \mathcal{L}^\dagger \mathcal{Q}  y\right) \, .
\end{align*}
According to proposition \ref{proposition:props_of_generator}, the operator $\mathcal{L}^\dagger\mathcal{Q}=\overline{\mathcal{QL}}^\dagger$ is densely defined and closed. Thus, it follows from a Lemma from John Ball \cite{ball} that $ \int^t_0 u(x_\perp,s) \,ds \in D(\overline{\mathcal{QL}})$ and 
\begin{align*}
    u(x_\perp,t) &= u(x_\perp,0) + \overline{\mathcal{QL}} \int^t_0 u(x_\perp,s) \,ds  \, .
\end{align*}
Since $\mathcal{Q}$ is bounded, $\mathcal{Q}$ interchanges with the Bochner integral \cite[p.~427, Lemma~11.45]{Aliprantis2006}. Since $u(x_\perp,t)=\mathcal{Q}u(x,t)$, we obtain
\begin{align*}
    u(x_\perp,t) &= u(x_\perp,0) + \overline{\mathcal{QL}}\mathcal{Q} \int^t_0 u(x,s) \,ds \, .
\end{align*}
Adding $\mathcal{P}x$ on both sides yields
\begin{align*}
    u(x,t) &\overset{(\ref{equ:u_equals})}{=} u(x,0) + \overline{\mathcal{QL}}\mathcal{Q} \int^t_0 u(x,s) \,ds \, ,
\end{align*}
for all $t\geq 0$. Therefore, $u(x,\cdot)$ is a mild solution of the abstract Cauchy problem associated to $\overline{\mathcal{QL}}\mathcal{Q}$. 
\end{proof}

For $z\in D(\mathcal{L}^\dagger)$ and $x\in H$, we have now established the fact that the orbit map of the orthogonal dynamics $u(x,\cdot)$ as defined in theorem \ref{theorem:mild_solution} is indeed a mild solution of the abstract Cauchy problem associated to $\overline{\mathcal{QL}}\mathcal{Q}$. Next, we show that $\overline{\mathcal{QL}}\mathcal{Q}$ generates the strongly continuous semigroup $\{u(\cdot,t)\}_{t\geq0}$. 

\begin{theorem}\label{theorem:strongly_continuous_semigroup}
    Let $z\in D(\mathcal{L}^\dagger)$ with $z\neq 0$ and let $u(x,t)$ be defined as in theorem \ref{theorem:mild_solution}. Then, $\{\mathcal{G}(t):=u(\cdot,t)\}_{t\geq0}$ is a strongly continuous semigroup of bounded linear operators on $H$.
\end{theorem}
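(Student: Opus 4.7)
My plan is to verify the defining properties of a strongly continuous semigroup---linearity of each $\mathcal{G}(t)$, boundedness, $\mathcal{G}(0)=I$, strong continuity, and the semigroup law---one at a time, using the explicit Volterra representation from theorem~\ref{theorem:mild_solution}. Four of these are routine. Linearity of $x\mapsto\mathcal{G}(t)x$ follows from the fact that the forcing term $(\mathcal{U}(t)\mathcal{Q}x,\mathcal{Q}\mathcal{L}^\dagger z)(z,z)^{-1}$ of the Volterra equation (\ref{def:f}) depends linearly on $x$ while its kernel does not, so uniqueness of the continuous solution forces $f(x,t)$, and hence $\mathcal{G}(t)x$ via (\ref{def:u}), to be linear in $x$. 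Boundedness on compact intervals $[0,T]$ follows from a Gronwall estimate applied to (\ref{def:f}), which gives $|f(x,t)|\leq C(T)\|x\|$ and hence $\|\mathcal{G}(t)x\|\leq M(T)\|x\|$. The identity $\mathcal{G}(0)x=\mathcal{P}x+\mathcal{U}(0)\mathcal{Q}x=x$ is immediate from (\ref{def:u}), and strong continuity of $t\mapsto\mathcal{G}(t)x$ is inherited from the continuity of each mild solution $u(x,\cdot)$.

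The main obstacle is the semigroup law, which I would deduce via the standard time-shift argument once uniqueness of mild solutions of the abstract Cauchy problem associated to $\overline{\mathcal{QL}}\mathcal{Q}$ is known. To prove uniqueness, let $w$ be the difference of two mild solutions sharing an initial datum, so that $w(0)=0$ and $w(t)=\overline{\mathcal{QL}}\mathcal{Q}\int_0^t w(s)\,ds$. Taking the inner product with $z$ and using $\overline{\mathcal{QL}}=[\mathcal{L}^\dagger\mathcal{Q}]^\dagger$ from (\ref{equ:L_dagger_Q_dagger}) together with $\mathcal{Q}z=0$ yields $(w(t),z)=(\mathcal{Q}\!\int_0^t w\,ds,\,\mathcal{L}^\dagger\mathcal{Q}z)=0$, so $\mathcal{P}w(t)=0$ and $w(t)\in\mathcal{Q}H$. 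Since $\mathcal{Q}$ commutes with the Bochner integral, the mild-solution equation collapses to $w(t)=\overline{\mathcal{QL}}\int_0^t w(s)\,ds$ with initial value $0$. Now $\mathcal{L}^\dagger\mathcal{Q}$ generates a $C_0$-semigroup on $H$ by the bounded perturbation argument already used in theorem~\ref{theorem:mild_solution}, and since $H$ is reflexive the adjoint $(\mathcal{L}^\dagger\mathcal{Q})^\dagger=\overline{\mathcal{QL}}$ generates the corresponding adjoint $C_0$-semigroup. Uniqueness of mild solutions for its Cauchy problem then forces $w\equiv 0$.

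With uniqueness in hand the semigroup law is immediate: for fixed $x\in H$ and $s\geq 0$, both $t\mapsto u(x,t+s)$ and $t\mapsto u(u(x,s),t)$ are continuous, agree at $t=0$ with the value $u(x,s)$, and satisfy the mild-solution equation for $\overline{\mathcal{QL}}\mathcal{Q}$---the former by splitting $\int_0^{t+s}=\int_0^s+\int_s^{t+s}$ in the mild-solution identity of $u(x,\cdot)$ and subtracting, using linearity of $\overline{\mathcal{QL}}\mathcal{Q}$. Uniqueness forces them to coincide, giving $\mathcal{G}(t+s)=\mathcal{G}(t)\mathcal{G}(s)$ and completing the proof.
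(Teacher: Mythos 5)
Your proof is correct, and all the ingredients you invoke are available at this point in the paper, but your treatment of the key step (the semigroup law) follows a genuinely different route. The paper proves the semigroup property by a direct computation in the Volterra representation: applying $\mathcal{U}(s)$ to $u(x_\perp,t)$ and shifting the integration variable gives $u(x_\perp,t+s)=u(u(x_\perp,t),s)+\int_0^s[f(u(x_\perp,t),s-r)-f(x_\perp,t+s-r)]\mathcal{U}(r)z\,dr$, after which Gr\"onwall's inequality forces $f(u(x_\perp,t),s)=f(x_\perp,t+s)$, and the general case follows from $u(x,t)=\mathcal{P}x+u(\mathcal{Q}x,t)$ with $u(\mathcal{Q}x,t)\in\mathcal{Q}H$. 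You instead first establish uniqueness of mild solutions of the abstract Cauchy problem for $\overline{\mathcal{QL}}\mathcal{Q}$: projecting the difference $w$ of two solutions onto $\mathrm{span}(z)$ (using $\overline{\mathcal{QL}}^\dagger=\mathcal{L}^\dagger\mathcal{Q}$ and $\mathcal{Q}z=0$) shows $w(t)\in\mathcal{Q}H$, so $w$ is a mild solution for $\overline{\mathcal{QL}}$, which generates a $C_0$-semigroup as the adjoint (by reflexivity of $H$) of the semigroup generated by $\mathcal{L}^\dagger\mathcal{Q}$ from the bounded perturbation theorem; uniqueness for a generator's mild solutions then gives $w\equiv 0$, and the semigroup law follows by comparing $t\mapsto u(x,t+s)$ with $t\mapsto u(u(x,s),t)$. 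Your argument is sound (in particular, the splitting $\int_0^{t+s}=\int_0^s+\int_s^{t+s}$ and the linearity of the closed operator on its domain make $t\mapsto u(x,t+s)$ a mild solution with initial datum $u(x,s)$), and it is careful on the one point that could cause circularity: you do not assume $\overline{\mathcal{QL}}\mathcal{Q}$ generates, which is only proved afterwards in corollary \ref{corollary:generator}. What each approach buys: the paper's computation is elementary and self-contained within Volterra/Gr\"onwall estimates, needing no well-posedness statement at this stage; your route is more abstract, leans on the adjoint-semigroup machinery the paper already uses elsewhere, and yields as a by-product the uniqueness of mild solutions for $\overline{\mathcal{QL}}\mathcal{Q}$, which in the paper only becomes available after corollary \ref{corollary:generator}. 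Your remaining steps (linearity from uniqueness of the Volterra solution, Gr\"onwall bound $|f(x,t)|\leq C(T)\|x\|$ giving boundedness, $\mathcal{G}(0)=1$, strong continuity from continuity of the mild solutions) match the paper's in substance.
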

\begin{proof}
    By theorem \ref{theorem:mild_solution}, the map $t\to u(x,t)$ is a mild solution for all $x\in H$. Thus, $\{u(\cdot,t)\}_{t\geq0}$ is strongly continuous. Let $x,y\in H$ and $\alpha\in \mathbb{C}$ arbitrary. It is easy to show that $t\to f(x,t)+\alpha f(y,t)$ and $t\to f(x+\alpha y,t)$ solve the same Volterra equation. Hence, by the uniqueness of a solution, we have for all $t\geq 0$
    \begin{align*}
        f(x+\alpha y,t) &= f(x,t)+\alpha f(y,t) \, .
    \end{align*}
    This implies that $\{u(\cdot,t)\}_{t\geq0}$ is a strongly continuous family of linear operators. We have to show that $u(\cdot,t)$ is bounded for all $t\geq 0$. There exist constants $M\geq 1$, $\omega_0\geq 0$ such that $\|\mathcal{U}(t)\| \leq Me^{\omega_0 t}$ for all $t\geq 0$ \cite[p.~39, Proposition~5.5]{engel}. Hence, for all $x\in H$ and $t\geq 0$, we have
    \begin{align*}
        \|u(x,t)\| &\overset{(\ref{def:u})}{=} \|\mathcal{P}x+\mathcal{U}(t)\mathcal{Q}x-\int^t_0 f(x,t-s)\mathcal{U}(s)z \, ds\| \\
        &\leq \|x\|+ \|\mathcal{U}(t)\mathcal{Q}x\| + \int^t_0 |f(x,t-s)| \|\mathcal{U}(s)z\| \, ds \\
        &\leq (1+Me^{\omega_0 t})\|x\| + Me^{\omega_0t}\|z\|\int^t_0 |f(x,s)| \, ds \\
        &\overset{(\ref{equ:f_equals})}{=} (1+Me^{\omega_0 t})\|x\| + Me^{\omega_0t}\|z\|\int^t_0 |(u(x,s),\mathcal{Q}\mathcal{L}^\dagger z)(z,z)^{-1}| \, ds \\ 
        &\leq (1+Me^{\omega_0 t})\|x\| + \frac{Me^{\omega_0t}\|\mathcal{L}^\dagger z\|}{\|z\|}\int^t_0 \|u(x,s)\| \, ds \\ 
        &\leq (1+Me^{\omega_0 t})\|x\| \exp\left\{ \frac{tMe^{\omega_0 t}\|\mathcal{L}^\dagger z\|}{\|z\| } \right\} \, , 
    \end{align*}
    where we used the Cauchy-Schwarz inequality in the penultimate line and Grönwall's inequality at the end. Thus, $\{u(\cdot,t)\}_{t\geq0}$ is a strongly continuous family of bounded linear operators on $H$. 
    
    It is left to show that $\{u(\cdot,t)\}_{t\geq 0}$ is a semigroup. Let $x_\perp\in \mathcal{Q}H$ arbitrary. Note that
    \begin{align*}
        \mathcal{U}(s)u(x_\perp,t) &\overset{(\ref{def:u})}{=} \mathcal{U}(s+t)x_\perp-\int^t_0 f(x_\perp,t-r)\mathcal{U}(s+r)z \, dr \\
        &= \mathcal{U}(s+t)x_\perp-\int^{t+s}_s f(x_\perp,t+s-r)\mathcal{U}(r)z \, dr \\
        &\overset{(\ref{def:u})}{=} u(x_\perp,t+s) +\int^s_0 f(x_\perp,t+s-r)\mathcal{U}(r)z \, dr \, , \intertext{hence}
        u(x_\perp,t+s) &= \mathcal{U}(s)u(x_\perp,t)-\int^s_0 f(x_\perp,t+s-r)\mathcal{U}(r)z \, dr \, .
    \end{align*}
    Since $u(x_\perp,t)\in\mathcal{Q}H$, this implies
    \begin{align}
        u(x_\perp,t+s) &\overset{(\ref{def:u})}{=} u(u(x_\perp,t),s) +\int^s_0 [f(u(x_\perp,t),s-r)-f(x_\perp,t+s-r)]\mathcal{U}(r)z \, dr \, . \label{equ:semigroup_property}
    \end{align}
    Furthermore, we have
    \begin{align*}
        &|f(u(x_\perp,t),s)-f(x_\perp,t+s)| \\
        &\quad \overset{(\ref{equ:f_equals})}{=} |(u(u(x_\perp,t),s)-u(x_\perp,t+s),\mathcal{QL^\dagger}z)(z,z)^{-1}| \\
        &\quad \leq \frac{\|\mathcal{L}^\dagger z\|}{\|z\|^2}\|u(u(x_\perp,t),s)-u(x_\perp,t+s)\| \\
        &\quad \overset{(\ref{equ:semigroup_property})}{\leq}  \frac{\|\mathcal{L}^\dagger z\|}{\|z\|^2} \int^s_0 |f(u(x_\perp,t),s-r)-f(x_\perp,t+s-r)| \|\mathcal{U}(r)z\| \, dr \\
        &\quad \leq \frac{Me^{\omega_0 s}\|\mathcal{L}^\dagger z\|}{\|z\|} \int^s_0 |f(u(x_\perp,t),r)-f(x_\perp,t+r)| \, dr \, .
    \end{align*}
    Again, Grönwall's inequality yields
    \begin{align*}
                |f(u(x_\perp,t),s)-f(x_\perp,t+s)| &= 0 \, ,
    \end{align*}
    for all $x_\perp\in \mathcal{Q}H$ and $t,s\geq 0$. Thus, eq.~(\ref{equ:semigroup_property}) simplifies to
    \begin{align}
        u(x_\perp,t+s) &=u(u(x_\perp,t),s) \, , \label{equ:semigroup_property_2}
    \end{align}
    for all $x_\perp\in \mathcal{Q}H$ and $t,s\geq 0$. Recall from eqs.~(\ref{equ:u_equals}), (\ref{equ:u_remains_in_QH}) that for all $x\in H$ and $t\geq 0$, we have $u(x,t)=\mathcal{P}x+u(\mathcal{Q}x,t)$, and $u(\mathcal{Q}x,t)\in \mathcal{Q}H$. This implies
    \begin{align*}
        u(x,t+s) &= \mathcal{P}x+u(\mathcal{Q}x,t+s) \\
         &\overset{(\ref{equ:semigroup_property_2})}{=} \mathcal{P}x+u(u(\mathcal{Q}x,t),s) \\
        &= \mathcal{P}u(x,t)+u(\mathcal{Q}u(x,t),s) \\
         &= u(u(x,t),s) \, ,
    \end{align*}
    for all $x \in H$ and $t,s\geq 0$. Thus, $\{u(\cdot,t)\}_{t\geq 0}$ is a strongly continuous semigroup of bounded linear operators on $H$.
\end{proof}

\begin{corollary}\label{corollary:generator}
    Let $z\in D(\mathcal{L}^\dagger)$ with $z\neq0$ and let $u(x,t)$ be defined as in theorem \ref{theorem:mild_solution}. Then, $\overline{\mathcal{QL}}\mathcal{Q} =\mathcal{QLQ}$ generates the strongly continuous semigroup $\{\mathcal{G}(t):=u(\cdot,t)\}_{t\geq 0}$.
\end{corollary}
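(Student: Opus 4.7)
The plan is to identify the generator of $\{\mathcal{G}(t)\}_{t\geq 0}$ from Theorem~\ref{theorem:strongly_continuous_semigroup} with $\overline{\mathcal{QL}}\mathcal{Q}$. Writing $(A,D(A))$ for the generator of this semigroup, the argument splits into two steps: first $A\subseteq\overline{\mathcal{QL}}\mathcal{Q}$ via closedness, then the reverse inclusion via a resolvent comparison.

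For the first inclusion, I would argue as follows. For $x\in D(A)$ and $t>0$, the mild-solution identity from Theorem~\ref{theorem:mild_solution} rearranges to
\[
\frac{\mathcal{G}(t)x-x}{t}=\overline{\mathcal{QL}}\mathcal{Q}\Bigl(\tfrac{1}{t}\int_0^t\mathcal{G}(s)x\,ds\Bigr)\,.
\]
As $t\downarrow 0$, the left-hand side converges to $Ax$ and the argument of $\overline{\mathcal{QL}}\mathcal{Q}$ converges to $x$ by strong continuity. Since $\overline{\mathcal{QL}}\mathcal{Q}=(\mathcal{Q}\mathcal{L}^\dagger\mathcal{Q})^\dagger$ is an adjoint and hence closed (Theorem~\ref{theorem:mild_solution}), this forces $x\in D(\overline{\mathcal{QL}}\mathcal{Q})$ and $\overline{\mathcal{QL}}\mathcal{Q}x=Ax$.

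For $D(\overline{\mathcal{QL}}\mathcal{Q})\subseteq D(A)$, the plan is a resolvent-set comparison. Since $A$ generates a $C_0$-semigroup, $\rho(A)\supseteq\{\mathrm{Re}\,\lambda>\omega_1\}$ for some $\omega_1$. Applying the bounded perturbation theorem \cite[p.~158, Theorem~1.3]{engel} to $\mathcal{L}^\dagger$ with perturbation $\mathcal{L}^\dagger\mathcal{P}$ (bounded since $z\in D(\mathcal{L}^\dagger)$), as already done in the proof of Theorem~\ref{theorem:mild_solution}, shows that $\mathcal{L}^\dagger\mathcal{Q}$ generates a $C_0$-semigroup, whence $\rho(\mathcal{L}^\dagger\mathcal{Q})$ also contains a right half-plane. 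Combined with the adjoint identity $\overline{\mathcal{QL}}=[\mathcal{L}^\dagger\mathcal{Q}]^\dagger$ (eq.~\ref{equ:L_dagger_Q_dagger}) and the standard relation $\rho(T^\dagger)=\{\bar\mu:\mu\in\rho(T)\}$ for densely defined closed operators on Hilbert space, this yields $\rho(\overline{\mathcal{QL}})\supseteq\{\mathrm{Re}\,\lambda>\omega_2\}$ for some $\omega_2$. Pick $\lambda\neq 0$ in the intersection of both half-planes. Injectivity of $\lambda-\overline{\mathcal{QL}}\mathcal{Q}$ then follows: if $\overline{\mathcal{QL}}\mathcal{Q}v=\lambda v$, then $\lambda v=\overline{\mathcal{QL}}(\mathcal{Q}v)\in\mathcal{Q}H$ (since the range of $\mathcal{QL}$, and hence of its closure, lies in the closed subspace $\mathcal{Q}H$), giving $\mathcal{P}v=0$ and $\overline{\mathcal{QL}}v=\lambda v$; then $\lambda\in\rho(\overline{\mathcal{QL}})$ forces $v=0$. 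The standard resolvent argument concludes: for $x\in D(\overline{\mathcal{QL}}\mathcal{Q})$, set $x':=R(\lambda,A)(\lambda-\overline{\mathcal{QL}}\mathcal{Q})x\in D(A)$; by the first inclusion $(\lambda-A)x'=(\lambda-\overline{\mathcal{QL}}\mathcal{Q})x'$, hence $(\lambda-\overline{\mathcal{QL}}\mathcal{Q})(x-x')=0$ and $x=x'\in D(A)$.

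The main obstacle is establishing injectivity of $\lambda-\overline{\mathcal{QL}}\mathcal{Q}$: closedness and the inclusion $A\subseteq\overline{\mathcal{QL}}\mathcal{Q}$ alone do not suffice in infinite dimensions, as closed operators may admit point spectrum arbitrarily far in the right half-plane. The extra input needed is the spectral information inherited via the bounded-perturbation argument applied to the adjoint semigroup, which uses only $z\in D(\mathcal{L}^\dagger)$ and thereby matches the corollary's hypothesis.
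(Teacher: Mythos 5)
Your proposal is correct, and its first half is exactly the paper's argument: for $x\in D(\mathcal{A})$ you apply the closedness of $\overline{\mathcal{QL}}\mathcal{Q}=(\mathcal{Q}\mathcal{L}^\dagger\mathcal{Q})^\dagger$ to the difference quotient of the mild-solution identity, giving $\mathcal{A}\subseteq\overline{\mathcal{QL}}\mathcal{Q}$. Where you genuinely diverge is the reverse inclusion: the paper disposes of it by asserting that ``carrying out the same calculation backwards'' shows $\overline{\mathcal{QL}}\mathcal{Q}\subseteq\mathcal{A}$, i.e.\ it differentiates the mild-solution identity at $t=0$ for $x\in D(\overline{\mathcal{QL}}\mathcal{Q})$, a step that is left terse (pulling the closed operator out of the limit presupposes convergence of the very difference quotients one wants to control). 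You instead run the standard maximality argument: $\rho(\mathcal{A})$ contains a right half-plane because $\mathcal{A}$ generates a $C_0$-semigroup (theorem \ref{theorem:strongly_continuous_semigroup}); the bounded perturbation theorem applied to $\mathcal{L}^\dagger$ (already invoked in the proof of theorem \ref{theorem:mild_solution}) together with $\overline{\mathcal{QL}}=[\mathcal{L}^\dagger\mathcal{Q}]^\dagger$ and the conjugation relation between $\rho(T)$ and $\rho(T^\dagger)$ puts a right half-plane in $\rho(\overline{\mathcal{QL}})$; and the observation that the range of $\overline{\mathcal{QL}}$ lies in the closed subspace $\mathcal{Q}H$ converts this into injectivity of $\lambda-\overline{\mathcal{QL}}\mathcal{Q}$ for a common nonzero $\lambda$, after which the resolvent trick yields $D(\overline{\mathcal{QL}}\mathcal{Q})\subseteq D(\mathcal{A})$. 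Both routes use only $z\in D(\mathcal{L}^\dagger)$, so the hypotheses match; the paper's finish is shorter but leans on an unspelled-out reversal, while yours is longer but each step is explicitly justified by spectral information inherited from the adjoint semigroup, effectively filling in the detail the paper glosses over.
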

\begin{proof}
    Let $\mathcal{A}$ be the generator of $\{u(\cdot,t)\}_{t\geq 0}$. Let $x\in D(\mathcal{A})$. Since $\overline{\mathcal{QL}}\mathcal{Q}$ is closed, and since $t\to u(x,t)$ is a continuous mild solution of the abstract Cauchy problem associated to $\overline{\mathcal{QL}}\mathcal{Q}$, we have
    \begin{align*}
        \mathcal{A}x &= \lim_{h \searrow 0} \frac{1}{h}[u(x,h)-x] \\
                    &=  \lim_{h \searrow 0} \overline{\mathcal{QL}}\mathcal{Q} \frac{1}{h} \int^h_0 u(x,s) \, ds \\
                    &= \overline{\mathcal{QL}}\mathcal{Q} \lim_{h \searrow 0} \frac{1}{h} \int^h_0 u(x,s) \, ds \\
                    &= \overline{\mathcal{QL}}\mathcal{Q} x \, . 
    \end{align*}
    This implies $\mathcal{A}x= \overline{\mathcal{QL}}\mathcal{Q}x$ for all $x\in D(\mathcal{A})$. By definition (\ref{def:u}) the orbit map $t\to u(x,t)$ is right differentiable at $t=0$ if and only if $x\in D(\mathcal{LQ})$. This implies $D(\mathcal{A})=D(\mathcal{LQ})$, hence $\mathcal{A}=\mathcal{QLQ}$. Since $\mathcal{QLP}$ is bounded and $\mathcal{A}$ is closed, we have 
        \begin{align*}
            \overline{\mathcal{QL}}\mathcal{Q} &= \overline{\mathcal{QL}}- \overline{\mathcal{QL}}\mathcal{P}  = \overline{\mathcal{QLQ}+\mathcal{QLP}}-\mathcal{QLP} = \overline{\mathcal{QLQ}} = \mathcal{A}\, ,
        \end{align*} 
        where we used the fact that $\overline{\mathcal{T}+\mathcal{S}}=\overline{\mathcal{T}}+\mathcal{S}$ if $\mathcal{T}$ is closable and $\mathcal{S}$ is bounded with domain $D(\mathcal{S})=H$, cf. \cite[p. 93]{weidmann}.
    Thus, $\overline{\mathcal{QL}}\mathcal{Q}=\mathcal{A}$ generates the strongly continuous semigroup $\{u(\cdot,t)\}_{t\geq 0}$. 
\end{proof}

\begin{corollary}\label{corollary:fluctuating_forces}
    Let $z\in D(\mathcal{L})\cap D(\mathcal{L}^\dagger)$ with $z\neq0$. Then, for all $t\geq 0$, the fluctuating forces $\eta_t$ from proposition \ref{proposition:gle} satisfy 
    \begin{align}
        \eta_t &= \mathcal{G}(t)\mathcal{QL}z  \, , \label{equ:cor_ff_1}  \\
         \eta_t &= \eta_0 + \overline{\mathcal{QL}}\mathcal{Q} \int^t_0 \eta_s \, ds \, ,  \label{equ:cor_ff_2}
    \end{align}
    where $\{\mathcal{G}(t)\}_{t\geq 0}$ is the strongly continuous semigroup generated by $\overline{\mathcal{QL}}\mathcal{Q}$. 
\end{corollary}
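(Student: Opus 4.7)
The strategy is to identify $\eta_t$ with the orbit map $u(\mathcal{QL}z, t)$ from Theorem \ref{theorem:mild_solution}, at which point both claims reduce to straightforward applications of results that have already been established. Throughout, we use the hypothesis $z \in D(\mathcal{L}) \cap D(\mathcal{L}^\dagger)$, so that $\mathcal{QL}z$ is a well-defined element of $H$ and both Theorems \ref{theorem:gle} and \ref{theorem:mild_solution} (along with Theorem \ref{theorem:strongly_continuous_semigroup} and Corollary \ref{corollary:generator}) are applicable.

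For equation (\ref{equ:cor_ff_1}), the plan is to substitute $x = \mathcal{QL}z$ into the defining formula (\ref{def:u}) for $u$. Since $\mathcal{P}\mathcal{QL}z = 0$ and $\mathcal{Q}\mathcal{QL}z = \mathcal{QL}z$ (because $\mathcal{Q}$ is a projection with range $\mathrm{span}(z)^\perp$ and $\mathcal{QL}z$ already lies in that range), the expression collapses to
\begin{equation*}
u(\mathcal{QL}z, t) = \mathcal{U}(t)\mathcal{QL}z - \int_0^t f(\mathcal{QL}z, t-s)\, \mathcal{U}(s) z \, ds.
\end{equation*}
Comparing with (\ref{def:fluctuating_forces}), the claim $\eta_t = \mathcal{G}(t)\mathcal{QL}z = u(\mathcal{QL}z, t)$ is equivalent to the scalar identity $f(\mathcal{QL}z, t) = K(t)$. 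To obtain this, I substitute $x = \mathcal{QL}z$ into the Volterra equation (\ref{def:f}), use $\mathcal{Q}\mathcal{QL}z = \mathcal{QL}z$ again, and observe that the result is exactly the Volterra equation (\ref{def:memory_kernel}) defining $K$. Uniqueness of continuous solutions of linear Volterra equations \cite[p.~25, Theorem~2.1.1]{burton} then forces $f(\mathcal{QL}z, t) = K(t)$, yielding (\ref{equ:cor_ff_1}).

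For equation (\ref{equ:cor_ff_2}), once (\ref{equ:cor_ff_1}) is in hand I simply invoke the mild-solution identity from Theorem \ref{theorem:mild_solution}, applied with $x = \mathcal{QL}z$:
\begin{equation*}
u(\mathcal{QL}z, t) = \mathcal{QL}z + \overline{\mathcal{QL}}\mathcal{Q} \int_0^t u(\mathcal{QL}z, s)\, ds.
\end{equation*}
Substituting $\eta_s = u(\mathcal{QL}z, s)$ on both sides and noting that $\eta_0 = \mathcal{U}(0)\mathcal{QL}z = \mathcal{QL}z$ by (\ref{def:fluctuating_forces}) gives (\ref{equ:cor_ff_2}) directly.

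There is no genuine obstacle here; the corollary is essentially an assembly of Theorems \ref{theorem:gle}, \ref{theorem:mild_solution}, \ref{theorem:strongly_continuous_semigroup} and Corollary \ref{corollary:generator}. The only point requiring care is the Volterra-uniqueness argument identifying $f(\mathcal{QL}z, \cdot)$ with $K$, but this is immediate once one notices that substituting $x = \mathcal{QL}z$ makes the two integral equations coincide term by term.
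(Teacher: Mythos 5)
Your proposal is correct and follows essentially the same route as the paper: identifying $\eta_t$ with $u(\mathcal{QL}z,t)$ (the paper states this coincidence directly, while you helpfully spell out the Volterra-uniqueness step showing $f(\mathcal{QL}z,\cdot)=K$), then using corollary \ref{corollary:generator} for eq.~(\ref{equ:cor_ff_1}) and the mild-solution property for eq.~(\ref{equ:cor_ff_2}). The only cosmetic difference is that the paper derives (\ref{equ:cor_ff_2}) from the abstract fact that orbit maps of the semigroup are mild solutions, whereas you read it off theorem \ref{theorem:mild_solution} directly; both are valid.
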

\begin{proof}
    The definition of $\eta_t$ from proposition \ref{proposition:gle} coincides with the definition of $u(\mathcal{QL}z,t)$ from theorem \ref{theorem:mild_solution}. By corollary \ref{corollary:generator}, we have $u(\mathcal{QL}z,t)=\mathcal{G}(t)\mathcal{QL}z$. This proves eq.~(\ref{equ:cor_ff_1}). Since $\{\mathcal{G}(t)\}_{t\geq 0}$ is a strongly continuous semigroup with generator $\overline{\mathcal{QL}}\mathcal{Q}$, the orbit map $t\to \eta_t=\mathcal{G}(t)\mathcal{QL}z$ is the unique mild solution of the abstract Cauchy problem associated to $\overline{\mathcal{QL}}\mathcal{Q}$ \cite[p.~146, Proposition~6.4]{engel}. This establishes eq.~(\ref{equ:cor_ff_2}). 
\end{proof}

We have now restored the missing formula for the fluctuating forces $\eta_t$ from proposition \ref{proposition:gle} in terms of the orthogonal dynamics. Hence, proposition \ref{proposition:gle} immediately yields the GLE and 2FDT.

\begin{theorem}[GLE and 2FDT]\label{theorem:gle}
    Let $z\in D(\mathcal{L})\cap D(\mathcal{L}^\dagger)$ with $z\neq0$. Then, the orbit map $\mathbb{R}_+ \ni t \mapsto \mathcal{U}(t)z \in D(\mathcal{L})$ solves the integro-differential equation
    \begin{align}
        \frac{d}{dt}\mathcal{U}(t)z &= \mathcal{U}(t)\mathcal{P}\mathcal{L}z + \mathcal{G}(t)\mathcal{Q}\mathcal{L}z+\int^t_0   K(t-s)\mathcal{U}(s)z \, ds \, , \\
        K(t)&:= (\mathcal{G}(t)\mathcal{Q}\mathcal{L}z,\mathcal{Q}\mathcal{L}^\dagger z)(z,z)^{-1}  \, ,
    \end{align}
    where $\{\mathcal{G}(t)\}_{t\geq 0}$ denotes the strongly continuous semigroup generated by $\overline{\mathcal{QL}}\mathcal{Q}$. 
\end{theorem}
\begin{proof}
    This is an immediate consequence of proposition \ref{proposition:gle} and corollary \ref{corollary:fluctuating_forces}.
\end{proof}

In particular, we have given an alternative proof of the GLE and 2FDT including the desired properties of the orthogonal dynamics and fluctuating forces without invoking the bounded perturbation theorem. Additionally, we will discuss unitary time evolution in the next subsection. 

Before we proceed, let us discuss how to apply the given results under time-reversal. 

\begin{remark*}[Time reversal]
    Suppose the time evolution is given by a strongly continuous group $\{\mathcal{U}(t)\}_{t\in\mathbb{R}}$. Then, $\{\mathcal{U}(\pm t)\}_{t\geq 0}$ is a strongly continuous semigroup with generator $\pm \mathcal{L}$ \cite[p.~79]{engel}. This implies that all results remain valid under the replacements $\mathcal{U}(t) \mapsto \mathcal{U}(-t)$ and $\mathcal{L}\mapsto -\mathcal{L}$. 
\end{remark*}

\begin{lemma}\label{lemma:time_reversal}
    Let $\{\mathcal{U}(t)\}_{t\in\mathbb{R}}$ be a strongly continuous group and $z \in D(\mathcal{L}^\dagger)$ with $z\neq0$. Then, $\overline{\mathcal{QL}}\mathcal{Q}$ generates a strongly continuous group. 
\end{lemma}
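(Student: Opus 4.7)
The plan is to invoke the standard characterization that an operator $A$ generates a strongly continuous group if and only if both $A$ and $-A$ generate strongly continuous semigroups (see e.g.\ \cite[p.~79]{engel}). Corollary \ref{corollary:generator} already yields the forward direction, so the only real work is to produce the backward semigroup by applying the same corollary to the time-reversed flow.

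First I would observe that since $\{\mathcal{U}(t)\}_{t\in\mathbb{R}}$ is a strongly continuous group, the reversed family $\{\mathcal{V}(t):=\mathcal{U}(-t)\}_{t\geq 0}$ is a strongly continuous semigroup with generator $-\mathcal{L}$. Its adjoint $\{\mathcal{V}(t)^\dagger\}_{t\geq 0}=\{\mathcal{U}(-t)^\dagger\}_{t\geq 0}$ is strongly continuous (by reflexivity of $H$, as in Corollary \ref{corolloary:orthogonality}) with generator $(-\mathcal{L})^\dagger = -\mathcal{L}^\dagger$, so $D((-\mathcal{L})^\dagger)=D(\mathcal{L}^\dagger)$ and the hypothesis $z\in D(\mathcal{L}^\dagger)$ is preserved.

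Next, I would apply Theorem \ref{theorem:mild_solution}, Theorem \ref{theorem:strongly_continuous_semigroup} and Corollary \ref{corollary:generator} twice, exactly as the remark preceding the lemma allows. For $\{\mathcal{U}(t)\}_{t\geq 0}$ this gives a strongly continuous semigroup $\{\mathcal{G}(t)\}_{t\geq 0}$ with generator $\overline{\mathcal{QL}}\mathcal{Q}$. For $\{\mathcal{V}(t)\}_{t\geq 0}$, the generator of the underlying flow is $-\mathcal{L}$, so the corresponding backward orthogonal dynamics $\{\widetilde{\mathcal{G}}(t)\}_{t\geq 0}$ has generator $\overline{\mathcal{Q}(-\mathcal{L})}\mathcal{Q}$. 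Because $\mathcal{Q}$ is bounded, $\mathcal{Q}(-\mathcal{L})=-\mathcal{QL}$ on $D(\mathcal{L})$, and closure commutes with multiplication by the scalar $-1$, hence $\overline{\mathcal{Q}(-\mathcal{L})}\mathcal{Q} = -\overline{\mathcal{QL}}\mathcal{Q}$.

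Combining these two facts, both $\overline{\mathcal{QL}}\mathcal{Q}$ and $-\overline{\mathcal{QL}}\mathcal{Q}$ generate strongly continuous semigroups, which is precisely the criterion for $\overline{\mathcal{QL}}\mathcal{Q}$ to generate a strongly continuous group. The main subtlety I anticipate is the bookkeeping around the sign, in particular verifying $(-\mathcal{L})^\dagger=-\mathcal{L}^\dagger$ on $D(\mathcal{L}^\dagger)$ and $\overline{-\mathcal{QL}}=-\overline{\mathcal{QL}}$; both are routine consequences of linearity of the adjoint and of graph closure, but they must be checked so that the two semigroups produced by Corollary \ref{corollary:generator} really are generated by operators differing only by a sign.
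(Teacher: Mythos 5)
Your proposal is correct and follows essentially the same route as the paper: the paper also applies Corollary \ref{corollary:generator} to the forward and time-reversed semigroups $\{\mathcal{U}(\pm t)\}_{t\geq 0}$ with generators $\pm\mathcal{L}$, obtaining that $\pm\overline{\mathcal{QL}}\mathcal{Q}$ generate strongly continuous semigroups, and then invokes the group criterion from \cite[p.~79]{engel}. Your explicit checks that $(-\mathcal{L})^\dagger=-\mathcal{L}^\dagger$ and $\overline{\mathcal{Q}(-\mathcal{L})}\mathcal{Q}=-\overline{\mathcal{QL}}\mathcal{Q}$ are exactly the routine bookkeeping the paper's time-reversal remark subsumes.
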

\begin{proof}
    Since $\{\mathcal{U}(\pm t)\}_{t\geq 0}$ is a strongly continuous semigroup with generator $\pm \mathcal{L}$ \cite[p.~79]{engel}, it follows from corollary \ref{corollary:generator} that the operator $\pm \overline{\mathcal{QL}}\mathcal{Q}$ generates a strongly continuous semigroup $\{\mathcal{G}(\pm t)\}_{t\geq 0}$. This implies that $\{\mathcal{G}(t)\}_{t\in \mathbb{R}}$ is a strongly continuous group with generator $\overline{\mathcal{QL}}\mathcal{Q}$ \cite[p.~79]{engel}.
\end{proof}

\subsection{Unitary time evolution}\label{ssec:unitary_time_evolution}

Suppose that the time evolution of observables is given by a strongly continuous unitary group $\{\mathcal{U}(t)\}_{t\in \mathbb{R}}$. This includes, e.g., large classes of classical systems in thermal equilibrium. Then, by Stone's theorem for one-parameter unitary groups \cite{stone1932,neumann1932}, the generator $\mathcal{L}$ is skew-adjoint \cite[p.~210, Theorem~10.15]{hall2013}. Note that the converse statement is also true \cite[p.~208, Proposition~10.14]{hall2013}. In this case, the fluctuating forces are stationary. Moreover, the orthogonal dynamics is a strongly continuous unitary group.

\begin{theorem}\label{theorem:unitary_group}
    Let $\mathcal{L}$ be skew-adjoint and $ z \in D(\mathcal{L})$ with $z\neq 0$. Then, $\overline{\mathcal{QL}}\mathcal{Q}$ generates a strongly continuous unitary group $\{\mathcal{G}(t)\}_{t\in\mathbb{R}}$. 
\end{theorem}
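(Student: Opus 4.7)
The plan is to show that the generator $\overline{\mathcal{QL}}\mathcal{Q}$ is skew-adjoint, whereupon Stone's theorem, in its converse form, automatically produces a strongly continuous unitary group. Since $\mathcal{L}$ is skew-adjoint we have $D(\mathcal{L}^\dagger)=D(\mathcal{L})$, so $z\in D(\mathcal{L})\cap D(\mathcal{L}^\dagger)$. In particular $\{\mathcal{U}(t)\}_{t\in\mathbb{R}}$ is a strongly continuous group (it is the unitary group generated by $\mathcal{L}$), and Lemma~\ref{lemma:time_reversal} already delivers a strongly continuous group $\{\mathcal{G}(t)\}_{t\in\mathbb{R}}$ with generator $\overline{\mathcal{QL}}\mathcal{Q}$. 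Only the unitarity remains to be verified.

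For skew-adjointness I would start from the identity $(\mathcal{Q}\mathcal{L}^\dagger\mathcal{Q})^\dagger=\overline{\mathcal{QL}}\mathcal{Q}$ established in Theorem~\ref{theorem:mild_solution}. Substituting $\mathcal{L}^\dagger=-\mathcal{L}$ gives $(\mathcal{QLQ})^\dagger=-\overline{\mathcal{QL}}\mathcal{Q}$. Taking adjoints a second time and using that $\mathcal{QLQ}$ is densely defined and closable (its adjoint is densely defined by the previous line), so that $(\mathcal{QLQ})^{\dagger\dagger}=\overline{\mathcal{QLQ}}$, yields
\begin{equation*}
(\overline{\mathcal{QL}}\mathcal{Q})^\dagger=-\overline{\mathcal{QLQ}}.
\end{equation*}
Hence the entire claim reduces to the operator identity $\overline{\mathcal{QLQ}}=\overline{\mathcal{QL}}\mathcal{Q}$.

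One inclusion is painless: $\overline{\mathcal{QL}}\mathcal{Q}$ is closed (as the generator of a $C_0$-group) and extends $\mathcal{QLQ}$, so $\overline{\mathcal{QLQ}}\subseteq\overline{\mathcal{QL}}\mathcal{Q}$. For the reverse inclusion, given $x\in D(\overline{\mathcal{QL}}\mathcal{Q})$ choose $y_n\in D(\mathcal{QL})=D(\mathcal{L})$ with $y_n\to\mathcal{Q}x$ and $\mathcal{QL}y_n\to\overline{\mathcal{QL}}\mathcal{Q}x$, and set $x_n:=\mathcal{Q}y_n+\mathcal{P}x$. Because $z\in D(\mathcal{L})$, the projection $\mathcal{Q}$ preserves $D(\mathcal{L})$, so $\mathcal{Q}y_n\in D(\mathcal{L})$ and $x_n\in D(\mathcal{QLQ})$. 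Clearly $x_n\to x$, and
\begin{equation*}
\mathcal{QLQ}x_n=\mathcal{QL}y_n-(y_n,z)(z,z)^{-1}\mathcal{QL}z\;\longrightarrow\;\overline{\mathcal{QL}}\mathcal{Q}x,
\end{equation*}
since $(y_n,z)\to(\mathcal{Q}x,z)=0$. Hence $x\in D(\overline{\mathcal{QLQ}})$ with the correct image, which gives the missing inclusion and therefore $(\overline{\mathcal{QL}}\mathcal{Q})^\dagger=-\overline{\mathcal{QL}}\mathcal{Q}$.

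The main obstacle is precisely this final domain comparison $\overline{\mathcal{QLQ}}=\overline{\mathcal{QL}}\mathcal{Q}$: one must replace an approximating sequence for $\overline{\mathcal{QL}}$ at the point $\mathcal{Q}x$ by an approximating sequence on which the outer $\mathcal{Q}$ can be absorbed into the domain. The construction above is exactly where the stronger hypothesis $z\in D(\mathcal{L})$ (rather than merely $z\in D(\mathcal{L}^\dagger)$) is used, since it guarantees $\mathcal{Q}D(\mathcal{L})\subseteq D(\mathcal{L})$. Once skew-adjointness is in hand, Stone's theorem \cite{stone1932,neumann1932,hall2013} immediately concludes that $\overline{\mathcal{QL}}\mathcal{Q}$ generates a strongly continuous unitary group, which by Corollary~\ref{corollary:generator} and Lemma~\ref{lemma:time_reversal} must coincide with $\{\mathcal{G}(t)\}_{t\in\mathbb{R}}$.
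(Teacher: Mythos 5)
Your proposal is, as far as I can check, correct, but it takes a genuinely different route from the paper's proof. The paper never establishes skew-adjointness of $\overline{\mathcal{QL}}\mathcal{Q}$ directly: starting from the group $\{\mathcal{G}(t)\}_{t\in\mathbb{R}}$ supplied by lemma \ref{lemma:time_reversal}, it works with the explicit Volterra representation $u(x,t)$ of theorem \ref{theorem:mild_solution}, shows $\tfrac{d}{ds}\,\mathcal{U}(s)u(x,t-s)=f(x,t-s)\,\mathcal{U}(s)z$, deduces from the unitarity of $\mathcal{U}(s)$ and the orthogonality $(u(\mathcal{Q}x,t),z)=0$ that $(u(x,t-s),u(y,t-s))$ is constant in $s$, hence that each $\mathcal{G}(t)$ is an isometry, and then upgrades isometry to unitarity using the right inverse coming from the group law; skew-adjointness of the generator is only read off afterwards via Stone's theorem (as the authors remark at the beginning of lemma \ref{lemma:leinfelder}). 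You reverse this logic: skew-adjointness first, then the converse of Stone's theorem. Your adjoint bookkeeping and the reduction to the identity $\overline{\mathcal{QLQ}}=\overline{\mathcal{QL}}\mathcal{Q}$ are fine, and your closure argument for that identity appears sound; its essence can be stated even more simply as a bounded rank-one perturbation: on $D(\mathcal{L})$ one has $\mathcal{QLQ}=\mathcal{QL}-\mathcal{QLP}$ with $\mathcal{QLP}=(\cdot,z)(z,z)^{-1}\mathcal{QL}z$ bounded, so $\overline{\mathcal{QLQ}}=\overline{\mathcal{QL}}-\mathcal{QLP}$ with domain $D(\overline{\mathcal{QL}})$, which is $\mathcal{Q}$-invariant because $z\in D(\mathcal{L})$, and this operator coincides with $\overline{\mathcal{QL}}\mathcal{Q}$. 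You should be aware, though, that at this point you claim strictly more than the paper does: the authors prove $\overline{\mathcal{QL}}\mathcal{Q}=\overline{\mathcal{QLQ}}$ only for $z\in F(\mathcal{L})$ (lemma \ref{lemma:leinfelder}, via Leinfelder's spectral-theoretic machinery) and explicitly pose its validity for all $z\in D(\mathcal{L})$ as an open question, so this single step is where your proof must withstand the closest scrutiny; I could not find a gap in it, and note that it is exactly the step that fails in the Givon-type counterexample, where $\mathcal{Q}$ has infinite rank so that neither $\mathcal{Q}D(\mathcal{L})\subseteq D(\mathcal{L})$ nor boundedness of $\mathcal{QLP}$ is available. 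In terms of trade-offs, the paper's dynamical argument uses only the Volterra machinery already in place and stays agnostic about the generator identity, whereas your operator-theoretic argument yields the stronger structural statement (essential skew-adjointness of $\mathcal{QLQ}$ on $D(\mathcal{L})$, which would subsume lemma \ref{lemma:leinfelder}) at the price of making the whole theorem hinge on it.
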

\begin{proof}
    By corollary \ref{corollary:generator} and proposition \ref{proposition:props_of_generator}, we have $\overline{\mathcal{QL}}\mathcal{Q}=\mathcal{QLQ}$ and $[\mathcal{L}^\dagger\mathcal{Q}]^\dagger = \overline{\mathcal{QL}}$. Thus, 
    \begin{align}
        (\overline{\mathcal{QL}}\mathcal{Q})^\dagger &= (\mathcal{QLQ})^\dagger 
        = [\mathcal{LQ}]^\dagger \mathcal{Q} 
        = -[\mathcal{L^\dagger Q}]^\dagger \mathcal{Q} 
        = -\overline{\mathcal{QL}}\mathcal{Q} \, ,
    \end{align}
    where we used \cite[p.~330, Theorem~13.2]{rudin1974}.
\end{proof}

\begin{corollary}\label{corollary:stationary}
    Let $\mathcal{L}$ be skew-adjoint and $ z \in D(\mathcal{L})$  with $z\neq0$. Then, for all $r,s,t \in\mathbb{R}$, 
    \begin{align*}
        (\eta_{t+r},\eta_{s+r}) &= (\eta_t,\eta_s) \, ,
    \end{align*}
    where $\eta_t:=\mathcal{G}(t)\mathcal{QL}z$ and $\{\mathcal{G}(t)\}_{t\in\mathbb{R}}$ is the strongly continuous unitary group generated by $\overline{\mathcal{QL}}\mathcal{Q}$. 
\end{corollary}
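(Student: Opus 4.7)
The plan is to reduce the claim to two properties of $\{\mathcal{G}(t)\}_{t\in\mathbb{R}}$ that are already in hand from Theorem~\ref{theorem:unitary_group}: the group law $\mathcal{G}(t+r)=\mathcal{G}(r)\mathcal{G}(t)$ and the unitarity $\mathcal{G}(r)^\dagger \mathcal{G}(r)=1$ for every $r\in\mathbb{R}$. Both are consequences of the fact that $\overline{\mathcal{QL}}\mathcal{Q}$ generates a strongly continuous unitary group on $H$, which Theorem~\ref{theorem:unitary_group} establishes under the hypothesis that $\mathcal{L}$ is skew-adjoint and $z\in D(\mathcal{L})$ (note $D(\mathcal{L})\subseteq D(\mathcal{L}^\dagger)$ since $\mathcal{L}^\dagger=-\mathcal{L}$, so Theorem~\ref{theorem:mild_solution} and Corollary~\ref{corollary:generator} apply).

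The calculation I would then carry out is essentially one line. Starting from $\eta_t=\mathcal{G}(t)\mathcal{QL}z$, I would write
\begin{align*}
(\eta_{t+r},\eta_{s+r}) &= (\mathcal{G}(t+r)\mathcal{QL}z,\mathcal{G}(s+r)\mathcal{QL}z) \\
&= (\mathcal{G}(r)\mathcal{G}(t)\mathcal{QL}z,\mathcal{G}(r)\mathcal{G}(s)\mathcal{QL}z) \\
&= (\mathcal{G}(t)\mathcal{QL}z,\mathcal{G}(s)\mathcal{QL}z) \\
&= (\eta_t,\eta_s),
\end{align*}
where the second equality uses the group property (valid for all real arguments because $\{\mathcal{G}(t)\}_{t\in\mathbb{R}}$ is a group, not just a semigroup), and the third equality uses unitarity of $\mathcal{G}(r)$.

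There is no real obstacle here; all the substance was absorbed into Theorem~\ref{theorem:unitary_group}. The only thing worth double-checking is that $\mathcal{QL}z$ is a legitimate element of $H$ so that the orbit map $t\mapsto \mathcal{G}(t)\mathcal{QL}z$ is defined for all $t\in\mathbb{R}$; this follows immediately from $z\in D(\mathcal{L})$ and the boundedness of $\mathcal{Q}$, together with the fact that $\mathcal{G}(t)$ is defined on all of $H$ for every $t\in\mathbb{R}$ (it is a bounded operator by the group property). I would state this as a one-paragraph proof invoking Theorem~\ref{theorem:unitary_group} and the two displayed properties.
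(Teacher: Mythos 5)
Your proof is correct and is exactly the argument the paper intends: the corollary is stated without proof as an immediate consequence of Theorem \ref{theorem:unitary_group}, and your one-line computation using the group law $\mathcal{G}(t+r)=\mathcal{G}(r)\mathcal{G}(t)$ and unitarity of $\mathcal{G}(r)$ (with the observation that $D(\mathcal{L})=D(\mathcal{L}^\dagger)$ for skew-adjoint $\mathcal{L}$, so the earlier results apply) is precisely that consequence spelled out.
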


Note that the identity $\overline{\mathcal{QL}}\mathcal{Q}=\overline{\mathcal{QLQ}}$ does not hold for arbitrary orthogonal projections $\mathcal{Q}$ and skew-adjoint generators $\mathcal{L}$. A notable counterexample for an infinite rank projection (rather than the Mori projection) is given by Givon et al.~\cite{givon2005}. We now present this result for the reader's convenience.

\begin{example}\label{example:givon}
    Let $H:=L^2(\mathbb{R})$, $\mathcal{L}:=\partial_x$, $D(\mathcal{L}):=W^{1,2}(\mathbb{R})$, $\mathcal{Q}f(x):=\Theta(x)f(x)$, where $\mathcal{L}$ is the weak derivative and $\Theta$ denotes the Heaviside function. Then, 
    \begin{align*}
        \overline{\mathcal{QLQ}} \subsetneq  -\overline{\mathcal{QLQ}}^\dagger = \overline{\mathcal{QL}}\mathcal{Q} \, ,
    \end{align*}
    and $\overline{\mathcal{QL}}\mathcal{Q}$ does \textit{not} generate a strongly continuous unitary group. 
\end{example}
\begin{proof}
    The proof is given in \ref{app:example}.
\end{proof}

\section{Classical statistical mechanics}\label{sec:classical_statistical_mechanics}

In this section, we demonstrate how the previous results can be applied to statistical ensembles of (possibly non-Hamiltonian) classical systems. Let the state space be given by $\Omega=\mathbb{R}^n$ and $\mu$ be the Lebesgue measure on $(\Omega,\Sigma)$, where $\Sigma$ denotes the Lebesgue $\sigma$-algebra. Let $\rho \in L^1(\mu)$ be non-negative and normalized. The space of real-valued observables $L^2(P)$ is the set of square-integrable random variables for the probability measure $P$,
\begin{align}
    P(\mathcal{M}) &:= \int \rho\mathbf{1}_\mathcal{M}  \, d\mu \, , \quad \mathcal{M}\in \Sigma \, , \label{def:probability_measure}
\end{align}
where $\mathbf{1}_\mathcal{M}$ denotes the indicator function. The space $L^2(P)$ is a real Hilbert space \cite[p.~153, Corollary~7.22]{klenke} equipped with the scalar product
\begin{align}
    (x,y)_{L^2(P)}&:= \int xy  \,dP =  \int xy\rho \, d\mu \, . \label{def:scalar_product_L2P}
\end{align}
The complex Hilbert space of observables $H$ is introduced via the complexification of $L^2(P)$, i.e.,
\begin{align*}
    H &:= \mathbb{C} \otimes L^2(P) \, , \\
    (\alpha x, \beta y) &:= \alpha\beta^\ast \cdot (x,y)_{L^2(P)} \, , 
\end{align*}
for all $x,y\in L^2(P)$ and $\alpha,\beta \in \mathbb{C}$, where $\otimes$ denotes the tensor product. 

Now, let $\mathbf{F}\colon\Omega\to\Omega$ be continuously differentiable with bounded derivative. Then, there exists a continuous flow $\varphi \colon \mathbb{R}\times \Omega \mapsto \Omega$, such that \cite[p.~91]{engel}
\begin{align*}
    \varphi_0 &= \text{id}_\Omega \, , \\
    \varphi_{t+s} &= \varphi_{t} \circ \varphi_{s}
    \, , \\
   \frac{d}{dt} \varphi_{t}(\mathbf{x}) &= \mathbf{F}\circ \varphi_t(\mathbf{x}) \, ,
\end{align*}
for all $t,s\in\mathbb{R}$ and $\mathbf{x}\in\Omega$. Let $X$ be the Banach space of continuous functions that vanish at infinity,
\begin{align*}
    X &:= C_0(\Omega) \, \intertext{equipped with the norm}
    \| x \|_X &:= \sup_{\mathbf{x} \in \Omega}|x(\mathbf{x})| \, .
\end{align*}
The composition operator (Koopman operator \cite{koopman31}) 
\begin{align}
    \mathcal{K}(t)x&:=x\circ\varphi_t \, , \quad x\in X \, , \label{def:koopman}
\end{align}
is a strongly continuous group on $X$, where the generator is given by the closure of $(\mathbf{F}\cdot\boldsymbol{\nabla}, C^1_c(\Omega))$ in $X$, and $C^1_c(\Omega)$ is a core of the generator \cite[p.~91-92]{engel}.

\begin{theorem}\label{theorem:time_evolution}
    Let $\rho \in L^1(\mu)$ be non-negative and normalized. Let $\mathbf{F}\colon\Omega\to\Omega$ be continuously differentiable with bounded derivative. We further assume $\rho\mathbf{F}\in W^{1,1}(\Omega)$ and $\omega_0:=\frac{1}{2}\|\rho^{-1}\textup{div}(\rho\mathbf{F})\|_{L^\infty(P)}<\infty$. (We note that the set where $\rho$ vanishes is a nullset for $P$, hence $\rho^{-1}$ is well-defined $P$-almost everywhere.) Then
    \begin{itemize}
        \item $\overline{\mathbf{F}\cdot\boldsymbol{\nabla}}^{L^2(P)}$ generates a strongly continuous group $\{\mathcal{U}(t)\}_{t\in\mathbb{R}}$ on $L^2(P)$,
        \item $\|\mathcal{U}(t)\| \leq e^{\omega_0 |t|}$ for all $t\in\mathbb{R}$,
        \item $\mathcal{U}(t)x =  \mathcal{K}(t)x$ for all $x\in C^1_c(\Omega)$ and $t\in\mathbb{R}$.
    \end{itemize}
    If in addition $\textup{div}(\rho\mathbf{F})=0$, then
    \begin{itemize}
        \item $\mathcal{U}(t)$ is a linear isometry on $L^2(P)$ for all $t\in\mathbb{R}$,
        \item $P$ is invariant under the map $\varphi_t$ for all $t\in\mathbb{R}$,
        \item $\mathcal{U}(t)x= x\circ \varphi_t$ for all $x\in L^2(P)$ and $t\in\mathbb{R}$.
    \end{itemize}
\end{theorem}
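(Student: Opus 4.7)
The plan is to use the Lumer--Phillips theorem to construct the $L^2(P)$-group and then identify it with the Koopman action on the dense core $C^1_c(\Omega)$; the equilibrium statements then follow from a density argument.

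For $x\in C^1_c(\Omega)$, integration by parts is legitimate because $\rho\mathbf{F}\in W^{1,1}(\Omega)$ and $x$ has compact support, giving
\[
2\,\mathrm{Re}(\mathbf{F}\cdot\boldsymbol{\nabla}x,x)_{L^2(P)}=-\int |x|^2\,\mathrm{div}(\rho\mathbf{F})\,d\mu=-\int |x|^2\,\rho^{-1}\mathrm{div}(\rho\mathbf{F})\,dP,
\]
so $|\mathrm{Re}(\mathbf{F}\cdot\boldsymbol{\nabla}x,x)_{L^2(P)}|\le \omega_0\|x\|^2_{L^2(P)}$; both $\pm\mathbf{F}\cdot\boldsymbol{\nabla}-\omega_0 I$ are therefore dissipative on $C^1_c(\Omega)$. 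For the range condition, I would use that $C^1_c(\Omega)$ is a core for the generator of the Koopman group on $C_0(\Omega)$: for $\lambda>0$ large enough, $(\lambda-\mathbf{F}\cdot\boldsymbol{\nabla})C^1_c(\Omega)$ is sup-norm dense in $C_0(\Omega)$, and since $P$ is a probability measure the embedding $(C_0(\Omega),\|\cdot\|_\infty)\hookrightarrow L^2(P)$ is continuous with dense range, hence the range is $L^2(P)$-dense. Lumer--Phillips then shows the closures of $\pm\mathbf{F}\cdot\boldsymbol{\nabla}|_{C^1_c(\Omega)}$ in $L^2(P)$ generate strongly continuous semigroups of norm at most $e^{\omega_0 t}$, which combine into the asserted group $\{\mathcal{U}(t)\}_{t\in\mathbb{R}}$.

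Since $\varphi_t$ is a homeomorphism of $\Omega$, the space $C^1_c(\Omega)$ is $\mathcal{K}(t)$-invariant. For $x\in C^1_c(\Omega)$, dominated convergence (using the pointwise $C^1$-dependence of $\varphi_t(\mathbf{x})$ on $t$ together with the uniform bound $\|\mathbf{F}\cdot\boldsymbol{\nabla}x\|_\infty<\infty$) shows that $t\mapsto x\circ\varphi_t$ is continuously differentiable into $L^2(P)$ with derivative $(\mathbf{F}\cdot\boldsymbol{\nabla}x)\circ\varphi_t$. Hence $t\mapsto \mathcal{K}(t)x$ is a classical solution of the abstract Cauchy problem associated to $\overline{\mathbf{F}\cdot\boldsymbol{\nabla}}^{L^2(P)}$ with initial value $x$, and uniqueness of mild solutions forces $\mathcal{K}(t)x=\mathcal{U}(t)x$.

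For the second set of claims, $\mathrm{div}(\rho\mathbf{F})=0$ gives $\omega_0=0$, so $\|\mathcal{U}(\pm t)\|\le 1$ for all $t\ge 0$ and each $\mathcal{U}(t)$ is a surjective isometry. Invariance of $P$ follows from $\int|y|^2\,d(\varphi_{t\ast}P)=\|y\circ\varphi_t\|^2_{L^2(P)}=\|\mathcal{U}(t)y\|^2_{L^2(P)}=\int|y|^2\,dP$ for every $y\in C^1_c(\Omega)$, combined with the density of $\{|y|^2:y\in C^1_c(\Omega)\}$ in the positive cone of $C_c(\Omega)$ and the Riesz representation theorem. Once $P$ is $\varphi_t$-invariant, $y\mapsto y\circ\varphi_t$ is a well-defined linear isometry on $L^2(P)$ coinciding with $\mathcal{U}(t)$ on the dense subspace $C^1_c(\Omega)$, so the identity extends to all of $L^2(P)$. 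The main obstacle I anticipate is the range-density step: the dissipativity estimate lives naturally on $L^2(P)$ while the explicit Koopman semigroup is available only on $C_0(\Omega)$, and the hypotheses $\omega_0<\infty$ and $\rho\mathbf{F}\in W^{1,1}(\Omega)$ are precisely what render the integration by parts on $C^1_c(\Omega)$ legitimate and thus allow one to pivot between the two settings.
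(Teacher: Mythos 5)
Your proposal is correct in substance but follows a genuinely different route from the paper for the generation part. The paper never invokes Lumer--Phillips: it proves the a priori bound $\|\mathcal{K}(t)x\|_{L^2(P)}\le e^{\omega_0|t|}\|x\|_{L^2(P)}$ for $x\in C^1_c(\Omega)$ directly, by differentiating $t\mapsto(\mathcal{K}(t)x,\mathcal{K}(t)y)_{L^2(P)}$, integrating by parts (its Proposition on weak derivatives, using $\rho\mathbf{F}\in W^{1,1}$) and applying Gr\"onwall; it then \emph{defines} $\mathcal{U}(t)$ as the unique continuous extension of $\mathcal{K}(t)|_{C^1_c}$, checks the group law and strong continuity by density, and identifies the generator as $\overline{\mathbf{F}\cdot\boldsymbol{\nabla}}^{L^2(P)}$ via the core criterion (dense, $\{\mathcal{U}(t)\}$-invariant subspace). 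You instead establish quasi-dissipativity of $\pm\mathbf{F}\cdot\boldsymbol{\nabla}$ on $C^1_c(\Omega)$ from the same integration-by-parts identity, transfer the range condition from the $C_0(\Omega)$ Koopman group (where $C^1_c$ is a core) through the continuous dense embedding $C_0(\Omega)\hookrightarrow L^2(P)$, and get the group plus the bound $e^{\omega_0|t|}$ and the identification of the generator with the closure in one stroke; the identity $\mathcal{U}(t)x=\mathcal{K}(t)x$ on $C^1_c$ then needs your extra step (classical solution of the abstract Cauchy problem plus uniqueness), whereas in the paper it holds by construction. Your route buys a cleaner generator identification and a slicker isometry argument in the equilibrium case ($\omega_0=0$ gives contractivity in both time directions, hence surjective isometries), at the price of invoking heavier semigroup machinery; the paper's route is more elementary and produces the operator and its properties constructively. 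Two details you gloss over are still needed and are exactly what the paper's auxiliary propositions supply: density of $C^1_c(\Omega)$ in $L^2(P)$ requires the regularity of $P$ (paper's Proposition on density), and your final two bullets require $\varphi_t$ to be measurable with respect to the Lebesgue $\sigma$-algebra $\Sigma$ and to preserve null sets (paper's Proposition on the flow, via Lipschitz continuity of $\varphi_t$), both so that $x\circ\varphi_t$ is well defined on $L^2(P)$-classes and so that your Riesz-representation argument, which yields $P=P\circ\varphi_t^{-1}$ on Borel sets, can be upgraded to invariance on all of $\Sigma$.
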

\begin{proof}
The proof is given in \ref{app:time_evolution}.
\end{proof}

Under the assumptions of theorem \ref{theorem:time_evolution},
it follows from Hölder's inequality that $\{\mathcal{U}(t)\}_{t\in\mathbb{R}}$ is a strongly continuous group such that $\|\mathcal{U}(t)\| \leq e^{\omega_0 |t|}$ for all $t\in\mathbb{R}$. The constant $\omega_0=\frac{1}{2}\|\rho^{-1}\textup{div}(\rho\mathbf{F})\|_{L^\infty(P)}$ is given by
\begin{equation}
    \omega_0 = \esssup_{\text{supp}(\rho)} \left\lvert \frac{1}{2\rho}\text{div}(\rho\mathbf{F}) \right\rvert \, . \label{eq:omega0}
\end{equation}
Let $\omega_\pm \in\mathbb{R}$ denote the smallest constant such that $\|\mathcal{U}(\pm t)\|\leq e^{\omega_\pm t}$ for all $t\geq 0$. According to Zhu et al. \cite[Eq.~(17)-(18)]{zhu18}, $\omega_\pm$ is given by
\begin{align}
    \omega_\pm &= \esssup_{\text{supp}(\rho)} \left(  \pm \frac{ 1}{2\rho}\text{div}(\rho \mathbf{F}) \right)\, ,
\end{align}
which is a consequence of the growth bounds for strongly continuous semigroups from Davies \cite[Lemma 2.2]{davies}. In the article by Davies, it is assumed that the semigroup is strongly continuous. In contrast, under the assumptions of theorem~\ref{theorem:time_evolution}, the strong continuity follows from Hölder's inequality. Since $\omega_0=\max\{\omega_+,\omega_-\}$, it follows from the work by Zhu et al. and Davies that $\omega_0$ is the smallest constant such that $\|\mathcal{U}(t)\|\leq e^{\omega_0 |t|}$ for all $t\in\mathbb{R}$.     

For simplicity, let us outline the main results for the case of stationary classical systems for which $\mathcal{L}$ is skew-adjoint. Under the assumptions of theorem \ref{theorem:time_evolution} with $\text{div}(\rho\mathbf{F})=0$, the time evolution of observables is described by the strongly continuous unitary group $\{\mathcal{U}(t)\}_{t\in\mathbb{R}}$ generated by $\mathcal{L}:=\overline{\mathbf{F}\cdot\boldsymbol{\nabla}}^H$. In this case, for any observable $0\neq z\in D(\mathcal{L})$, we obtain
\begin{align*}
    \frac{d}{dt} [z\circ\varphi_t] &= \omega [z\circ\varphi_t] + \eta_t+\int^t_0   K(t-s) [z\circ\varphi_s] \, ds \, ,  \intertext{with}
    \omega &:= (\mathcal{L}z,z)(z,z)^{-1}  \, ,\\
    \eta_t &:= \mathcal{G}(t)\mathcal{Q}\mathcal{L}z  \, ,  \\
    K(t)&:= -(\eta_t,\eta_0)(z,z)^{-1} \, , 
\end{align*}
where $\{\mathcal{G}(t)\}_{t\in \mathbb{R}}$ is the strongly continuous unitary group generated by $\overline{\mathcal{QL}}\mathcal{Q}=\mathcal{QLQ}$ and $\mathcal{Q}$ is the orthogonal projection onto $\text{span}(z)^\perp$. Note that the drift term $\omega$ vanishes for real observables, since $\mathcal{L}$ is skew-adjoint. Further, we have $(\eta_t,z)=0$, and $(\eta_{t+r},\eta_{s+r})=(\eta_t,\eta_s)$. In addition, the orthogonal dynamics is given by $\mathcal{G}(t)=\mathcal{P} + \mathcal{Q}e^{\mathcal{LQ}t}$, where $\{e^{\mathcal{LQ}t}\}_{t\in \mathbb{R}}$ denotes the strongly continuous group generated by $\mathcal{LQ}$. 

\section{Non-autonomous systems}\label{sec:non_autonomous_systems}
Finally, we ask whether the GLE also holds for systems that evolve under time-dependent dynamics such as systems subjected to time-dependent external forces or active matter.

For a time-dependent flow $\varphi \colon \mathbb{R}^2 \times \Omega \to \Omega$, where $\Omega$ denotes some configuration space, we have $\varphi(t,r)\circ\varphi(r,s)=\varphi(t,s)$ and $\varphi(t,t)=\text{id}_\Omega$, for all $t,s,r\in\mathbb{R}$. If $H$ is a complex Hilbert space of observables such that $x\colon\Omega\to\mathbb{C}$ for all $x\in H$, then the time evolution of observables is given by the composition operator $\mathcal{U}(t,s)x:=x\circ \varphi(t,s)$. Hence, due to the composition, $\{\mathcal{U}(t,s)\}_{t,s\in\mathbb{R}}$ is expected to resemble an evolution family with \textit{reversed} order of its `group properties'. More explicitly, we have
\begin{align*}
    \mathcal{U}(t,s)x &= x\circ \varphi(t,s) \\
    &= x\circ \varphi(t,r)\circ\varphi(r,s) \\
    &= \mathcal{U}(r,s)[x\circ \varphi(t,r)] \\
    &= \mathcal{U}(r,s)\mathcal{U}(t,r)x \, .
\end{align*}
This corresponds to the notion of \textit{negatively time-ordered exponentials}. 

\begin{definition}
    Let $\{\mathcal{U}(t,s)\}_{t,s\in\mathbb{R}}$ be a strongly continuous family of bounded linear operators on $H$ such that for all $r,s,t\in \mathbb{R}$
    \begin{align}
        \mathcal{U}(t,t) &= 1 \, , \\
        \mathcal{U}(r,s)\mathcal{U}(t,r) &= \mathcal{U}(t,s) \, . \label{equ:group_property}
    \end{align}
    For all $t\in\mathbb{R}$, let $\mathcal{L}(t)$ be defined by
    \begin{align}
        \mathcal{L}(t)x &:= \lim_{h\to 0} \frac{1}{h}[\mathcal{U}(t+h,t)x-x] \label{def:L_t} \intertext{with domain}
        D(\mathcal{L}(t)) &:= \{x\in H: \mathcal{L}(t)x \in H \} \, .
    \end{align}
    Further, let the subspace $D_\mathcal{L}$ be defined by
    \begin{align*}
        D_\mathcal{L} &:= \bigcap_{t\in\mathbb{R}} D(\mathcal{L}(t)) \, .
    \end{align*}
\end{definition}

Note in particular that for all $t,s\in \mathbb{R}$ and $x\in D_\mathcal{L}$
\begin{align}
    \frac{d}{dt} \mathcal{U}(t,s)x &\overset{(\ref{equ:group_property})}{=} \lim_{h\to 0} \frac{1}{h} \mathcal{U}(t,s)[\mathcal{U}(t+h,t)-1]x \\
    &\overset{(\ref{def:L_t})}{=} \mathcal{U}(t,s)\mathcal{L}(t)x \, . \label{equ:time_derivative}
\end{align}
For simplicity, we restrict ourselves to the Mori projection. 
\begin{definition}
For some fixed observable of interest $0\neq z \in H$ and some fixed initial time $t_0\in\mathbb{R}$, let the Mori projection operator $\mathcal{P}(t)$ be given by
\begin{align}
        \mathcal{P}(t) &:= (\cdot,z)_t(z,z)_t^{-1} z  \, , \label{def:ns_projection} \\
        \mathcal{Q}(t) &:= 1-\mathcal{P}(t) \, , 
\end{align}
where we introduced the ``time-dependent'' scalar product
\begin{align}
            (x,y)_t &:= (\mathcal{U}(t,t_0)x,\mathcal{U}(t,t_0)y) \label{def:ns_scalar_product} \, .
\end{align}
\end{definition}

The following proposition is the analogue of proposition \ref{proposition:gle} for non-autonomous systems. 

\begin{proposition}\label{proposition:nsgle}
    Let $t\to \mathcal{U}(t,t_0)\mathcal{L}(t)z$ be continuous for some $z \in D_\mathcal{L}$  with $z\neq0$. For $K\colon\mathbb{R}^2\to\mathbb{C}$ and $\eta\colon\mathbb{R}^2\to H$, consider the system of equations 
    \begin{align}
        \frac{d}{dt} \mathcal{U}(t,t_0)z &= \mathcal{U}(t,t_0)\mathcal{P}(t)\mathcal{L}(t)z+\mathcal{U}(s,t_0)\eta_{ts}+\int^t_s K(t,r)\mathcal{U}(r,t_0)z \, dr \, , \label{equ:nsGLE}\\
        K(t,s) &= -(\eta_{ts},\eta_{ss})_s(z,z)_s^{-1} \, , \label{equ:ns2FDT}
    \end{align}
    for all $t,s \in \mathbb{R}$. There exists a unique solution for $K$ and $\eta$. We call~\eqref{equ:nsGLE} \emph{non-stationary Generalized Langevin Equation}, abbreviated  nsGLE. For all $t\in\mathbb{R}$, the memory kernel $K(t,\cdot)\colon\mathbb{R}\to \mathbb{C}$ is given by the unique continuous solution of the Volterra equation
    \begin{align}
            K(t,s)  &= -(\mathcal{U}(t,s)\mathcal{Q}(t)\mathcal{L}(t)z,\mathcal{Q}(s)\mathcal{L}(s)z)_s(z,z)_s^{-1} \notag \\
                    &\,\,\,\quad + \int^t_s K(t,r)(\mathcal{U}(r,s)z,\mathcal{Q}(s)\mathcal{L}(s)z)_s(z,z)_s^{-1} \, dr \, .
       \label{def:nsk} 
    \end{align}
    The fluctuating forces $\eta\colon\mathbb{R}^2 \to H$ are given by
    \begin{align}
        \eta_{ts} &= \mathcal{U}(t,s)\mathcal{Q}(t)\mathcal{L}(t)z-\int^t_s K(t,r) \mathcal{U}(r,s)z \, dr \, . \label{def:nsff}
    \end{align}
\end{proposition}
\begin{proof}
    Under the given assumptions, one easily verifies that for any fixed $t\in\mathbb{R}$, the coefficient functions of the Volterra equation (\ref{def:nsk}) are continuous. Hence, $K(t,\cdot)$ is well-defined by the unique continuous solution of the Volterra equation (\ref{def:nsk}) \cite[p.~25, Theorem~2.1.1]{burton}. Further, the vector-valued integrals are well-defined and interchange with the scalar product. Computing the time derivative of $t\to\mathcal{U}(t,t_0)z$ immediately yields the nsGLE, eq.~(\ref{equ:nsGLE}),
    \begin{align*}
        \frac{d}{dt}\mathcal{U}(t,t_0)z &\overset{(\ref{equ:time_derivative})}{=} \mathcal{U}(t,t_0)\mathcal{L}(t)z \\
        &=\mathcal{U}(t,t_0)\mathcal{P}(t)\mathcal{L}(t)z+\mathcal{U}(t,t_0)\mathcal{Q}(t)\mathcal{L}(t)z \\
        &\overset{(\ref{equ:group_property})}{=} \mathcal{U}(t,t_0)\mathcal{P}(t)\mathcal{L}(t)z+\mathcal{U}(s,t_0)\mathcal{U}(t,s)\mathcal{Q}(t)\mathcal{L}(t)z  \\
        &\overset{(\ref{def:nsff})}{=} \mathcal{U}(t,t_0)\mathcal{P}(t)\mathcal{L}(t)z+\mathcal{U}(s,t_0)\eta_{ts} + \int^t_s K(t,r)\mathcal{U}(r,t_0)z \, dr \, , 
    \end{align*}
    where we used the group properties in the last two steps. Moreover, inserting $\eta_{ts}$ into eq.~(\ref{equ:ns2FDT}) yields eq.~(\ref{def:nsk}). Hence, eq.~(\ref{equ:ns2FDT}) holds. 

    It is left to show that $K$ and $\eta$ are uniquely determined by eqs.~(\ref{equ:nsGLE}) and (\ref{equ:ns2FDT}). Clearly, eq.~(\ref{equ:nsGLE}) holds if and only if $\eta_{ts}$ is given by eq.~(\ref{def:nsff}). Furthermore, for any fixed $t\in\mathbb{R}$, inserting $\eta_{ts}$ into eq.~(\ref{equ:ns2FDT}) yields the Volterra equation (\ref{def:nsk}). Hence, the assertion follows from the uniqueness of the solution of eq.~(\ref{def:nsk}). 
\end{proof}

Similar to corollary \ref{corolloary:orthogonality}, we convince ourselves that the fluctuating forces $\eta_{ts}$ as defined in proposition \ref{proposition:nsgle} are indeed orthogonal to the observable of interest $z$ w.r.t. the scalar product $(\cdot,\cdot)_s$. As a consequence, the non-stationary version of the 2FDT holds. 

\begin{corollary}[Orthogonality and 2FDT]\label{corollary:2fdt}
    Under the assumptions of proposition \ref{proposition:nsgle}, we have
    \begin{align*}
        (\eta_{ts},z)_s &= 0 \, , \\
        K(t,s) &= -(\eta_{tt_0},\eta_{st_0})(z,z)_s^{-1} \, .
    \end{align*}
\end{corollary}
\begin{proof}
    Taking the derivative of eq.~(\ref{equ:nsGLE}) w.r.t. $s$ yields
    \begin{align}
        \frac{d}{ds} \mathcal{U}(s,t_0)\eta_{ts} &= K(t,s)\mathcal{U}(s,t_0)z \, . \label{equ:proof_ns_0}
    \end{align}
    With this, we obtain
    \begin{align*}
        \frac{d}{ds}(\eta_{ts},z)_s &\overset{(\ref{def:ns_scalar_product})}{=} \frac{d}{ds} (\mathcal{U}(s,t_0)\eta_{ts},\mathcal{U}(s,t_0)z) \\
        &\overset{(\ref{equ:proof_ns_0})(\ref{equ:time_derivative})(\ref{def:ns_scalar_product})}{=} K(t,s)(z,z)_s + (\eta_{ts},\mathcal{L}(s)z)_s \\
        &\overset{(\ref{equ:ns2FDT})}{=} -(\eta_{ts},\eta_{ss})_s + (\eta_{ts},\mathcal{L}(s)z)_s \\
        &\overset{(\ref{def:nsff})}{=} -(\eta_{ts},\mathcal{Q}(s)\mathcal{L}(s)z)_s + (\eta_{ts},\mathcal{L}(s)z)_s \\
        &= (\eta_{ts},\mathcal{P}(s)\mathcal{L}(s)z)_s \\
        &\overset{(\ref{def:ns_projection})}{=} (\eta_{ts},z)_s(\mathcal{L}(s)z,z)_s^\ast(z,z)_s^{-1} \, .
    \end{align*}
    This implies
    \begin{align*}
         (\eta_{ts},z)_s &= (\eta_{tt},z)_t\exp\left\{\int^s_t(\mathcal{L}(r)z,z)_r^\ast(z,z)_r^{-1} \, dr\right\} \, .
    \end{align*}
    The initial value is given by
    \begin{align*}
        (\eta_{tt},z)_t &\overset{(\ref{def:nsff})}{=} (\mathcal{Q}(t)\mathcal{L}(t)z,z)_t \\
        &= (\mathcal{L}(t)z,z)_t-(\mathcal{P}(t)\mathcal{L}(t)z,z)_t \\
        &\overset{(\ref{def:ns_projection})}{=} (\mathcal{L}(t)z,z)_t-(\mathcal{L}(t)z,z)_t(z,z)_t^{-1}(z,z)_t \\
        &= 0 \, .
    \end{align*}
    Hence, we obtain the orthogonality relation
    \begin{align}
        (\eta_{ts},z)_s &= 0 \, . \label{equ:proof_ns_orthogonality}
    \end{align}
    Now, let us show that $r\to(\eta_{tr},\eta_{sr})_r$ is constant.
    \begin{align*}
        \frac{d}{dr} (\eta_{tr},\eta_{sr})_r &\overset{(\ref{def:ns_scalar_product})}{=} \frac{d}{dr}(\mathcal{U}(r,t_0)\eta_{tr},\mathcal{U}(r,t_0)\eta_{sr}) \\
        &\overset{(\ref{equ:proof_ns_0})}{=} K(t,r)(\mathcal{U}(r,t_0)z,\mathcal{U}(r,t_0)\eta_{sr})+K(s,r)^\ast(\mathcal{U}(r,t_0)\eta_{tr},\mathcal{U}(r,t_0)z) \\
        &\overset{(\ref{def:ns_scalar_product})}{=}  K(t,r)(z,\eta_{sr})_r+K(s,r)^\ast(\eta_{tr},z)_r \\
        &\overset{(\ref{equ:proof_ns_orthogonality})}{=} 0 \, .
    \end{align*}
    Hence, we have
    \begin{align*}
        K(t,s) &\overset{(\ref{equ:ns2FDT})}{=} -(\eta_{ts},\eta_{ss})_s(z,z)_s^{-1} \\
        &= -(\eta_{tr},\eta_{sr})_r(z,z)_s^{-1} \, .
    \end{align*}
    For $r=t_0$, this implies
    \begin{align*}
        K(t,s) &= -(\eta_{tt_0},\eta_{st_0})(z,z)_s^{-1} \, .
    \end{align*}
    This concludes the proof. 
\end{proof}

Proposition \ref{proposition:nsgle} and corollary \ref{corollary:2fdt} establish the nsGLE and 2FDT except that we cannot provide a rigorous expression for the fluctuating forces in terms of the orthogonal dynamics. This would require the treatment of non-autonomous abstract Cauchy problems, which are generally much more difficult than autonomous abstract Cauchy problems \cite{engel}.

\section{Discussion and Conclusion}\label{sec:conclusion}

In general, the scope of validity of the Mori-Zwanzig projection operator technique is a subtle issue. In order to derive the GLE, one typically makes use of the Dyson formula  
\begin{align}
    \mathcal{G}(t)x &= \mathcal{U}(t)x - \int^t_0 \mathcal{U}(t-s) \mathcal{PL} \mathcal{G}(s)x \,  ds   \, , \label{equ:dyson_formula}
\end{align}
where $\{\mathcal{G}(t)\}_{t\geq 0}$ denotes the evolution operator for the orthogonal dynamics. The main difficulty regarding a rigorous mathematical treatment for infinite-rank projections is the absence of a reasonable definition for the orthogonal dynamics $\{\mathcal{G}(t)\}_{t\geq 0}$. 

\begin{remark*}
    If $\mathcal{PL}$ is bounded and if $\{\mathcal{G}(t)\}_{t\geq 0}$ is a strongly continuous semigroup with generator $\mathcal{QL}=\mathcal{L}-\mathcal{PL}$, the Dyson identity (\ref{equ:dyson_formula}) equals the variation of constants formula obtained from the bounded perturbation theorem \cite[p.~161, Corollary~1.7]{engel}. However, $\mathcal{PL}$ is typically unbounded. Moreover, the perturbed operator $\mathcal{QL}$ might fail to be closed, in which case $\mathcal{QL}$ does not generate a strongly continuous semigroup, cf. \cite{givon2005} and example \ref{example:givon}. Hence, we emphasize the results by John Ball \cite{ball}.
\end{remark*}

The Dyson formula can be viewed as an application of John Ball's version of the variation of constants formula \cite{ball}, cf.~sec.~\ref{ssec:semigroup_approach}. This requires that $t\to \mathcal{PL}\mathcal{G}(t)x$ is continuous and $t\to \mathcal{G}(t)x$ has to be a weak solution of
\begin{align}
    \frac{d}{dt}\mathcal{G}(t)x &= \mathcal{L}\mathcal{G}(t)x - \mathcal{PL}\mathcal{G}(t)x 
\end{align}
in the sense of Ball \cite{ball}. This is the case, for instance, if $x\in D(\mathcal{L})\cap \mathcal{Q}H$ and $\mathcal{G}(t)x:= \mathcal{Q}e^{\mathcal{LQ}t}x$, where we assume that $\mathcal{LQ}$ generates a strongly continuous semigroup. In general, it turns out to be extremely difficult to state for which systems and projections this assumption holds. 

We have therefore focussed on the simpler case of rank-one projections (Mori's projection). A similar treatment can be carried out for finite-rank projections. Let the time-evolution operator $\{\mathcal{U}(t)\}_{t\geq 0}$ be a strongly continuous semigroup on a complex Hilbert space $H$ with generator $\mathcal{L}$. The projection operator be given by $\mathcal{P}:= (.,z)(z,z)^{-1}z$, where $(.,.)$ denotes the scalar product with complex conjugation in its second argument and $z\neq 0$ denotes the observable of interest. Its complementary projection is denoted by $\mathcal{Q}:=1-\mathcal{P}$. Under these assumptions, we have proven the following statements in two independent ways.
\begin{enumerate}
    \item If $z\in D(\mathcal{L}^\dagger)$, then $\overline{\mathcal{QL}}\mathcal{Q}$ generates a strongly continuous semigroup, denoted by $\{\mathcal{G}(t)\}_{t\geq 0}$. 
    \item If $z\in D(\mathcal{L})\cap D(\mathcal{L}^\dagger)$, the GLE takes the form
\begin{align*}
    \frac{d}{dt}\mathcal{U}(t)z &= \mathcal{U}(t)\mathcal{PL}z + \eta_t +\int^t_0 K(t-s)\mathcal{U}(s)z  ds \, , 
\end{align*}
where the fluctuating forces $\eta$ and memory kernel $K$ are given by
\begin{align}
    \eta_t&:=\mathcal{G}(t)\mathcal{QL}z \, , \\
    K(t)&:=(\eta_t,\mathcal{Q}\mathcal{L}^\dagger z)(z,z)^{-1} \, . \label{equ:2fdt_conclusion}
\end{align}
\end{enumerate}

A semigroup approach using the bounded perturbation theorem is given in sec.~\ref{ssec:semigroup_approach}. (The first statement (i) is given by lemma \ref{lemma:strongly_continuous_semigroup_2} and the second statement (ii) is given by theorem \ref{theorem:gle_2}.) The proof resembles a variation of constants for the approximate orbit maps obtained from the exponential formula for strongly continuous semigroups. A limiting process then yields the desired results without additional assumptions. 

An alternative proof that does not require the bounded perturbation theorem is given in sec.~\ref{ssec:volterra}. Moreover, we neither invoke the variation of constants formula nor the exponential formula for strongly continuous semigroups. Instead, the orbit maps for the orthogonal dynamics are defined by means of solutions to linear Volterra equations. A direct analysis shows that these orbit maps are mild solutions for the abstract Cauchy problem associated to $\overline{\mathcal{QL}}\mathcal{Q}$ (theorem \ref{theorem:mild_solution}).  Using Grönwall's inequality, we then show that these orbit maps constitute a strongly continuous semigroup generated by $\overline{\mathcal{QL}}\mathcal{Q}$ (corollary \ref{corollary:generator}). This proves the first statement (i). In addition, corollary \ref{corollary:generator} yields $\overline{\mathcal{QL}}\mathcal{Q}=\mathcal{QLQ}$. With these results, the second statement (ii) follows by construction due to the uniqueness of solutions to linear Volterra equations (theorem \ref{theorem:gle}). The results from sec.~\ref{sec:non_autonomous_systems} suggest that the same line of argument applies to non-autonomous systems. In order to derive analogous properties of the orthogonal dynamics, however, one would have to treat non-autonomous abstract Cauchy problems, which are generally much more difficult. 

We have allowed for non-unitary time-evolution in a generic setting. Hence, there might be pathological cases where the domain intersection $D(\mathcal{L})\cap D(\mathcal{L}^\dagger)$ is small or even trivial, cf. \cite{arlinskii}. However, under the assumptions of theorem \ref{theorem:time_evolution} for classical statistical mechanics, the space of test functions $C^1_c(\mathbb{R}^n)$ is a subset of the domain intersection. Thus, $D(\mathcal{L})\cap D(\mathcal{L}^\dagger)$ is a dense subspace of observables in $H$. 

For the case of unitary time-evolution, eq.~(\ref{equ:2fdt_conclusion}) is reduced to the well-known second fluctuation dissipation theorem $K(t)=-(\eta_t,\eta_0)(z,z)^{-1}$. Furthermore, the orthogonal dynamics $\{\mathcal{G}(t)\}_{t\in \mathbb{R}}$ is a strongly continuous unitary group. This implies that the fluctuating forces are stationary, i.e., $(\eta_{t+r},\eta_{s+r})=(\eta_t,\eta_s)$ for all $t,s,r \in \mathbb{R}$. 

All in all we have presented a new path towards a rigorous formulation of the Mori projection operator technique under minimal assumptions in a general Hilbert space setting. A second set of proofs validates the main results by means of semigroup theory. 

\section*{Acknowledgment}
We acknowledge funding by the German Research Foundation (DFG) in project 535083866.
We thank Carsten Hartmann, Mario Ayala, Thomas Franosch, Felix H\"ofling, Anja Seegebrecht, Fabian Koch and Klara Resch for useful feedback on our work.

\section*{Data availability statement}
No new data were created or analysed in this study.

\section*{Conflict of interest}
The authors have no conflict of interest.

\appendix

\section{Proof of example \ref{example:givon}}\label{app:example}
\begin{proof}
     Clearly, we have $\mathcal{Q}^2=\mathcal{Q}$ and $\|\mathcal{Q}\|\leq 1$. Since $\mathcal{Q}\neq 0$, we have $\|\mathcal{Q} f\|=\|f\|$ for some $f\in \mathcal{Q}H$. Hence $\|\mathcal{Q}\|\geq 1$. Thus, $\mathcal{Q}$ is a projection of norm one. This implies that $\mathcal{Q}=\mathcal{Q}^\dagger$ is an orthogonal projection \cite[p.~262, Satz (Theorem)~V.5.9]{werner}. 
    
    Recall that the weak derivative $\mathcal{L}$ is skew-adjoint on $L^2(\mathbb{R})$ with domain $D(\mathcal{L})=W^{1,2}(\mathbb{R})$. The space $W^{1,2}(\mathbb{R})$ is the set of weakly differentiable functions with weak derivative in $L^2(\mathbb{R})$, or equivalently, the space of absolutely continuous functions with point-wise derivative in $L^2(\mathbb{R})$. Furthermore, the weak derivative coincides with the point-wise derivative almost everywhere. Next, observe that $\mathcal{Q}f\in D(\mathcal{L})$, if and only if $f$ is absolutely continuous on $\mathbb{R}_+$ with point-wise derivative in $L^2(\mathbb{R}_+)$ and $f(0)=0$. More explicitly, we have
    \begin{align*}
        D(\mathcal{LQ}) &= \{ f\in L^2(\mathbb{R}): \mathcal{Q}f \in D(\mathcal{L})  \} \\
        &= L^2(\mathbb{R}_-) \oplus W^{1,2}_0(\mathbb{R}_+)  \, , \intertext{where}
         W^{1,2}_0(\mathbb{R}_+) &:= \{f\in W^{1,2}(\mathbb{R}_+): f(0)= 0\} \, .
    \end{align*}
    Further, it is straightforward to show that $D(\mathcal{LQ})$ is dense in $L^2(\mathbb{R})$. 
    
    Hence, $\mathcal{QLQ}$ is densely defined and skew-symmetric. This implies that $\mathcal{QLQ}$ is closable, and $\overline{\mathcal{QLQ}}$ is again skew-symmetric \cite[p.~252-255]{reed1980}. The adjoint of $\overline{\mathcal{QLQ}}$ is given by
    \begin{align*}
        \overline{\mathcal{QLQ}}^\dagger &= (\mathcal{QLQ})^\dagger = (\mathcal{LQ})^\dagger \mathcal{Q} =- {(\mathcal{QL})^\dagger}^\dagger \mathcal{Q} = -\overline{\mathcal{QL}}\mathcal{Q} \, ,
    \end{align*}
    where we used \cite[p.~252, Theorem VIII.1]{reed1980} and \cite[p.~330, Theorem~13.2]{rudin1974}. One might expect that $\overline{\mathcal{QLQ}}$ is skew-adjoint and generates a strongly continuous unitary group. However, this is not the case \cite{givon2005}.
    
    We have to show that $\overline{\mathcal{QLQ}} \subsetneq -\overline{\mathcal{QLQ}}^\dagger$. Since $\overline{\mathcal{QLQ}}$ is densely defined, closed, and skew-symmetric, we have \cite[p.~255]{reed1980}
    \begin{align*}
        \overline{\mathcal{QLQ}} \subseteq -\overline{\mathcal{QLQ}}^\dagger  \, .
    \end{align*}
    Hence, it suffices to find a function $f\in D(\overline{\mathcal{QLQ}}^\dagger)$ with $f\notin D(\overline{\mathcal{QLQ}})$. Let $f(x):=\Theta(x)e^{-x}$. For all $g\in D(\overline{\mathcal{QLQ}})$, there exists a sequence $D(\mathcal{QLQ})\ni g_n \to g$ with $\lim_{n\to\infty} \mathcal{QLQ}g_n=\overline{\mathcal{QLQ}}g$. Note that $W^{1,2}_0(\mathbb{R}_+)$ is given by the closure of $C^1_c(\mathbb{R}_+)$ in $W^{1,2}(\mathbb{R}_+)$ \cite[p.~217, Theorem~8.12]{Brezis2011}. Hence, $C^1_c(\mathbb{R}_+)$ is dense in $W^{1,2}_0(\mathbb{R}_+)$. Thus, for all $n\in\mathbb{N}$, there exists a sequence $C^1_c(\mathbb{R}_+) \ni \phi_{mn} \to g_n \in W^{1,2}_0(\mathbb{R}_+)$. Thus, we find 
    \begin{align*}
        (f,\overline{\mathcal{QLQ}}g) &= \lim_{n\to\infty} (f, \mathcal{QLQ}g_n) \\
        &= \lim_{n\to\infty} \int^\infty_0 f(x) \partial_x g_n(x) \, dx \\
            &= \lim_{n\to\infty} \lim_{m\to\infty} \int^\infty_0 f(x) \phi_{mn}'(x) \, dx \\
        &= - \lim_{n\to\infty} \lim_{m\to\infty} \int^\infty_0 f'(x)  \phi_{mn}(x) \, dx   \\
        &= \lim_{n\to\infty} \int^\infty_0 f(x) g_n(x) \, dx \\
        &= (f,g) \, ,
    \end{align*}
    where $'$ denotes the point-wise derivative. This implies
    \begin{align*}
        f &= \overline{\mathcal{QLQ}}^\dagger f \, , \intertext{i.e., if $\mathcal N$ denotes the null space}
        f &\in\mathcal{N}(\overline{\mathcal{QLQ}}^\dagger-1) \, .
    \end{align*}
    According to von Neumann's formula \cite[p.~82, Lemma~2.2]{gallone}, we have
    \begin{align*}
        D(\overline{\mathcal{QLQ}}^\dagger) = D(\overline{\mathcal{QLQ}}) \oplus_G \mathcal{N}(\overline{\mathcal{QLQ}}^\dagger-1)\oplus_G \mathcal{N}(\overline{\mathcal{QLQ}}^\dagger+1) \, ,
    \end{align*}
    where $\oplus_G$ denotes the orthogonal sum induced by the graph norm for $\overline{\mathcal{QLQ}}^\dagger$. This implies $f\in D(\overline{\mathcal{QLQ}}^\dagger)$, but $f\notin D(\overline{\mathcal{QLQ}})$ as desired. 

    Hence, $\overline{\mathcal{QLQ}}\neq -\overline{\mathcal{QLQ}}^\dagger$. Thus, $\overline{\mathcal{QL}}\mathcal{Q} = -\overline{\mathcal{QLQ}}^\dagger$ is not skew-adjoint. According to Stone's theorem \cite[p.~89, Theorem~3.24]{engel}, $\overline{\mathcal{QL}}\mathcal{Q}$ does not generate a strongly continuous unitary group. 
\end{proof}

\section{Proof of theorem \ref{theorem:time_evolution}}\label{app:time_evolution}

For the proof of theorem \ref{theorem:time_evolution}, we require the product rule and partial integration for weakly differentiable functions. Typically, only test functions in $C^\infty_c(\mathbb{R}^n)$ are considered \cite{evans}. Thus, we have to show that the standard results extend to test functions in $C^1_c(\mathbb{R}^n)$. 

\begin{proposition}\label{proposition:weak_derivative}
    Let $\Omega=\mathbb{R}^n$, and $\mu$ be the Lebesgue measure on $\Omega$. Let $f\in W^{1,1}(\Omega)$ and $g \in C^1_c(\Omega)$. Then, $fg\in W^{1,1}(\Omega)$, $\partial_i (fg)= (\partial_i f)g+f(\partial_ig)$, and
    \begin{align*}
        \int_\Omega f (\partial_i g) \, d\mu &= -\int_\Omega (\partial_if)  g \, d\mu \, , 
    \end{align*}
    for all $i=1,\cdots,n$, where $\partial_i$ denotes the weak derivative.  
\end{proposition}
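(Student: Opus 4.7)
The plan is to establish the product rule for weak derivatives via mollification of $f$, and then derive the integration by parts identity from the fact that $fg$ is compactly supported in $W^{1,1}$. First I would fix a standard mollifier $\rho_\epsilon \in C^\infty_c(\Omega)$ and set $f_\epsilon := f * \rho_\epsilon \in C^\infty(\Omega)$. By standard mollification theory for $W^{1,1}$ functions, $f_\epsilon \to f$ in $L^1(\Omega)$ and $\partial_i f_\epsilon = (\partial_i f) * \rho_\epsilon \to \partial_i f$ in $L^1(\Omega)$. Since $f_\epsilon$ and $g$ are both classically differentiable, the pointwise Leibniz rule gives $\partial_i(f_\epsilon g) = (\partial_i f_\epsilon) g + f_\epsilon (\partial_i g)$.

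Next I would test against an arbitrary $\phi \in C^\infty_c(\Omega)$. Classical integration by parts applied to the smooth function $f_\epsilon g$ yields
\begin{equation*}
\int_\Omega f_\epsilon g\, \partial_i \phi \, d\mu = -\int_\Omega \bigl[(\partial_i f_\epsilon) g + f_\epsilon (\partial_i g)\bigr]\phi \, d\mu.
\end{equation*}
Because $g$ and $\partial_i g$ are bounded with compact support $K := \mathrm{supp}(g)$, and $\phi$ likewise has compact support, the integrands are dominated by fixed $L^1$ functions times $\|f_\epsilon - f\|_{L^1(K)} \to 0$ and $\|\partial_i f_\epsilon - \partial_i f\|_{L^1(K)} \to 0$. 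Passing to the limit $\epsilon \to 0$ in both sides gives
\begin{equation*}
\int_\Omega fg \, \partial_i \phi \, d\mu = -\int_\Omega \bigl[(\partial_i f) g + f (\partial_i g)\bigr]\phi \, d\mu,
\end{equation*}
which is precisely the weak formulation of $\partial_i(fg) = (\partial_i f)g + f(\partial_i g)$. The estimates $\|fg\|_{L^1} \leq \|g\|_\infty \|f\|_{L^1}$, $\|(\partial_i f)g\|_{L^1} \leq \|g\|_\infty \|\partial_i f\|_{L^1}$, and $\|f(\partial_i g)\|_{L^1} \leq \|\partial_i g\|_\infty \|f\|_{L^1}$ then place $fg$ in $W^{1,1}(\Omega)$.

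For the integration by parts identity, the key observation is that $fg \in W^{1,1}(\Omega)$ has compact support contained in $K$. For any compactly supported $h \in W^{1,1}(\Omega)$, the identity $\int_\Omega \partial_i h \, d\mu = 0$ holds: indeed, the mollifications $h_\epsilon$ lie in $C^\infty_c(\Omega)$ for $\epsilon$ sufficiently small, the classical divergence theorem gives $\int_\Omega \partial_i h_\epsilon \, d\mu = 0$, and $L^1$ convergence $\partial_i h_\epsilon \to \partial_i h$ passes the identity to the limit. Applying this to $h := fg$ together with the weak product rule already established yields $0 = \int_\Omega (\partial_i f) g \, d\mu + \int_\Omega f (\partial_i g) \, d\mu$, which is the desired formula.

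The main obstacle I expect is purely bookkeeping: arranging the mollification so that all convergences occur in $L^1$ on the bounded set $K$ (where integrability of products with merely-$L^1$ factors is safe), and verifying that $fg$ is genuinely compactly supported (not just almost-everywhere vanishing outside a compact set) so that the vanishing-integral lemma in the final step truly applies. Each individual step is a standard exercise in distributional calculus, with no deep analytical input beyond dominated convergence and the approximation properties of the mollifier.
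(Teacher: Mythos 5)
Your proof is correct, but it takes a genuinely different route from the paper's. The paper splits the two claims and handles them with different approximations: for the integration-by-parts identity it approximates $f$ by a sequence in $C^\infty_c(\Omega)$ converging in $W^{1,1}$ (Brezis' density theorem) and passes to the limit using that $g,\partial_i g\in L^\infty$; for the product rule it instead mollifies $g$, invoking the already-known product rule for a $W^{1,1}$ function times a $C^\infty$ function (Evans) and passing to the limit in the weak formulation using uniform convergence of $g_n,\partial_i g_n$. You do the reverse and in a different order: you mollify $f$ only, use nothing beyond the classical Leibniz rule and classical integration by parts on the $C^1_c$ functions $f_\epsilon g$, obtain the weak product rule first, and then deduce the integration-by-parts identity as a corollary via the lemma that $\int_\Omega \partial_i h\,d\mu=0$ for compactly supported $h\in W^{1,1}(\mathbb{R}^n)$. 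This buys a more self-contained argument (no appeal to the $W^{1,1}\times C^\infty$ product rule or to density of $C^\infty_c$ in $W^{1,1}$, only standard mollification facts, which work globally on $\mathbb{R}^n$ since $\partial_i(f*\rho_\epsilon)=(\partial_i f)*\rho_\epsilon$ with no boundary issues), at the cost of the extra zero-integral lemma; the paper's version leans on cited results and is shorter on the product-rule side. One small remark: your final worry about $fg$ being "genuinely" compactly supported is harmless — it suffices that $fg$ vanishes a.e.\ outside $\mathrm{supp}(g)$, since the mollification $h_\epsilon$ only sees the a.e.\ equivalence class and is then supported in the compact set $\mathrm{supp}(g)+\overline{B_\epsilon}$, so your vanishing-integral lemma applies as stated.
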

\begin{proof}
    There exists a sequence $C^\infty_c(\Omega)\ni f_n \to f \in W^{1,1}(\Omega)$ \cite[p.~265, Theorem~9.2]{Brezis2011}. From Hölder's inequality  \cite[p.~152, Theorem~7.16]{klenke}, we know that $\phi \to \int_\Omega \phi\psi d\mu $ is a bounded linear functional on $L^1(\Omega)$ for all $\psi \in L^\infty(\Omega)$. Since $\partial_i g, g \in L^\infty(\Omega)$, this implies
    \begin{align*}
        \int_\Omega f(\partial_i g) \, d\mu &= \lim_{n\to\infty} \int_\Omega f_n(\partial_i g)\, d\mu = -\lim_{n\to\infty} \int_\Omega (\partial_i f_n) g \, d\mu= -  \int_\Omega (\partial_i f) g \, d\mu\, .
    \end{align*}
    
    Suppose there exists a sequence $C^\infty_c(\Omega) \ni g_n \to g \in W^{1,\infty}(\Omega)$. For all $\varphi \in C^\infty_c(\Omega)$, we have \cite[p.~246, Theorem~1]{evans}
    \begin{align*}
        \int_\Omega  [(\partial_if) g_n + f(\partial_i g_n)]\varphi  \, d\mu &= - \int_\Omega  fg_n (\partial_i\varphi)  \, d\mu \, .
    \end{align*}
    Due to Hölder's inequality, the sequences $(\partial_if) g_n, f(\partial_i g_n), fg_n$ are Cauchy in $L^1(\Omega)$. Since $\varphi, \partial_i\varphi \in L^\infty(\Omega)$, taking the limit $n\to\infty$ yields
    \begin{align*}
        \int_\Omega  [(\partial_if) g + f(\partial_i g)]\varphi  \, d\mu &= - \int_\Omega  fg (\partial_i\varphi)  \, d\mu \, .
    \end{align*}
    Hence, $\partial_i(fg)= (\partial_if) g + f(\partial_i g)$. Clearly, $fg \in W^{1,1}(\Omega)$. 
    
    It is left to show that there exists a sequence $C^\infty_c(\Omega) \ni g_n \to g \in W^{1,\infty}(\Omega)$. Let $\rho_n$ be a sequence of mollifiers \cite[p.~108]{Brezis2011}, and set $g_n:= \rho_n\star g$, where $\star$ denotes the convolution. Since $g \in C_c(\Omega)$, $g_n\to g$ uniformly \cite[p.~108, Proposition~4.21]{Brezis2011}. Since $g,\rho_n\in C^1_c(\Omega)$, it follows from the Leibniz integral rule \cite[p.~56, Theorem~2.27]{folland} that $\partial_i g_n = \rho_n \star \partial_i g$. Since $\partial_i g \in C_c(\Omega)$ , $\partial_i g_n\to \partial_i g$ uniformly \cite[p.~108, Proposition~4.21]{Brezis2011}. Clearly, $g_n \in C^\infty_c(\Omega)$. Hence, we have $C^\infty_c(\Omega) \ni g_n \to g \in W^{1,\infty}(\Omega)$. This concludes the proof. 
\end{proof}

It is a standard result that the space $C_c(\mathbb{R}^n)$ is dense in $L^2(P)$ for regular Borel measures on $\mathbb{R}^n$ \cite{cohn}. In the following, we convince ourselves that this result applies to the probability measure $P$ from eq.~(\ref{def:probability_measure}). Furthermore, we need to extend this result to the space $C^1_c(\mathbb{R}^n)$. 

\begin{proposition}\label{proposition:dense}
    Let $\Omega = \mathbb{R}^n$ and let $P$ be the probability measure from eq.~(\ref{def:probability_measure}). Then, $C^1_c(\Omega)$ is dense in $L^2(P)$.
\end{proposition}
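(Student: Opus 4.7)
The plan is to approximate an arbitrary $f\in L^2(P)$ in three stages: first by a bounded function of compact support, then by a continuous function of compact support, and finally by a $C^1$ function of compact support. Since the claim is standard functional analysis once $P$ is recognized as a Radon measure, I would aim for a concise three-step argument rather than anything novel.

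First I would truncate. Setting $f_k := f\cdot\mathbf{1}_{\{|f|\leq k\}\cap B_k}$, where $B_k\subset\mathbb{R}^n$ denotes the closed ball of radius $k$, one has $|f_k|^2\leq|f|^2\in L^1(P)$ and $f_k\to f$ pointwise; dominated convergence then gives $f_k\to f$ in $L^2(P)$. This reduces the problem to approximating bounded, compactly supported measurable functions.

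Second, I would invoke Lusin's theorem. Since $\rho\in L^1(\mu)$ is nonnegative and normalized, the measure $P$ defined in eq.~(\ref{def:probability_measure}) is a finite Borel measure on $\mathbb{R}^n$ that is absolutely continuous with respect to Lebesgue measure, hence a Radon measure. For each bounded, compactly supported $f_k$ and each $\varepsilon>0$, Lusin's theorem yields $g\in C_c(\Omega)$ with $\|g\|_\infty\leq\|f_k\|_\infty$ and $P(\{g\neq f_k\})<\varepsilon$. The resulting error is controlled by
\begin{align*}
\|g-f_k\|_{L^2(P)}^2 \,\leq\, (2\|f_k\|_\infty)^2\,\varepsilon\,,
\end{align*}
so $C_c(\Omega)$ is $L^2(P)$-dense.

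Third, I would mollify. For $g\in C_c(\Omega)$, let $(\rho_n)$ be a standard family of $C^\infty$ mollifiers with supports shrinking to the origin, and set $g_n := \rho_n\star g\in C^\infty_c(\Omega)\subseteq C^1_c(\Omega)$. The supports $\mathrm{supp}(g_n)$ are contained in a fixed compact set (the $1$-neighbourhood of $\mathrm{supp}(g)$, say), and $g_n\to g$ uniformly by uniform continuity of $g$; see \cite[p.~108, Proposition~4.21]{Brezis2011}. Since $P(\Omega)=1$, this gives $\|g_n-g\|_{L^2(P)}\leq\|g_n-g\|_\infty\to 0$. A routine $\varepsilon/3$ chaining of the three estimates concludes the proof.

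The only substantive point is the middle step, where one must know that Lusin's theorem applies to $P$ and produces an approximant with the same $L^\infty$ bound as $f_k$; this is a standard consequence of $P$ being a Radon measure. Everything else is soft.
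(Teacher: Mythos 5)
Your argument is correct, but it follows a different route from the paper. You establish density of $C_c(\Omega)$ by truncation plus Lusin's theorem and then upgrade $C_c$ to $C^1_c$ (in fact $C^\infty_c$) by mollification; the paper instead cites a textbook density theorem for regular measures on locally compact Hausdorff spaces (Cohn, Prop.~7.4.3) to get $C_c(\Omega)$ dense in $L^2(P)$, and then uses the Stone--Weierstra\ss{} theorem to show $C^1_c(\Omega)$ is uniformly dense in $C_c(\Omega)$, concluding via $P(\Omega)=1$. The two approaches are of comparable depth -- your Lusin step and the paper's cited density theorem both ultimately rest on the regularity of $P$ as a finite Borel measure on $\mathbb{R}^n$ -- but yours is more hands-on and constructive (explicit truncation, explicit error bound $(2\|f_k\|_\infty)^2\varepsilon$, smooth approximants with controlled supports), whereas the paper's is shorter at the price of invoking two black-box theorems. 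One point you gloss over, and which the paper handles explicitly: elements of $L^2(P)$ are measurable with respect to the Lebesgue $\sigma$-algebra $\Sigma$, not just the Borel $\sigma$-algebra, while Lusin's theorem in its standard form applies to the Borel--Radon restriction of $P$. This is easily repaired -- since $P\ll\mu$, every $\Sigma$-measurable $f_k$ agrees $P$-a.e.\ with a Borel function, or alternatively one verifies (as the paper does via the completion) that $P$ is regular on all of $\Sigma$ -- but a sentence to that effect should be added.
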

\begin{proof}
    Suppose that $P$ is regular on $(\Omega,\Sigma)$. Since $\Omega$ is a locally compact Hausdorff space and $\mathcal{B}(\Omega)\subseteq \Sigma$, it follows from \cite[p.~207, Proposition~7.4.3]{cohn} that $C_c(\Omega)$ is dense in $L^2(P)$. Further, the space $C^1_c(\Omega)$ is a subalgebra of $C_0(\Omega)$ which separates points, and for all $\mathbf{x}\in \Omega$ there exists a function in $C^1_c(\Omega)$ that does not vanish at $\mathbf{x}$. Hence, by the Stone-Weierstraß theorem \cite[p.~392, Theorem~D.23]{cohn}, it follows that $C^1_c(\Omega)$ is uniformly dense in $C_c(\Omega)$. Since $P(\Omega)=1$, this implies that $C^1_c(\Omega)$ is dense in $C_c(\Omega)$ w.r.t. the norm on $L^2(P)$. Since $C_c(\Omega)$ is dense in $L^2(P)$, this implies that $C^1_c(\Omega)$ is dense in $L^2(P)$. 

    It is left to show that $P$ is regular. Since $P$ is a finite measure on $(\Omega,\mathcal{B}(\Omega))$, it follows from \cite[p.~34, Proposition~1.5.6]{cohn} that $P$ is regular on $(\Omega,\mathcal{B}(\Omega))$. Since $\Sigma$ is the completion of $\mathcal{B}(\Omega)$ under the Lebesgue measure on $(\Omega,\mathcal{B}(\Omega))$ \cite[p.~32, Proposition~1.5.2]{cohn}, for all $\mathcal{M}\in \Sigma$, there exists $\mathcal{M}',\mathcal{M}'' \in \mathcal{B}(\Omega)$ such that $\mathcal{M}'\subseteq \mathcal{M}\subseteq\mathcal{M}''$ and $\mathcal{M}''-\mathcal{M}'$ has Lebesgue measure zero. By definition, $P(\mathcal{M}''-\mathcal{M}')=0$. This implies $P(\mathcal{M}')=P(\mathcal{M})=P(\mathcal{M}'')$. Since $P$ is regular on $(\Omega,\mathcal{B}(\Omega))$, and $\mathcal{M}',\mathcal{M}'' \in \mathcal{B}(\Omega)$, it follows that $P$ is regular on $(\Omega,\Sigma)$.
\end{proof}

Finally, let us briefly recall some elementary facts from the theory of ordinary differential equations. In addition, we require the flow $\varphi_t$ to be $\Sigma$-$\Sigma$ measurable.

\begin{proposition}\label{proposition:flow}
    Let $\Omega=\mathbb{R}^n$ and $\Sigma$ be the Lebesgue $\sigma$-algebra on $\Omega$. Let $\mathbf{F}:\Omega\to\Omega$ be continuously differentiable with bounded derivative. Then, the flow $\varphi:\mathbb{R}\times\Omega \to \Omega$ generated by the vector field $\mathbf{F}$ is Lipschitz continuous in $\Omega$ (uniformly on compact intervals) and $\Sigma$-$\Sigma$ measurable (for all $t\in\mathbb{R}$).
\end{proposition}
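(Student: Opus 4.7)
The plan is to treat the two assertions separately. For the Lipschitz continuity, I would exploit the fact that $\mathbf{F}$ has bounded derivative on $\Omega=\mathbb{R}^n$, so by the mean value theorem it is globally Lipschitz with constant $L:=\|D\mathbf{F}\|_\infty$. Fix a compact interval $[-T,T]$ and $\mathbf{x},\mathbf{y}\in\Omega$. Integrating the flow equation gives
\begin{align*}
    \varphi_t(\mathbf{x})-\varphi_t(\mathbf{y}) &= (\mathbf{x}-\mathbf{y})+\int^t_0[\mathbf{F}\circ\varphi_s(\mathbf{x})-\mathbf{F}\circ\varphi_s(\mathbf{y})]\,ds \, ,
\end{align*}
and taking norms together with the Lipschitz bound for $\mathbf{F}$ yields
\begin{align*}
    \|\varphi_t(\mathbf{x})-\varphi_t(\mathbf{y})\| &\leq \|\mathbf{x}-\mathbf{y}\|+L\int^{|t|}_0 \|\varphi_{\pm s}(\mathbf{x})-\varphi_{\pm s}(\mathbf{y})\|\,ds \, .
\end{align*}
Grönwall's inequality then gives $\|\varphi_t(\mathbf{x})-\varphi_t(\mathbf{y})\|\leq e^{L|t|}\|\mathbf{x}-\mathbf{y}\|$, uniformly in $t\in[-T,T]$, which is exactly the desired Lipschitz bound.

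For the measurability claim, I would first note that the flow is jointly continuous in $(t,\mathbf{x})$ (as stated in the setup preceding theorem \ref{theorem:time_evolution}), so for each fixed $t$ the map $\varphi_t:\Omega\to\Omega$ is continuous and therefore $\mathcal{B}(\Omega)$-$\mathcal{B}(\Omega)$ measurable. The extension from Borel to Lebesgue measurability is where the Lipschitz bound is needed. I would use that $\varphi_t$ is a bi-Lipschitz bijection, since the group property yields $\varphi_{-t}\circ\varphi_t=\text{id}_\Omega$ with $\varphi_{-t}$ likewise Lipschitz by the first part. A standard consequence is that any Lipschitz map on $\mathbb{R}^n$ sends Lebesgue null sets to Lebesgue null sets.

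Now given any $E\in\Sigma$, decompose $E=B\cup N$ with $B\in\mathcal{B}(\Omega)$ and $N$ a Lebesgue null set, which is possible because $\Sigma$ is the Lebesgue completion of $\mathcal{B}(\Omega)$. Then $\varphi_t^{-1}(E)=\varphi_t^{-1}(B)\cup\varphi_t^{-1}(N)$; the first piece is Borel by continuity, and the second equals $\varphi_{-t}(N)$ by the bijectivity of the flow, which is Lebesgue null by the preceding observation applied to the Lipschitz map $\varphi_{-t}$. Hence $\varphi_t^{-1}(E)\in\Sigma$, completing the proof. The main obstacle is the transition from Borel to Lebesgue measurability, which is handled cleanly once the bi-Lipschitz property is in place; everything else reduces to Grönwall's inequality and a standard decomposition argument.
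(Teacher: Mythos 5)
Your proposal is correct and follows essentially the same route as the paper: Lipschitz continuity of $\mathbf{F}$ via the mean value theorem, the flow estimate $\|\varphi_t(\mathbf{x})-\varphi_t(\mathbf{y})\|\leq e^{L|t|}\|\mathbf{x}-\mathbf{y}\|$ via Grönwall, and measurability by combining Borel measurability from continuity with the fact that the Lipschitz map $\varphi_{-t}$ sends Lebesgue null sets to null sets, using that $\Sigma$ is the Lebesgue completion of $\mathcal{B}(\Omega)$. Your explicit decomposition $E=B\cup N$ just spells out the completion argument the paper leaves implicit.
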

\begin{proof}
    First, we show that $\mathbf{F}$ is Lipschitz continuous. Let $\boldsymbol{\gamma}(t):=t\mathbf{x}+(1-t)\mathbf{y}$ for $\mathbf{x},\mathbf{y}\in\Omega$ and $t\in [0,1]$. Then, by the mean value theorem \cite[p.~113, Theorem~15.19]{rudin1964}, we have for some $t_0\in [0,1]$
    \begin{align*}
        |\mathbf{F}(\mathbf{x})-\mathbf{F}(\mathbf{y})| &\leq \Big\lvert \frac{d}{dt} \mathbf{F}(\boldsymbol{\gamma}(t)) \Big\rvert_{t=t_0} = \left| \mathbf{F}'(\boldsymbol{\gamma}(t_0))\cdot (\mathbf{x}-\mathbf{y}) \right|  \leq M |\mathbf{x}-\mathbf{y}| \, ,
    \end{align*}
    where $M:=\sup_\Omega \|\mathbf{F}'\|$. Hence, $\mathbf{F}$ is Lipschitz continuous. By the Picard-Lindelöf theorem \cite[p.~214, Theorem~11]{poeschel2}, it follows that there exists a unique solution $\varphi(\mathbf{x}):\mathbb{R}\to\Omega$ of the initial value problem $\frac{d}{dt} \varphi_t(\mathbf{x}) = \mathbf{F}(\varphi_t(\mathbf{x}))$, $\varphi_0(\mathbf{x})=\mathbf{x}$, for all initial values $\mathbf{x}\in \Omega$. Further, it follows from Grönwall's inequality that $\varphi_t$ is Lipschitz continuous (uniformly on compact intervals) \cite[p.~47, Theorem~4.2]{gruene}. Moreover, we have $\varphi_0=\text{id}_\Omega$ and $\varphi_{t+s} = \varphi_t \circ \varphi_s$ \cite[p.~42-43]{gruene}. 

    It is left to show that $\varphi_t$ is $\Sigma$-$\Sigma$ measurable. Since $\varphi_t$ is continuous, $\varphi_t$ is $\mathcal{B}(\Omega)$-$\mathcal{B}(\Omega)$ measurable \cite[p.~189, Lemma~7.2.1]{cohn}. Since $\varphi_t^{-1}=\varphi_{-t}$ is Lipschitz continuous, $\varphi_t^{-1}$ maps sets of Lebesgue measure zero onto sets of Lebesgue measure zero \cite[p.~153, Lemma~7.25]{rudin1987}. Since $\Sigma$ is the completion of $\mathcal{B}(\Omega)$ under the Lebesgue measure on $(\Omega,\mathcal{B}(\Omega))$ \cite[p.~32, Proposition~1.5.2]{cohn}, this implies that $\varphi_t$ is $\Sigma$-$\Sigma$ measurable for all $t\in\mathbb{R}$. 
\end{proof}

\begin{proof}[Proof of theorem \ref{theorem:time_evolution}]
Let $X=(C_0(\Omega),\sup_\Omega|\cdot|)$. For $x,y\in L^2(P)$, we write $(x,y) := (x,y)_{L^2(P)}$. First, we show that the composition operator $(\mathcal{K}(t), C^1_c(\Omega))$ is a bounded linear operator on $L^2(P)$, where $\mathcal{K}(t)$ is defined by eq.~(\ref{def:koopman}). Let $x,y\in C^1_c(\Omega)$ arbitrary. Since convergence in $X$ implies convergence in $L^2(P)$, we have
\begin{align}
    \frac{d}{dt} ( \mathcal{K}(t)x, \mathcal{K}(t)y) &= (\mathbf{F}\cdot\boldsymbol{\nabla}  \mathcal{K}(t)x, \mathcal{K}(t)y)+(  \mathcal{K}(t)x,\mathbf{F}\cdot\boldsymbol{\nabla} \mathcal{K}(t)y) \, , \label{equ:proof_stat_mech_1}
\end{align}
where we used the fact that $\{\mathcal{K}(t)\}_{t\in\mathbb{R}}$ is a strongly continuous group on $X$, and $(\mathbf{F}\cdot\boldsymbol{\nabla},C^1_c(\Omega))$ is a core of the generator \cite[p.~91-92]{engel}, cf. sec.~\ref{sec:classical_statistical_mechanics}. 
Since $C^1_c(\Omega)$ is a $\{\mathcal{K}(t)\}_{t\in\mathbb{R}}$-invariant subspace, we have $ \mathcal{K}(t)x \in C^1_c(\Omega)$ and $\mathcal{K}(t)y \in C^1_c(\Omega)$. Since $\rho\mathbf{F} \in W^{1,1}(\Omega)$, it follows from proposition \ref{proposition:weak_derivative} that $\rho\mathbf{F}\mathcal{K}(t)x \in W^{1,1}(\Omega)$, and
\begin{align*}
    (  \mathcal{K}(t)x,\mathbf{F}\cdot\boldsymbol{\nabla} \mathcal{K}(t)y) &\overset{(\ref{def:scalar_product_L2P})}{=} \int_\Omega   [\mathcal{K}(t)x] \rho \mathbf{F}\cdot \boldsymbol{\nabla}  \mathcal{K}(t)y \, d\mu  \\
    &= - \int_\Omega \text{div}(  \rho \mathbf{F}\mathcal{K}(t)x)   \mathcal{K}(t)y \, d\mu  \\
    &= -(\mathbf{F}\cdot \boldsymbol{\nabla}  \mathcal{K}(t)x, \mathcal{K}(t)y) - \left(\frac{1}{\rho} \text{div}(\rho\mathbf{F}) \mathcal{K}(t)x,  \mathcal{K}(t)y\right) \, .
\end{align*}
Inserting into eq.~(\ref{equ:proof_stat_mech_1}) yields
\begin{align}
        \frac{d}{dt} ( \mathcal{K}(t)x, \mathcal{K}(t)y) &= - \left(\frac{1}{\rho} \text{div}(\rho\mathbf{F}) \mathcal{K}(t)x,  \mathcal{K}(t)y \right)  \, . \label{equ:proof_stat_mech_2}
\end{align}
According to Hölder's inequality \cite[p.~152, Theorem~7.16]{klenke}, we have
\begin{align*}
    \frac{d}{dt} ( \mathcal{K}(\pm t)x, \mathcal{K}(\pm t)x) &\leq \left\|\frac{1}{\rho} \text{div}(\rho\mathbf{F}) [\mathcal{K}(\pm t)x]^2\right\|_{L^1(P)}\\
    &\leq \left\|\frac{1}{\rho} \text{div}(\rho\mathbf{F})\right\|_{L^\infty(P)}\left\|[\mathcal{K}(\pm t)x]^2\right\|_{L^1(P)}\\
    &= 2\omega_0  ( \mathcal{K}(\pm t)x, \mathcal{K}(\pm t)x) \, ,
\end{align*}
for all $t\geq 0$, where $\omega_0 := \frac{1}{2}\left\|\rho^{-1}\text{div}(\rho\mathbf{F})\right\|_{L^\infty(P)}$, cf. the discussion after theorem \ref{theorem:time_evolution} and \cite[Sec.~III.A]{zhu18}\cite{davies}. 
Due to Grönwall's inequality, we conclude 
\begin{align}
    \| \mathcal{K}(t)x\|^2_{L^2(P)} &\leq  e^{2\omega_0 |t|} \|x\|^2_{L^2(P)} \, , \label{equ:growth_bound_0}
\end{align}
for all $x\in C^1_c(\Omega)$ and $t\in \mathbb{R}$. 

Since $C^1_c(\Omega)$ is dense in $L^2(P)$, the composition operator $(\mathcal{K}(t), C^1_c(\Omega))$ is a densely defined, bounded linear operator on $L^2(P)$. Hence, we may define the time evolution operator $\mathcal{U}(t)$ on $L^2(P)$ by the unique continuous extension of the composition operator \cite[p.~52, Satz (Theorem)~II.1.5]{werner}, such that ${\mathcal U}(t)$ is a linear bounded operator from $L^2(P)$ to itself, 
\begin{align*}
    \mathcal{U}(t)x &= \mathcal{K}(t)x \, , \quad x\in C^1_c(\Omega) \, .
\end{align*}
Since $C^1_c(\Omega)$ is dense in $L^2(P)$, we obtain the growth bound
\begin{align*}
    \|\mathcal{U}(t) \| & \overset{(\ref{equ:growth_bound_0})}{\leq} e^{\omega_0 |t|} \, .
\end{align*}
For $C^1_c(\Omega) \ni x_n\to x \in L^2(P)$, we have
\begin{align*}
    \mathcal{U}(t+s)x &= \lim_n\mathcal{K}(t+s)x_n = \lim_n\mathcal{K}(t)\mathcal{K}(s)x_n = \lim_n\mathcal{U}(t)\mathcal{U}(s)x_n = \mathcal{U}(t)\mathcal{U}(s)x \, .
\end{align*}
Since $C^1_c(\Omega)$ is dense in $L^2(P)$, $\{\mathcal{U}(t)\}_{t\in\mathbb{R}}$ is a group. Furthermore, for all $x\in C^1_c(\Omega)$, the orbit map $t\mapsto \mathcal{K}(t)x \in X$ is continuous. Thus, the orbit map $t\mapsto \mathcal{U}(t)x \in L^2(P)$ is continuous for all $x\in C^1_c(\Omega)$. Since $C^1_c(\Omega)$ is dense in $L^2(P)$, it follows from \cite[p.~38, Proposition~5.3c]{engel} that $\{\mathcal{U}(t)\}_{t\in\mathbb{R}}$ is a strongly continuous group. Let us denote its generator by $(\mathcal{L},D(\mathcal{L}))$. Since $C^1_c(\Omega)$ is dense in $L^2(P)$ and $\{\mathcal{U}(t)\}_{t\in\mathbb{R}}$-invariant, $(\mathbf{F}\cdot\boldsymbol{\nabla},C^1_c(\Omega))$ is a core of $\mathcal{L}$ \cite[p.~53, Proposition~1.7]{engel}. Since $\mathcal{L}$ is closed, we conclude
\begin{align*}
    \mathcal{L} = \overline{(\mathbf{F}\cdot\boldsymbol{\nabla},C^1_c(\Omega))}^{L^2(P)} \, .
\end{align*}

Now, let $\text{div}(\rho\mathbf{F}) =0$ $\mu$-almost everywhere. Then, we have for all $x,y \in C^1_c(\Omega)$
\begin{align*}
    \frac{d}{dt} ( \mathcal{K}(t)x, \mathcal{K}(t)y) &\overset{(\ref{equ:proof_stat_mech_2})}{=} 0 \, .     
\end{align*}
This implies for all $x,y \in C^1_c(\Omega)$
\begin{align}
    ( \mathcal{K}(t)x, \mathcal{K}(t)y) &= (x,y) \, . \label{equ:proof_stat_mech_3}
\end{align}
Since $C^1_c(\Omega)$ is dense in $L^2(P)$, $\mathcal{U}(t)$ is a linear isometry on $L^2(P)$ for all $t\in\mathbb{R}$. 

We have to show that $P=P\circ\varphi^{-1}_t$, where $P\circ\varphi^{-1}_t$ denotes the image measure under the map $\varphi_t$ \cite[p.~75-76]{cohn}, and $\varphi:\mathbb{R}\times \Omega\to\Omega$ is the flow introduced in sec~\ref{sec:classical_statistical_mechanics}. By proposition \ref{proposition:flow}, $\varphi_t$ is $\Sigma$-$\Sigma$ measurable. In particular, the composition of a Lebesgue measurable function with $\varphi_t$ is again Lebesgue measurable. 
Let $f\in C_c(\Omega)$ arbitrary. Since $P$ is a finite measure, $f\circ\varphi_t$ is $P$-integrable. According to \cite[p.~(76), Proposition 2.6.8]{cohn}, $f$ is $P\circ\varphi^{-1}_t$-integrable and
\begin{align}
    \int f \circ \varphi_t \, dP &= \int f \, d(P\circ\varphi^{-1}_t) \, . \label{equ:change_of_variables}
\end{align}
Let $g\in C_c(\Omega)$. 
By the Stone-Weierstraß theorem \cite[p.~392, Theorem~D.23]{cohn}, $C^1_c(\Omega)$ is uniformly dense in $C_c(\Omega)$. Thus, there exist sequences $C^1_c(\Omega) \ni f_n \to f$ uniformly and $C^1_c(\Omega) \ni g_n \to  g$ uniformly.
Since $P$ and $P\circ\varphi^{-1}_t$ are finite measures, uniform convergence implies convergence in $L^2(P)$ and $L^2(P\circ\varphi^{-1}_t)$. Thus, we have
\begin{align*}
    (f,g) &= \lim_{mn} (f_m,g_n) \\
    &\overset{(\ref{equ:proof_stat_mech_3})}{=}\lim_{mn}(f_m\circ\varphi_t,g_n\circ\varphi_t) \\
    &\overset{(\ref{equ:change_of_variables})}{=} \lim_{mn}(f_m,g_n)_{L^2(P\circ\varphi^{-1}_t)} \\
    &= (f,g)_{L^2(P\circ\varphi^{-1}_t)} \, .
\end{align*}
For $g\in C_c(\Omega)$ with $g\rvert_{\text{supp}(f)}=1$, this implies
\begin{align}
    \int f dP &= \int f d(P\circ\varphi^{-1}_t) \, .
\end{align}
Since $f\in C_c(\Omega)$ is arbitrary and $f\to \int fdP$ is a positive linear functional on $C_c(\Omega)$, it follows from \cite[p.~40, Theorem 2.14]{rudin1987} that there exists a $\sigma$-algebra $\mathcal{A}\supset \mathcal{B}(\Omega)$ and a \textit{unique} positive measure $\nu$ on $\mathcal{A}$ such that $\int f d\nu = \int f dP$ for all $f\in C_c(\Omega)$. In addition $\nu$ is complete on $\mathcal{A}$, thus $\Sigma \subset \mathcal{A}$. In particular, $\nu$ is the only positive measure on $\Sigma$ such that $\int f d\nu = \int f dP$ for all $f\in C_c(\Omega)$. Hence, we conclude $P(\mathcal{M})=P\circ\varphi^{-1}_t(\mathcal{M})$ for all $\mathcal{M}\in\Sigma$, i.e.,
\begin{align}
    P &= P\circ \varphi^{-1}_t  \, . \label{equ:invariant_measure}
\end{align}
This implies for all $x\in L^2(P)$
\begin{align*}
    \|x\circ\varphi_t\|_{L^2(P)} &\overset{(\ref{equ:invariant_measure})}{=} \|x\circ\varphi_t\|_{L^2(P\circ\varphi^{-1}_{-t})} = \|x\circ\varphi_t\circ\varphi_{-t}\|_{L^2(P)} = \|x\|_{L^2(P)} \, ,
\end{align*}
where we used \cite[p.~76, Proposition~2.6.8]{cohn}. Hence, the composition $x\mapsto x\circ\varphi_t$ is a bounded linear operator on $L^2(P)$. Due to the uniqueness of a continuous extension \cite[p.~52, Satz (Theorem)~II.1.5]{werner}, we conclude that $\mathcal{U}(t)x=x\circ\varphi_t$ for all $x\in L^2(P)$ and $t\in\mathbb{R}$. 
\end{proof}

\section{Addendum}\label{sec:addendum}

Theorem \ref{theorem:gle_2} gives a proof of the GLE and 2FDT for the Mori projection.
We note that the same result is obtained by a variation of constants,
which greatly simplifies the semigroup approach presented in section \ref{ssec:semigroup_approach}. We therefore add a short proof of the GLE by means of the variation of constants formula for strongly continuous semigroups. 

\begin{theorem*}
Let $\mathcal{P}$ and $\mathcal{Q}=1-\mathcal{P}$ be orthogonal projections on a complex Hilbert space $H$. Let $\mathcal{L}$ be the generator of a strongly continuous semigroup $\mathcal{U}(t)$ on $H$. Let $z\in D(\mathcal{L})$ and let $\mathcal{PL}$ be bounded. Then,
\begin{align*}
\frac{d}{dt}\mathcal{U}(t)z &= \mathcal{U}(t)\mathcal{P}\mathcal{L}z+\mathcal{G}(t)\mathcal{Q}\mathcal{L}z + \int^t_0 \mathcal{U}(t-s)\widehat{\mathcal{PL}}\mathcal{G}(s)\mathcal{Q}\mathcal{L}z \, ds \, , 
\end{align*}
where $\mathcal{G}(t)$ is the strongly continuous semigroup generated by $\mathcal{QL}$ and $\widehat{\mathcal{PL}}:H\to H$ is the continuous extension of $\mathcal{PL}$. 
\end{theorem*}
\begin{proof}
    Since $\mathcal{PL}$ is densely defined \cite[p.~51, Theorem~1.4]{engel} and bounded, its unique continuous extension $\widehat{\mathcal{PL}}:H\to H$ is well-defined. By the bounded perturbation theorem \cite[p.~158, Theorem~1.3]{engel}, the operator $\mathcal{QL}=\mathcal{L}-\widehat{\mathcal{PL}}$ generates a strongly continuous semigroup $\mathcal{G}(t)$. In addition, the variation of constants formula holds for all $x\in H$ \cite[p.~161, Corollary~1.7]{engel}:
    \begin{align*}
        \mathcal{G}(t)x &= \mathcal{U}(t)x-\int^t_0 \mathcal{U}(t-s)\widehat{\mathcal{PL}}\mathcal{G}(s)x ds \, .
    \end{align*}
    For $x= \mathcal{QL}z$, we obtain the generalized Langevin equation
    \begin{align*}
        \frac{d}{dt}\mathcal{U}(t)z &= \mathcal{U}(t)\mathcal{PL}z+\mathcal{U}(t)\mathcal{QL}z \\
        &= \mathcal{U}(t)\mathcal{P}\mathcal{L}z+\mathcal{G}(t)\mathcal{Q}\mathcal{L}z + \int^t_0 \mathcal{U}(t-s)\widehat{\mathcal{PL}}\mathcal{G}(s)\mathcal{Q}\mathcal{L}z \, ds \, .
    \end{align*}
\end{proof}

The Mori projection is defined by $\mathcal{P}x:= (x,z)(z,z)^{-1}z$, where $(.,.)$ denotes the scalar product with complex conjugation in its second argument. For $0\neq z \in D(\mathcal{L}^\dagger)$, the operator $\mathcal{PL}$ is bounded with continuous extension
\begin{align*}
    \widehat{\mathcal{PL}}x &= (x,\mathcal{L}^\dagger z)(z,z)^{-1} z \, .
\end{align*}
For $0\neq z \in D(\mathcal{L}^\dagger)\cap D(\mathcal{L})$, we obtain the GLE for the Mori projection:
\begin{align*}
    \frac{d}{dt}\mathcal{U}(t)z &= \mathcal{U}(t)\mathcal{P}\mathcal{L}z + \mathcal{G}(t)\mathcal{Q}\mathcal{L}z+\int^t_0   K(t-s)\mathcal{U}(s)z \, ds \, , \\
     K(t)&:= (\mathcal{G}(t)\mathcal{Q}\mathcal{L}z,\mathcal{L}^\dagger z)(z,z)^{-1} \, .
\end{align*}

We have to show that this result is equivalent to Theorem 2.5. Since $\mathcal{G}(t)$ is a strongly continuous semigroup with generator $\mathcal{QL}$, its orbit maps are the unique mild solutions of the associated abstract Cauchy problem \cite[p.~146, Proposition 6.4]{engel}. In other words, the orbit map $t\to \mathcal{G}(t)x$ is the unique continuous function that satisfies the equation 
\begin{align*}
    \mathcal{G}(t)x-x &= \mathcal{QL}\int^t_0 \mathcal{G}(s)x \, ds \, . 
\end{align*}
This implies that $\mathcal{G}(t)x \in \mathcal{Q}H$ for all $x\in\mathcal{Q}H$ and $t\geq 0$. Hence, the memory kernel takes the form $K(t)= (\mathcal{G}(t)\mathcal{Q}\mathcal{L}z,\mathcal{Q}\mathcal{L}^\dagger z)(z,z)^{-1}$, in agreement with Theorem \ref{theorem:gle_2}. 

Finally, we note that the orbit maps $t\to\mathcal{G}(t)x$ coincide with the corresponding orbit maps of Theorem \ref{theorem:gle_2} for all $x\in\mathcal{Q}H$. In Theorem \ref{theorem:gle_2}, the orbit maps of the orthogonal dynamics are given by $t\to \tilde{\mathcal{G}}(t)x$, where $\tilde{\mathcal{G}}(t)$ is the strongly continuous semigroup generated by $\overline{\mathcal{QL}} \mathcal{Q}$. The operator $\mathcal{QL}$ with domain $D(\mathcal{L})$ is closed, since it is a generator \cite[p.~51, Theorem~1.4]{engel}. Hence, the closure is given by $\overline{\mathcal{QL}}=\mathcal{QL}$. The subspace $\mathcal{Q}H$ is a closed invariant subspace for both semigroups. Therefore, both semigroups possess a subspace semigroup on $\mathcal{Q}H$ \cite[p.~43]{engel}. Both of their generators are given by the part of $\mathcal{QL}$ in $\mathcal{Q}H$ \cite[p.~60-61]{engel}. This implies $\mathcal{G}(t)x=\tilde{\mathcal{G}}(t)x$ for all $x\in\mathcal{Q}H$. 

\bibliographystyle{plain}
\bibliography{MAIN}

@article{fukui1971,
doi = {10.1088/0305-4470/4/4/011},
url = {https://dx.doi.org/10.1088/0305-4470/4/4/011},
year = {1971},
month = {jul},
publisher = {},
volume = {4},
number = {4},
pages = {477},
author = {Y Fukui and T Morita},
title = {Derivation of the stationary generalized Langevin equation},
journal = {Journal of Physics A: General Physics},
abstract = {The stationary generalized Langevin equation is derived from the Liouville equation. The 'random force' involved in the obtained equation satisfies the equation of motion for a dynamical variable. The fluctuation-dissipation theorems are derived and written for the present equation. The relationship with Mori's generalized Langevin equation is given.}
}

@book{rudin1987,
  author = {Rudin, Walter},
  isbn = {0070542341},
  publisher = {McGraw-Hill},
  title = {Real and Complex Analysis},
  year = 1987
}

@book{gruene,
  title={Gewöhnliche Differentialgleichungen},
  author={Grüne, L. and Junge, O.},
  doi={https://doi.org/10.1007/978-3-658-10241-8},
  url={https://doi.org/10.1007/978-3-658-10241-8},
  year={2016},
  publisher={Springer Spektrum Wiesbaden}
}

@book{poeschel2,
author="P{\"o}schel, J{\"u}rgen",
title="Etwas mehr Analysis: Eine Einf{\"u}hrung in die mehrdimensionale Analysis",
year="2014",
publisher="Springer Fachmedien Wiesbaden",
address="Wiesbaden",
isbn="978-3-658-05860-9",
doi="10.1007/978-3-658-05860-9",
url="https://doi.org/10.1007/978-3-658-05860-9"
}

@book{rudin1964,
  title={Principles of Mathematical Analysis},
  author={Rudin, W.},
  series={International series in pure and applied mathematics},
  year={1964},
  publisher={McGraw-Hill}
}

@book{folland,
  author = {Folland, Gerald B.},
  publisher = {John Whiley and Sons},
  title = {Real Analysis - Modern Techniques and Their Applications - Second Edition},
  year = 1999
}

@book{Brezis2011,
author="Brezis, Haim",
title="Functional Analysis, Sobolev Spaces and Partial Differential Equations",
year="2011",
publisher="Springer New York",
address="New York, NY",
isbn="978-0-387-70914-7",
doi="10.1007/978-0-387-70914-7",
url="https://doi.org/10.1007/978-0-387-70914-7"
}

@book{cohn,
  title={Measure Theory},
  author={Cohn, D. L.},
  doi={https://doi.org/10.1007/978-1-4614-6956-8},
  url={https://link.springer.com/book/10.1007/978-1-4614-6956-8},
  year={2013},
  publisher={Birkhäuser New York, NY}
}

@article{arlinskii,
title = {Everything is possible for the domain intersection },
journal = {Advances in Mathematics},
volume = {374},
pages = {107383},
year = {2020},
issn = {0001-8708},
doi = {https://doi.org/10.1016/j.aim.2020.107383},
url = {https://www.sciencedirect.com/science/article/pii/S0001870820304114},
author = {Yury Arlinskiĭ and Christiane Tretter},
keywords = {Accretive operator, Sectorial operator, Numerical range, Domain intersection}
}

@book{vanNeerven1992,
author="van Neerven, Jan",
title="The Adjoint of a Semigroup of Linear Operators",
year="1992",
publisher="Springer Berlin Heidelberg",
address="Berlin, Heidelberg",
isbn="978-3-540-47497-5",
doi="10.1007/BFb0085008",
url="https://doi.org/10.1007/BFb0085008"
}

@book{Aliprantis2006,
author="Aliprantis, Charalambos D. and Border, Kim C.",
title="Infinite Dimensional Analysis: A Hitchhiker's Guide",
year="2006",
publisher="Springer Berlin Heidelberg",
address="Berlin, Heidelberg",
isbn="978-3-540-29587-7",
doi="10.1007/3-540-29587-9",
url="https://doi.org/10.1007/3-540-29587-9"
}

@book{reed1980,
  title={I: Functional Analysis},
  author={Reed, M. and Simon, B.},
  series={Methods of Modern Mathematical Physics},
  year={1980},
  publisher={Academic Press},
  edition={Revised and Enlarged Edition}
}

@book{rudin1974,
  title={Functional Analysis},
  author={Rudin, W.},
  edition={TMH EDITION},
  year={1974},
  publisher={McGraw-Hill}
}

@article{ball,
 ISSN = {00029939, 10886826},
 URL = {http://www.jstor.org/stable/2041821},
 author = {J. M. Ball},
 journal = {Proceedings of the American Mathematical Society},
 number = {2},
 pages = {370--373},
 publisher = {American Mathematical Society},
 title = {Shorter Notes: Strongly Continuous Semigroups, Weak Solutions, and the Variation of Constants Formula},
 urldate = {2024-09-04},
 volume = {63},
 year = {1977}
}

@book{burton,
author = {T. A. Burton},
title = {Volterra Integral and Differential Equations},
series = {Mathematics in Science and Engineering},
publisher = {Elsevier},
volume = {202},
year = {2005},
issn = {0076-5392}
}

@book{gallone,
  title={Self-adjoint extension schemes and modern applications to quantum Hamiltonians},
  author={Gallone, Matteo and Michelangeli, Alessandro and Albeverio, Sergio},
  year={2023},
  publisher={Springer Cham},
  doi={https://doi.org/10.1007/978-3-031-10885-3}
}

@book{hille,
  title={Functional Analysis and Semi-groups},
  author={Hille, E. and Phillips, R.S.},
  series={American Mathematical Society: Colloquium publications},
  year={1957},
  publisher={American Mathematical Society}
}

@book{evans,
  author = {Evans, Lawrence C.},
  publisher = {American Mathematical Society},
  title = {Partial differential equations},
  year = 1997
}

@book{hall2013,
author="Hall, Brian C.",
title="Quantum Theory for Mathematicians",
year="2013",
publisher="Springer New York",
address="New York, NY",
isbn="978-1-4614-7116-5",
doi="10.1007/978-1-4614-7116-5",
url="https://doi.org/10.1007/978-1-4614-7116-5"
}

@article{davies,
 ISSN = {03794024, 18417744},
 URL = {http://www.jstor.org/stable/24715654},
 author = {E. B. Davies},
 journal = {Journal of Operator Theory},
 number = {2},
 pages = {225--249},
 publisher = {Theta Foundation},
 title = {SEMIGROUP GROWTH BOUNDS},
 urldate = {2024-08-27},
 volume = {53},
 year = {2005}
}

@article{zhu18,
    author = {Zhu, Yuanran and Dominy, Jason M. and Venturi, Daniele},
    title = "{On the estimation of the Mori-Zwanzig memory integral}",
    journal = {Journal of Mathematical Physics},
    volume = {59},
    number = {10},
    pages = {103501},
    year = {2018},
    month = {09},
    issn = {0022-2488},
    doi = {10.1063/1.5003467},
    url = {https://doi.org/10.1063/1.5003467}
}

@article{koopman31,
    author = {B. O. Koopman },
    title = {Hamiltonian Systems and Transformation in Hilbert Space},
    journal = {Proceedings of the National Academy of Sciences},
    volume = {17},
    number = {5},
    pages = {315-318},
    year = {1931},
    doi = {10.1073/pnas.17.5.315},
    URL = {https://www.pnas.org/doi/abs/10.1073/pnas.17.5.315},
    eprint = {https://www.pnas.org/doi/pdf/10.1073/pnas.17.5.315}
}

@book{klenke,
  title={Probability Theory},
  author={Klenke, A.},
  doi={https://doi.org/10.1007/978-3-030-56402-5},
  url={https://doi.org/10.1007/978-3-030-56402-5},
  year={2020},
  publisher={Springer Cham}
}

@book{werner,
author="Werner, Dirk",
title="Funktionalanalysis",
year="2018",
publisher="Springer Berlin Heidelberg",
address="Berlin, Heidelberg",
isbn="978-3-662-55407-4",
doi="10.1007/978-3-662-55407-4",
url="https://doi.org/10.1007/978-3-662-55407-4"
}

@Article{givon2005,
author={Givon, Dror
and Kupferman, Raz
and Hald, Ole H.},
title={Existence proof for orthogonal dynamics and the Mori-Zwanzig formalism},
journal={Israel Journal of Mathematics},
year={2005},
month={Dec},
day={01},
volume={145},
number={1},
pages={221-241},
issn={1565-8511},
doi={10.1007/BF02786691},
url={https://doi.org/10.1007/BF02786691}
}

@book{pazy,
  title={Semigroups of Linear Operators and Applications to Partial Differential Equations},
  author={Pazy, A.},
  series={Applied Mathematical Sciences},
  year={1983},
  publisher={Springer New York},
  doi={https://doi.org/10.1007/978-1-4612-5561-1}
}

@book{engel,
  title={One-Parameter Semigroups for Linear Evolution Equations},
  author={Engel, Klaus-Jochen and Nagel, Rainer},
  series={Graduate Texts in Mathematics},
  year={2000},
  publisher={Springer New York},
  doi={https://doi.org/10.1007/b97696},
  note = {eBook 06 April 2006},
  eBook-ISBN= {978-0-387-22642-2}
}

@article{neumann1932,
 ISSN = {0003486X, 19398980},
 URL = {http://www.jstor.org/stable/1968535},
 author = {J. v. Neumann},
 journal = {Annals of Mathematics},
 number = {3},
 pages = {567--573},
 publisher = {[Annals of Mathematics, Trustees of Princeton University on Behalf of the Annals of Mathematics, Mathematics Department, Princeton University]},
 title = {{Über Einen Satz Von Herrn M. H. Stone}},
 urldate = {2024-09-11},
 volume = {33},
 year = {1932}
}

@article{stone1932,
 ISSN = {0003486X, 19398980},
 URL = {http://www.jstor.org/stable/1968538},
 author = {M. H. Stone},
 journal = {Annals of Mathematics},
 number = {3},
 pages = {643--648},
 publisher = {[Annals of Mathematics, Trustees of Princeton University on Behalf of the Annals of Mathematics, Mathematics Department, Princeton University]},
 title = {On One-Parameter Unitary Groups in Hilbert Space},
 urldate = {2024-09-11},
 volume = {33},
 year = {1932}
}

@book{grabert2006projection,
  title={Projection operator techniques in nonequilibrium statistical mechanics},
  author={Grabert, Hermann},
  volume={95},
  year={2006},
  publisher={Springer}
}

@book{evans2007statistical,
  title={Statistical mechanics of nonequilbrium liquids},
  author={J Evans, Denis and P Morriss, Gary},
  year={2007},
  publisher={ANU Press}
}

@article{nakajima1958,
    author = {Nakajima, Sadao},
    title = {On Quantum Theory of Transport Phenomena: Steady Diffusion},
    journal = {Progress of Theoretical Physics},
    volume = {20},
    number = {6},
    pages = {948-959},
    year = {1958},
    month = {12},
    issn = {0033-068X},
    doi = {10.1143/PTP.20.948},
    url = {https://doi.org/10.1143/PTP.20.948},
    eprint = {https://academic.oup.com/ptp/article-pdf/20/6/948/5440766/20-6-948.pdf},
}

@article{zwanzig1961,
    author = {Zwanzig, Robert},
    title = {Memory Effects in Irreversible Thermodynamics},
    journal = {Phys. Rev.},
    volume = {124},
    issue = {4},
    pages = {983--992},
    numpages = {0},
    year = {1961},
    month = {Nov},
    publisher = {American Physical Society},
    doi = {10.1103/PhysRev.124.983},
    url = {https://link.aps.org/doi/10.1103/PhysRev.124.983}
}

@article{mori1965transport,
  title={Transport, collective motion, and Brownian motion},
  author={Mori, Hazime},
  journal={Progress of theoretical physics},
  volume={33},
  number={3},
  pages={423--455},
  year={1965},
  publisher={Oxford University Press}
}

@article{kubo66,
  title={The fluctuation-dissipation theorem},
  author={Kubo, Ryogo},
  journal={Reports on progress in physics},
  volume={29},
  number={1},
  pages={255},
  year={1966},
  publisher={IOP Publishing}
}

@article{zwanzig72,
  title = {Memory Effects in Irreversible Thermodynamics: Corrected Derivation of Transport Equations},
  author = {Zwanzig, Robert and Nordholm, K. S. J. and Mitchell, W. C.},
  journal = {Phys. Rev. A},
  volume = {5},
  issue = {6},
  pages = {2680--2682},
  numpages = {0},
  year = {1972},
  month = {Jun},
  publisher = {American Physical Society},
  doi = {10.1103/PhysRevA.5.2680},
  url = {https://link.aps.org/doi/10.1103/PhysRevA.5.2680}
}

@article{zwanzig1973,
    author = {Zwanzig, Robert},
    title={Nonlinear generalized Langevin equations},
    journal={Journal of Statistical Physics},
    year={1973},
    month={Nov},
    day={01},
    volume={9},
    number={3},
    pages={215-220},
    issn={1572-9613},
    doi={10.1007/BF01008729},
    url={https://doi.org/10.1007/BF01008729}
}

@article{kawasaki73,
    doi = {10.1088/0305-4470/6/9/004},
    url = {https://dx.doi.org/10.1088/0305-4470/6/9/004},
    year = {1973},
    month = {sep},
    publisher = {},
    volume = {6},
    number = {9},
    pages = {1289},
    author = {K Kawasaki},
    title = {Simple derivations of generalized linear and nonlinear Langevin equations},
    journal = {Journal of Physics A: Mathematical, Nuclear and General}
}

@book{zwanzig2001nonequilibrium,
  title={Nonequilibrium Statistical Mechanics},
  author= {Zwanzig, Robert},
  year={2001},
  publisher={Oxford University Press}
}

@article{hijon2010mori,
  title={Mori--Zwanzig formalism as a practical computational tool},
  author={Hij{\'o}n, Carmen and Espa{\~n}ol, Pep and Vanden-Eijnden, Eric and Delgado-Buscalioni, Rafael},
  journal={Faraday discussions},
  volume={144},
  pages={301--322},
  year={2010},
  publisher={Royal Society of Chemistry}
}

@book{snook2006langevin,
  title={The Langevin and generalised Langevin approach to the dynamics of atomic, polymeric and colloidal systems},
  author={Snook, Ian},
  year={2006},
  publisher={Elsevier}
}

@article{schilling2022coarse,
  title={Coarse-grained modelling out of equilibrium},
  author={Schilling, Tanja},
  journal={Physics Reports},
  volume={972},
  pages={1--45},
  year={2022},
  publisher={Elsevier}
}

@article{chorin2000optimal,
  title={Optimal prediction and the Mori--Zwanzig representation of irreversible processes},
  author={Chorin, Alexandre J and Hald, Ole H and Kupferman, Raz},
  journal={Proceedings of the National Academy of Sciences},
  volume={97},
  number={7},
  pages={2968--2973},
  year={2000},
  publisher={The National Academy of Sciences}
}

@article{vrugt2020projection,
  title={Projection operators in statistical mechanics: a pedagogical approach},
  author={te Vrugt, Michael and Wittkowski, Raphael},
  journal={European Journal of Physics},
  volume={41},
  number={4},
  pages={045101},
  year={2020},
  publisher={IOP Publishing}
}

@Article{grabert77,
author={Grabert, H.},
title={Microdynamics and equations of motion for macrovariables},
journal={Zeitschrift f{\"u}r Physik B Condensed Matter},
year={1977},
month={Mar},
day={01},
volume={27},
number={1},
pages={95-99},
issn={1431-584X},
doi={10.1007/BF01315510},
url={https://doi.org/10.1007/BF01315510}
}

@Article{grabert78,
author={Grabert, Hermann},
title={Nonlinear transport and dynamics of fluctuations},
journal={Journal of Statistical Physics},
year={1978},
month={Nov},
day={01},
volume={19},
number={5},
pages={479-497},
issn={1572-9613},
doi={10.1007/BF01011694},
url={https://doi.org/10.1007/BF01011694}
}

@article{evans84,
title = {Equilibrium time correlation functions under gaussian isothermal dynamics},
journal = {Chemical Physics},
volume = {87},
number = {3},
pages = {451-454},
year = {1984},
issn = {0301-0104},
doi = {https://doi.org/10.1016/0301-0104(84)85125-3},
url = {https://www.sciencedirect.com/science/article/pii/0301010484851253},
author = {Denis J. Evans and G.P. Morriss}
}

@article{holian85,
    author = {Holian, B. L. and Evans, D. J.},
    title = {Classical response theory in the Heisenberg picture},
    journal = {The Journal of Chemical Physics},
    volume = {83},
    number = {7},
    pages = {3560-3566},
    year = {1985},
    month = {10},
    issn = {0021-9606},
    doi = {10.1063/1.449161},
    url = {https://doi.org/10.1063/1.449161},
    eprint = {https://pubs.aip.org/aip/jcp/article-pdf/83/7/3560/18954656/3560\_1\_online.pdf},
}

@article{meyer2017,
    author = {Meyer, Hugues and Voigtmann, Thomas and Schilling, Tanja},
    title = "{On the non-stationary generalized Langevin equation}",
    journal = {The Journal of Chemical Physics},
    volume = {147},
    number = {21},
    pages = {214110},
    year = {2017},
    month = {12},
    issn = {0021-9606},
    doi = {10.1063/1.5006980},
    url = {https://doi.org/10.1063/1.5006980},
    eprint = {https://pubs.aip.org/aip/jcp/article-pdf/doi/10.1063/1.5006980/15535580/214110\_1\_online.pdf},
}

@article{zwanzig60,
  title={Ensemble method in the theory of irreversibility},
  author={Zwanzig, Robert},
  journal={The Journal of Chemical Physics},
  volume={33},
  number={5},
  pages={1338--1341},
  year={1960},
  publisher={American Institute of Physics}
}

@article{vrugt2019mori,
  title={Mori-Zwanzig projection operator formalism for far-from-equilibrium systems with time-dependent Hamiltonians},
  author={Te Vrugt, Michael and Wittkowski, Raphael},
  journal={Physical Review E},
  volume={99},
  number={6},
  pages={062118},
  year={2019},
  publisher={APS}
}

@article{mcphie2001,
  title={Generalized Langevin equation for nonequilibrium systems},
  author={McPhie, MG and Daivis, PJ and Snook, Ian K and Ennis, J and Evans, DJ},
  journal={Physica A: Statistical Mechanics and its Applications},
  volume={299},
  number={3-4},
  pages={412--426},
  year={2001},
  publisher={Elsevier}
}

@article{kawai2011,
  title={Derivation of the generalized Langevin equation in nonstationary environments},
  author={Kawai, Shinnosuke and Komatsuzaki, Tamiki},
  journal={The Journal of Chemical Physics},
  volume={134},
  number={11},
  year={2011},
  publisher={AIP Publishing}
}

@article{meyer2019,
  title={On the dynamics of reaction coordinates in classical, time-dependent, many-body processes},
  author={Meyer, Hugues and Voigtmann, Thomas and Schilling, Tanja},
  journal={The Journal of chemical physics},
  volume={150},
  number={17},
  year={2019},
  publisher={AIP Publishing}
}

@article{widder22,
    author = {Widder, Christoph and Koch, Fabian and Schilling, Tanja},
    title = "{Generalized Langevin dynamics simulation with non-stationary memory kernels: How to make noise}",
    journal = {The Journal of Chemical Physics},
    volume = {157},
    number = {19},
    pages = {194107},
    year = {2022},
    month = {11},
    issn = {0021-9606},
    doi = {10.1063/5.0127557},
    url = {https://doi.org/10.1063/5.0127557},
    eprint = {https://pubs.aip.org/aip/jcp/article-pdf/doi/10.1063/5.0127557/18256605/194107\_1\_5.0127557.pdf},
}

@book{weidmann,
  title={Linear operators in Hilbert spaces},
  author={Weidmann, Joachim},
  year={1980},
  publisher={Springer New York, NY}
}

\end{document}